\documentclass[11pt]{article}
\usepackage[utf8]{inputenc}
\usepackage[T1]{fontenc}
\usepackage[margin=1in]{geometry}
\usepackage{microtype}
\usepackage{amsmath,amssymb,amsfonts,amsthm,mathtools,bm}
\usepackage{graphicx,booktabs,tabularx}
\usepackage{xcolor}
\usepackage{url}
\usepackage[numbers,sort&compress]{natbib}
\definecolor{linkblue}{HTML}{1B4F72}
\usepackage[
  unicode=true,
  pdfencoding=auto,
  psdextra,
  hypertexnames=false,
  bookmarks=true,
  bookmarksnumbered=true,
  bookmarksopen=true,
  bookmarksopenlevel=2,
  pdfpagemode=UseOutlines,
  pdfpagelayout=OneColumn,
  linktoc=all,
  breaklinks=true,
  colorlinks=true,
  linkcolor=linkblue,
  citecolor=linkblue,
  urlcolor=linkblue
]{hyperref}
\usepackage{enumitem}
\usepackage{float}
\usepackage{algorithm}
\usepackage{algorithmic}

\newif\iffastfigures
\fastfiguresfalse
\DeclareGraphicsExtensions{.pdf,.png,.jpg,.jpeg}
\iffastfigures
  \graphicspath{{figures_overleaf/}{./figures_overleaf/}{figures_neurips/}{./figures_neurips/}{figures_ordered/}{./figures_ordered/}{figures_supplementary/}{./figures_supplementary/}}
\else
  \graphicspath{{figures_neurips/}{./figures_neurips/}{figures_ordered/}{./figures_ordered/}{figures_supplementary/}{./figures_supplementary/}{figures_overleaf/}{./figures_overleaf/}}
\fi
\newcommand{\figwidth}{\textwidth}

\newcommand{\E}{\mathbb{E}}
\newcommand{\Var}{\mathrm{Var}}
\newcommand{\Cov}{\mathrm{Cov}}
\newcommand{\Corr}{\mathrm{Corr}}
\newcommand{\MI}{\mathrm{MI}}
\newcommand{\tr}{\mathrm{tr}}
\newcommand{\diag}{\mathrm{diag}}
\newcommand{\1}{\mathbf{1}}

\newcommand{\xe}{X_e}
\newcommand{\xI}{X_i}
\newcommand{\we}{\bm w_e}
\newcommand{\wi}{\bm w_i}

\newcommand{\dprime}{\ensuremath{{d'}}}
\newcommand{\add}{\mathrm{add}}

\newcommand{\sh}{\mathrm{sh}}

\newenvironment{ack}{\section*{Acknowledgments}}{}

\newtheorem{theorem}{Theorem}
\newtheorem{lemma}{Lemma}
\newtheorem{proposition}{Proposition}
\newtheorem{corollary}{Corollary}

\newif\ifneuripsanonymous
\neuripsanonymousfalse

\newif\ifneuripscompact
\neuripscompactfalse

\newif\ifneuripscameraready
\neuripscamerareadyfalse

\newif\ifpapersynthesis
\papersynthesistrue

\renewcommand{\NoHyper}{}
\renewcommand{\endNoHyper}{}

\hypersetup{
  pdftitle={When Branch-Local Shunting Helps: A Gain-Load-Alignment Principle for Dendritic E/I Networks},
  pdfauthor={Houman Safaai; Maceo Richards; Naeem Khoshnevis; Bernardo L. Sabatini},
  pdfsubject={Dendritic computation, shunting inhibition, morphology, and population coding},
  pdfkeywords={dendrites, shunting inhibition, divisive normalization, population coding, morphology, gain control, neural networks},
  pdfdisplaydoctitle=true
}

\title{When Branch-Local Shunting Helps: A Gain-Load-Alignment Principle for
Dendritic E/I Networks}

\author{%
  Houman Safaai\,$^{1,\ast}$ \enskip Maceo Richards\,$^{1}$ \enskip Naeem Khoshnevis\,$^{1}$ \enskip Bernardo L. Sabatini\,$^{1,2,\ast}$ \\[0.5em]
  $^{1}$Kempner Institute for the Study of Natural and Artificial Intelligence \\
  at Harvard University \\
  $^{2}$Department of Neurobiology, Howard Hughes Medical Institute, \\
  Harvard Medical School, Boston, MA 02115, USA \\[0.3em]
  {\normalfont\footnotesize $^{\ast}$Correspondence:
  \texttt{houman\_safaai@harvard.edu},
  \texttt{bernardo\_sabatini@hms.harvard.edu}}
}
\date{}

\begin{document}


\maketitle

\begin{abstract}
Biological neurons combine excitatory and inhibitory synaptic activity on
branched dendrites through shunting-based interactions such that inhibition effectively
divisively attenuates
excitation. Whether this improves population readout over additive excitatory and
inhibitory (E/I) interactions with the
same nonnegative inputs remains unclear. We introduce DendriNet, a trainable
framework that varies the synaptic integration rule, dendritic morphology, synapse allocation,
divisor locality, and dendritic nonlinearities. For population codes corrupted by
multiplicative gain, linearizing any realizable shunting readout around its
operating point yields a decision direction within the positive additive E/I
cone; matching the additive optimum requires a positive self-consistent
shunting realization. Every scalar shunting threshold also has an exact affine
additive realization.
Beyond this local limit, performance follows a gain--load--alignment
principle. Branch-local shunting helps when a reliable divisor acts on the
nuisance-affected signal pathway and suppresses signal-aligned gain more than
it attenuates signal or introduces denominator variability. Passive additive
trees flatten to linear readouts, whereas shunting trees compose local
divisors. In a designed passive hierarchy, deep shunting exceeds tangent and
fitted-linear controls. Flexible nonlinear predictors overtake it with enough
labels, while misspecified divisor support weakens its small-sample advantage.
Support shuffling reverses the linear comparisons; sensor corruption
reverses only the fitted-linear comparison. Contact- and parameter-matched activated
DendriNet training shows no consistent benefit from greater depth. The same
support and reliability interaction appears in a frozen-feature normalization
test. Modeling downstream readout of primary visual cortex (V1) recordings from three
mice shows that the
shunting-over-additive decoder gap is
largest for narrow readouts, reverses under strong private noise at the widest
readout, and varies across running states. Morphology can therefore determine
where reliable nuisance estimates meet task-relevant signals, but neither depth
nor shunting is intrinsically advantageous.
\end{abstract}

\setlength{\textfloatsep}{5pt plus 2pt minus 2pt}
\setlength{\floatsep}{5pt plus 2pt minus 2pt}
\setlength{\abovecaptionskip}{3pt}
\setlength{\belowcaptionskip}{0pt}
\renewcommand{\bottomfraction}{0.65}

\NoHyper
\section{Introduction}
\label{sec:intro}

Most artificial neurons compute a weighted sum of their inputs with signed coefficients,
followed by a pointwise nonlinearity. Biological neurons instead integrate
nonnegative excitatory and inhibitory conductances across branched dendrites,
allowing inhibition to divide excitation locally before signals reach the soma.
This difference suggests a possible computational advantage but does not
guarantee one: division can suppress shared variability but can also
attenuate signal and import noise through its denominator. We therefore ask a
matched question: when does branch-local shunting outperform an additive E/I
readout given the same nonnegative inputs, synaptic resources, and morphology?

Divisive normalization is a canonical strategy for gain control and invariance
~\citep{Heeger1992,ReynoldsHeeger2009,CarandiniHeeger2012}, and conductance-based
inhibition provides a plausible dendritic implementation
~\citep{ChanceAbbottReyes2002,MitchellSilver2003,LondonHausser2005,Silver2010}.
At the same time, whether dendritic shunting produces a truly \emph{divisive}
somatic effect rather than a largely subtractive one has long been debated, and
its firing-rate gain effect can depend on synaptic noise and dendritic saturation
~\citep{HoltKoch1997,PrescottDeKoninck2003}. We therefore do not assume a
shunting advantage. Instead, we examine the conditions in which branch-local division helps under
contact- and parameter-matched optimization.

The question is narrower than the established result that dendritic
nonlinearities and morphology can expand single-neuron or network computation
~\citep{Poirazi2003,PolskyEtAl2004,Beniaguev2021,Chavlis2025,
LyoSavin2024,AgrawalBuice2025}. Here, morphology controls local access and determines whether a
nuisance estimate can interact with the affected signal before somatic pooling.
A dendritic tree can thus act as a within-neuron hierarchy that transforms
stimulus-by-state activity before it reaches the downstream decoder. The placement of inhibitory synapses
and the operating point determine which signals and divisors interact
~\citep{GidonSegev2012,JarvisEtAl2018,ZhangEtAl2024DynamicGain,
StokesTeeterIsaacson2014,BlossEtAl2016,YangMurrayWang2016}.

We use noisy population-code readout as a controlled evaluation. Shared
contrast, gain, and brain-state fluctuations are widely observed
~\citep{NiellStryker2010,McGinleyEtAl2015,GorisMovshonSimoncelli2014,
RabinowitzEtAl2015}; when they align with the stimulus direction in Fisher
geometry, they can limit a constrained downstream decoder
~\citep{Averbeck2006,CohenKohn2011,MorenoBote2014,Kafashan2021,
DeneveMachens2016}. A branch-local divisor may suppress multiplicative gain but
can also erase signal or import variability. We call
non-target pool activity \emph{denominator load}: its mean $L_0$ moves the
operating point, while residual fluctuation $\delta L$ contributes $\Lambda$.
Analogues include background conductance and mask, surround, or distractor drive
~\citep{ChanceAbbottReyes2002,FellousEtAl2003,
CarandiniHeegerMovshon1997,BusseWadeCarandini2009,CoenCagliSolomon2019}.

Prior work shows that division can decorrelate activity or improve information
transfer, whereas a stochastic pool can add covariance and reverse that benefit
~\citep{SchwartzSimoncelli2001,BalleLaparraSimoncelli2016,Tripp2012,
WeissEtAl2023}. Learned pools and conductance-based compartments can implement
adaptive or Bayes-optimal weighting~\citep{BurgEtAl2021,NiBurge2024,
JordanEtAl2024}, whereas flexible additive networks can learn related weights
~\citep{FarahmandiEtAl2025}. These strands do not identify which input
statistics let local division outperform a matched additive readout. The
gain--load--alignment principle addresses that gap by connecting Fisher
geometry, anatomical branch support, and signal fidelity.

Our contributions are threefold. First, we give exact scalar and first-order
representational boundaries for additive and shunting E/I readouts. Second, we
derive a finite-amplitude gain--load--alignment criterion and test its predicted
support, reliability, and morphology reversals. Third, we provide a trainable
research framework that separates E/I positivity, synaptic integration rules, dendritic morphology,
synaptic allocation, divisor support, dendritic nonlinearities, and population access
(Fig.~\ref{fig:framework-main}A--B). We use this framework to model and analyze designed tasks,
conventional learned features, and a bounded downstream V1 decoding problem.

\begin{figure}[t]
\centering
\includegraphics[width=\figwidth]{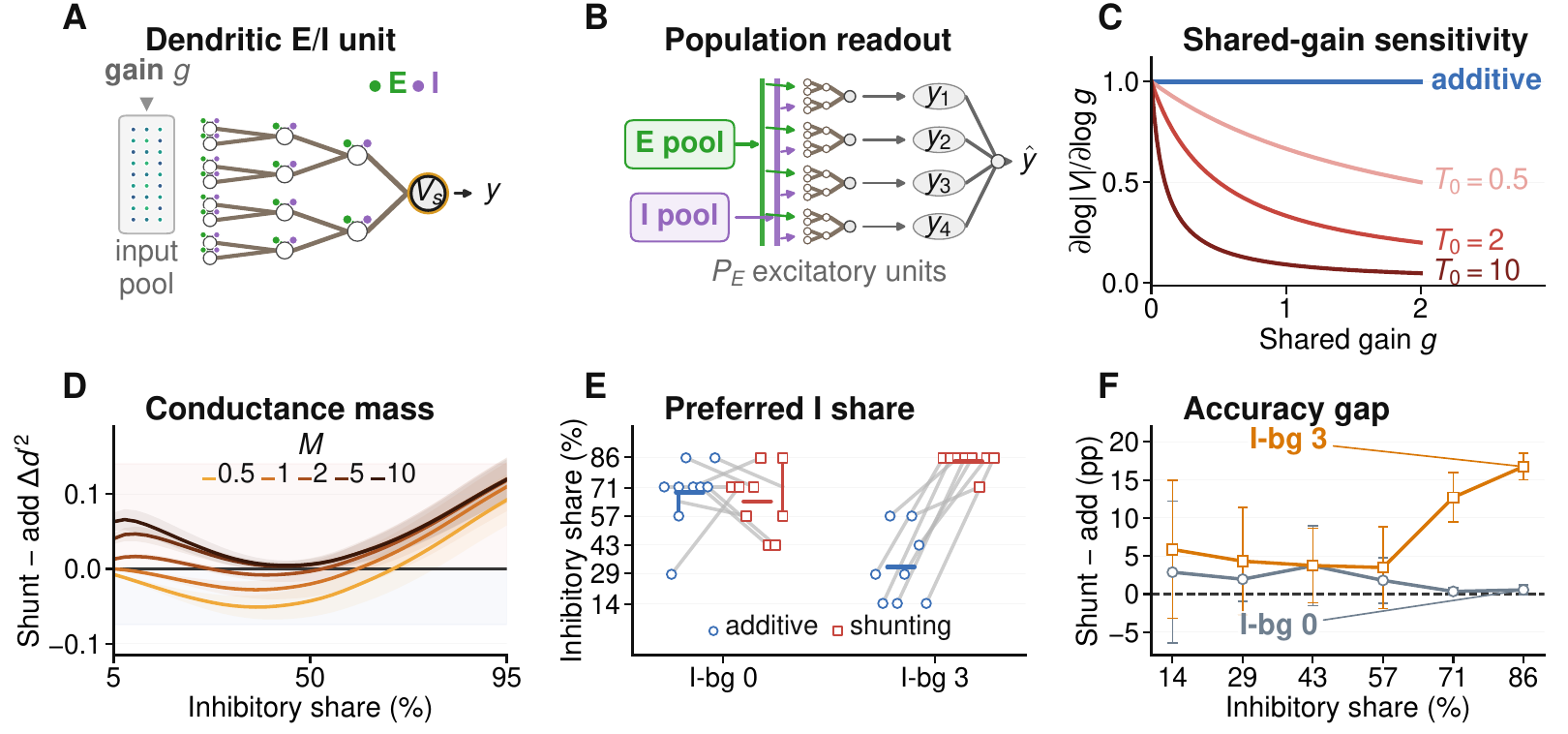}
\caption{\textbf{DendriNet and resource controls.}
\textbf{(A)}~One unit combines nonnegative E/I drive additively or through a
local shunting divisor. \textbf{(B)}~Branched units form a population decoder.
\textbf{(C)}~Passive additive log-gain elasticity is 1; shunting elasticity is
$[1+gT_0]^{-1}$. \textbf{(D)}~A continuous mass constraint
$M=\lVert w\rVert_1$ maps the local mechanism gap across total mass and
inhibitory share (360 generated libraries). \textbf{(E--F)}~A separate
fixed-contact budget gives validation-selected inhibitory shares (E) and
held-out paired gaps (F); it is computational, not a biological E/I ratio.
Passive voltage; E--F use paired seeds ($n=8$, 95\% $t$ intervals).}
\label{fig:framework-main}
\end{figure}

Each comparison names what is held fixed. Mechanism-isolating controls share
input samples, nuisance draws, and morphology while matching a stated quantity
such as conductance mass, contact count, topology and parameters, or operating
point. Complete trained-model comparisons instead give each mechanism the same
validation-only search budget---the same grid and tuning seeds---and evaluate
the frozen choices on an untouched test set. Detailed optimization controls
are reported in Methods and the Supplement. The comparator ladder runs from fixed $E-I$ through
tangent-additive and optimized positive-E/I readouts to fitted
$E-\rho I-b$ and sign-unconstrained linear decoders (Methods).

The central claim is therefore conditional: depth helps only when routing and
nuisance suppression outweigh signal-fidelity and denominator-variability
costs. The analysis of activity in V1 examines whether this computational effect transfers to
observed population activity; it does not identify a cortical circuit.
\endNoHyper
\section{Results}

The argument proceeds from a local limit to branch composition and data. We
first analyze one shunting branch, then test finite-amplitude and morphological
effects in simulations, trained DendriNet models, conventional frozen
features, and mouse V1 recordings.
Throughout, the controlled input model for the upstream population activity
$X$ is
\begin{equation}
  X=g f(\theta)+\varepsilon,
  \label{eq:latent-gain-model}
\end{equation}
for nonnegative $X$, stimulus $\theta$, mean tuning $f(\theta)$, and shared
gain $g$ with $\E[g]=1$ and $\Var(g)=\sigma_g^2$; the residual noise
$\varepsilon$ is independent of $g$. Its stimulus-conditional covariance is
\begin{equation}
  \Sigma(\theta)=\Sigma_{\mathrm{ind}}
  +\sigma_g^2 f(\theta)f(\theta)^\top.
  \label{eq:latent-covariance}
\end{equation}
Low-rank variability is limiting only when it overlaps the signal and remains
inaccessible to the constrained decoder~\citep{WakhlooSlattonChung2026}.
Private noise is represented by $\Sigma_{\rm ind}$, whereas shared gain or state
contributes the low-rank mode
~\citep{GorisMovshonSimoncelli2014,RabinowitzEtAl2015}. Inaccessible differential
noise can impose an information ceiling~\citep{MorenoBote2014,Kafashan2021}.
Normalization-pool fluctuations contribute the residual denominator-variance
term $\Lambda$. Observation noise instead corrupts the measured inputs without
specifying a circuit source. We analyze these sources separately because
division need not affect them in the same way.

\subsection{Dendritic model and local matched-comparison boundary}
\label{subsec:population-problem}
\label{subsec:model}

We first examine whether a single shunting branch can represent a decision direction
unavailable to an additive readout of the same nonnegative inputs.
For excitatory and inhibitory conductances $(E_n,I_n)$ at node $n$ and voltages
$V_b$ of its children $b\in\mathcal D_n$, normalizing conductances by the
implemented leak gives the exact unit-leak passive-node equation
\begin{equation}
  V_n=\frac{E_n+\sum_{b\in\mathcal D_n}g_{n\leftarrow b}V_b}
  {1+E_n+I_n+\sum_{b\in\mathcal D_n}g_{n\leftarrow b}},
  \qquad g_{n\leftarrow b}\ge0.
  \label{eq:node-voltage}
\end{equation}
A terminal node gives $E/(1+E+I)$; the additive comparator is $E-I$ at a leaf
and a nonnegative linear current-mode recursion in a passive tree. Rates,
pathway maps, weights, and couplings are nonnegative, although an additive
hidden voltage may be signed (raw-unit details are in Methods).

The above equation gives the steady-state voltage produced by synaptic conductances and,
in biological terms, reflects passive synaptic integration. An optional
secondary nonlinearity (termed a ``reactivation'') would model an active dendrite
and can further shape the signal before it is transmitted downstream. Theory
and designed controls use raw passive voltages unless stated otherwise;
activated trained panels apply the same shifted-tanh family after each branch
voltage in both mechanisms. Captions and Methods give each panel's setting.

Define the numerator, total conductance, child contribution, and child coupling
by
\begin{equation}
  N_n=E_n+C_n,\quad T_n=E_n+I_n+G_n,\quad
  C_n=\sum_b g_bV_b,\quad G_n=\sum_b g_b,\quad
  V_n^{\sh}=\frac{N_n}{1+T_n}.
  \label{eq:node-NT}
\end{equation}
At any positive anchor $(N_0,T_0)$, with $D_0=1+T_0$ and $V_0=N_0/D_0$,
direct algebra gives
\begin{equation}
 D_0(V^{\sh}-V_0)=\frac{D_0}{1+T}\left[\delta N-V_0\delta T\right].
 \label{eq:node-gated-contrast}
\end{equation}
The bracket is the tangent-matched additive contrast
\begin{equation}
 A_{\rm tan}=(1-V_0)\delta E-V_0\delta I+\delta C-V_0\delta G.
 \label{eq:tangent-matched-node}
\end{equation}
Thus, the finite-amplitude distinction is exactly a sample-dependent total-drive
gate, not a new local contrast.

\begin{proposition}[Every scalar shunting threshold is an additive comparator]
\label{prop:scalar-threshold}
For $V^{\sh}=E/(1+E+I)$ and $0<\tau<1$,
\begin{equation}
 V^{\sh}>\tau\quad\Longleftrightarrow\quad
 E-\rho_\tau I>b_\tau,\qquad
 \rho_\tau=b_\tau=\frac{\tau}{1-\tau}.
 \label{eq:scalar-threshold-equivalence}
\end{equation}
The identity is distribution free. It does not imply identical ranking or
calibration across all thresholds, nor does it cover compositions of several
branch ratios.
\end{proposition}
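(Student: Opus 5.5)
The plan is a direct algebraic equivalence that uses only the positivity of the shunting denominator. Throughout, $E,I\ge0$ because rates, weights and conductances are nonnegative, as stated after Eq.~\eqref{eq:node-voltage}. Hence the denominator $1+E+I\ge1>0$. This means the inequality $E/(1+E+I)>\tau$ can be multiplied through by $1+E+I$ without changing its direction, giving the equivalent polynomial inequality
\begin{equation*}
 E>\tau(1+E+I)\quad\Longleftrightarrow\quad (1-\tau)E-\tau I>\tau .
\end{equation*}
The second step uses $0<\tau<1$, so $1-\tau>0$, and we divide by $1-\tau$ with no sign flip. This gives $E-\frac{\tau}{1-\tau}I>\frac{\tau}{1-\tau}$, which is exactly Eq.~\eqref{eq:scalar-threshold-equivalence} with $\rho_\tau=b_\tau=\tau/(1-\tau)$. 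Each step is reversible, so the chain is a genuine equivalence.

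For the qualifying remarks, first note that the argument is a pointwise identity on the input values $(E,I)$. No distributional assumption enters, which is why the identity is distribution free. Second, $\rho_\tau$ and $b_\tau$ both depend on $\tau$. A single additive score $E-\rho I-b$ with fixed $(\rho,b)$ therefore matches the level sets of $V^{\sh}$ only at one threshold, not the whole ordering. To show that ranking can differ, I would give a short example. Take two input pairs with $E-I$ ordered one way and $E/(1+E+I)$ ordered the other, for instance $(E,I)=(1,0)$ against $(10,8)$. For these, $E-I$ gives $1<2$, while $V^{\sh}$ gives $1/2>10/19$. Third, for compositions of several branches, the soma sees a ratio of sums of ratios. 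Its superlevel sets are in general not half-spaces in the raw $(E,I)$ coordinates, so the proposition says nothing about that case.

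There is no real obstacle here. The only points needing care are (i) keeping the denominator strictly positive, which uses the unit leak and nonnegativity, and (ii) the endpoint cases. At $\tau\ge1$ the set $\{V^{\sh}>\tau\}$ is empty, and at $\tau\le0$ it is almost everything, so the restriction $0<\tau<1$ is needed for a finite, positive $\rho_\tau$.
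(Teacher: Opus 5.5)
Your derivation of the equivalence is correct and is the argument the paper relies on: multiply through by the strictly positive denominator $1+E+I$, then divide by $1-\tau>0$. The paper's Top-$K$ ladder checks exactly this cross-multiplied form, $V>\tau\Leftrightarrow N-\tau T>\tau(1+\epsilon)$, in implemented units. The step that fails is your example for the ranking caveat. Since $10/19\approx0.526>1/2$, $V^{\sh}$ ranks $(10,8)$ above $(1,0)$, which agrees with $E-I$ rather than contradicting it.

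Changing the second point cannot fix this. The reference input $(1,0)$ lies on the level set $V^{\sh}=1/2$, where $\rho_{1/2}=b_{1/2}=1$. The proposition you just proved therefore gives $V^{\sh}(p)\gtrless 1/2\Leftrightarrow E_p-I_p\gtrless 1$ for every input $p$. So, relative to $(1,0)$, raw $E-I$ and the shunt order every point identically.

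A valid example needs two points strictly on the same side of that line. For instance, take $(2,0)$ and $(10,7)$: $E-I$ gives $2<3$, while $V^{\sh}$ gives $2/3>5/9$.

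The caveat is cleanest stated geometrically. The set $\{V^{\sh}=\tau\}$ is the line $(1-\tau)E=\tau(1+I)$. These lines all pass through $(E,I)=(0,-1)$, with slope $\tau/(1-\tau)$ that varies with $\tau$. The level sets of any fixed $E-\rho I$ are instead parallel lines of slope $\rho$. Exactly one of them, $E-\rho I=\rho$ at $\tau=\rho/(1+\rho)$, is a shunting level set. A fixed additive score therefore reproduces one threshold but not the full ordering.

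Your remaining remarks are fine: distribution-freeness as a pointwise identity, non-half-space superlevel sets for compositions, and the role of $0<\tau<1$.
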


For the scalar comparisons below, we quantify discriminability by
\begin{equation}
 \dprime^2(V)=
 \frac{(\E[V\mid1]-\E[V\mid0])^2}
 {\tfrac12\Var(V\mid0)+\tfrac12\Var(V\mid1)}.
 \label{eq:dprime}
\end{equation}
Because this is a moment-based readout metric, the non-Gaussian experiments also
report held-out area under the receiver-operating-characteristic curve (AUC),
train-fitted threshold error, and categorical log loss. We do not
interpret $d'^2$ as mutual information.

\label{subsec:local}
Linearizing a terminal shunt at its induced operating point gives
\begin{equation}
 \delta V^{\sh}\approx
 \frac{1+I_0}{D_0^2}\delta E-
 \frac{E_0}{D_0^2}\delta I,\qquad D_0=1+E_0+I_0.
 \label{eq:shunting-linearization}
\end{equation}
The effective coordinates are a positive diagonal image of the additive E/I
cone. The image must, however, be mapped back to weights that induce the same
positive operating point.
Cone statements use signed input coordinates
$x=(\delta E,-\delta I)$: hence $q=(q_E,q_I)\ge0$ denotes
$q_E^\top\delta E-q_I^\top\delta I$.

\begin{theorem}[Terminal local containment and conditional equivalence]
\label{thm:local-equivalence}
For one terminal pooled E/I compartment, let $\mathcal C_{\sh}^{\rm loc}$ be
the effective Jacobian rays generated by nonnegative shunting weights at
positive self-consistent operating points, and let
$\mathcal C_{\add}=\mathbb R_+^{d_E+d_I}$. Under a valid first-order
linearization, $\mathcal C_{\sh}^{\rm loc}\subseteq\mathcal C_{\add}$. With
nondegenerate covariance on feasible active subspaces,
$\dprime^{2,*}_{\sh,\mathrm{loc}}\le\dprime^{2,*}_{\add}$. Equality holds when
an additive-optimal ray, optimized over both signal orientations, has a positive
self-consistent shunting realization. Equal effective rays generally use
different weights and operating points.
\end{theorem}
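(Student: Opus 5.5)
The proof has three parts: a containment statement obtained by computing the Jacobian, an inequality obtained from optimizing over nested cones, and an equality condition that is essentially definitional.

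\emph{Containment.} Fix nonnegative weights $w_E\in\mathbb R_+^{d_E}$ and $w_I\in\mathbb R_+^{d_I}$, with $E=w_E^\top X_E$ and $I=w_I^\top X_I$, at a positive self-consistent operating point $(E_0,I_0)$. Differentiating $V^{\sh}=E/(1+E+I)$ as in Eq.~\eqref{eq:shunting-linearization} gives the input-space Jacobian
\[
\nabla V^{\sh}=\Bigl(\tfrac{1+I_0}{D_0^2}w_E,\;-\tfrac{E_0}{D_0^2}w_I\Bigr).
\]
In the signed coordinates $x=(\delta E,-\delta I)$ this is the ray
\[
q=\Bigl(\tfrac{1+I_0}{D_0^2}w_E,\;\tfrac{E_0}{D_0^2}w_I\Bigr).
\]
Both scalar prefactors are nonnegative because $E_0,I_0\ge0$ and $D_0>0$. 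So $q\in\mathbb R_+^{d_E+d_I}=\mathcal C_{\add}$, and since $\mathcal C_{\sh}^{\rm loc}$ is generated by such rays, $\mathcal C_{\sh}^{\rm loc}\subseteq\mathcal C_{\add}$. The phrase ``under a valid first-order linearization'' will be made precise as: the local readout is $q^\top x$ up to $o(\|x\|)$, and $d'^2$ is computed on this linear statistic.

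\emph{Inequality.} For a linear statistic $q^\top x$, Eq.~\eqref{eq:dprime} becomes the Rayleigh quotient
\[
\dprime^2(q)=\frac{(q^\top\Delta\mu)^2}{q^\top\bar\Sigma q},
\qquad \bar\Sigma=\tfrac12(\Sigma_0+\Sigma_1).
\]
This quotient is scale invariant, so it depends only on the ray. Nondegeneracy of $\bar\Sigma$ on the feasible active subspaces keeps the denominator positive for every nonzero feasible $q$, so $\dprime^2$ is well defined and continuous on the normalized cone. Its supremum over the closed cone $\mathcal C_{\add}$ is therefore attained. Because the shunting rays form a subset, the supremum over them is no larger, so $\dprime^{2,*}_{\sh,\mathrm{loc}}\le\dprime^{2,*}_{\add}$.

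\emph{Signal orientation.} The numerator is squared, so the label assignment is irrelevant to the value. The positive cone is not symmetric under $q\mapsto-q$, however: the additive optimum may instead be realized by the comparator $I-E$, that is, by a ray in $-\mathcal C_{\add}$ mapped back via relabeling. This is why the statement optimizes over both orientations. I will state explicitly that $\dprime^{2,*}_{\add}$ is the maximum over both orientations, so the inequality and the equality condition are compared on the same footing.

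\emph{Equality.} Suppose an optimal additive ray $q^*$ (in either orientation) has a positive self-consistent shunting realization. That means there exist $w_E,w_I\ge0$ whose induced operating point $(E_0,I_0)=(w_E^\top\bar X_E,\,w_I^\top\bar X_I)$ gives a Jacobian ray proportional to $q^*$. By scale invariance the shunting readout then achieves $\dprime^{2,*}_{\add}$, and combined with the inequality this gives equality.

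To show that equal rays generally use different weights, I invert the map: $w_E\propto q_E^*D_0^2/(1+I_0)$ and $w_I\propto q_I^*D_0^2/E_0$. The ratio of the inhibitory to the excitatory scaling is $(1+I_0)/E_0$, which in general differs from 1. So the weights differ from $q^*$, and the operating point depends on those weights.

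\emph{Main obstacle.} The hardest step is this self-consistency fixed point: $(E_0,I_0)$ depends on $(w_E,w_I)$, which are in turn defined through $(E_0,I_0)$. My first approach is to parametrize $w_E=\alpha q_E^*$ and $w_I=\beta q_I^*$ and solve the two scalar equations
\[
E_0=\alpha\,q_E^{*\top}\bar X_E,\qquad I_0=\beta\,q_I^{*\top}\bar X_I,
\qquad \frac{\beta}{\alpha}=\frac{1+I_0}{E_0}.
\]
With $a=q_E^{*\top}\bar X_E$ and $c=q_I^{*\top}\bar X_I$, these reduce to $\beta c=\beta(1+I_0)c/E_0$, i.e.\ $E_0=1+I_0$ whenever $q_I^*\ne0$; the leftover free scale then fixes the operating point, so a solution exists. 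Here $\alpha=E_0/a$ and $\beta=I_0/c$, and the constraint becomes $\frac{I_0 a}{c E_0}=\frac{1+I_0}{E_0}$, which gives $I_0(a-c)=c$.

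A positive solution therefore exists exactly when $a>c$, or trivially when $q_I^*=0$. This is precisely why equality is stated conditionally rather than always. I will present the positivity condition $a>c$ as one concrete sufficient check on the realization hypothesis, and keep the theorem's hypothesis in the general form.
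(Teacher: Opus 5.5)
Your proposal is correct and follows the paper's proof essentially step for step: containment comes from the positive diagonal map $(a_E,|a_I|)$, the inequality from maximizing the same Rayleigh quotient over a subset of $\mathcal C_{\add}$, equality from a common maximizer, and distinct weights from $a_E\neq|a_I|$ generically; your reduction $I_0(a-c)=c$ is exactly the paper's realizability test $q_e^\top\bar\mu_e>q_i^\top\bar\mu_i$ with $I_0=i/(e-i)$ in Algorithm~\ref{alg:self-consistent}. Two small repairs are needed. First, the intermediate claim that $E_0=1+I_0$ is forced is an algebra slip: the constraints pin $\beta=1/(a-c)$ and hence $I_0$, while $\alpha$, and so $E_0>0$, stays free, and $E_0=1+I_0$ is merely the point of maximal effective scale $r_{\max}$. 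Second, ``both orientations'' refers to the sign of $q^\top\Delta\mu$ for an optimal $q\in\mathcal C_{\add}$, not to rays in $-\mathcal C_{\add}$; such rays add nothing because the squared quotient is even in $q$.
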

The Methods section provides a bound on the curvature remainder. The theorem applies to the local
Jacobian; finite-amplitude $d'^2$ can differ through curvature and branch
composition.

\begin{corollary}[Monotone output activation]
\label{cor:activation-local-tie}
A differentiable scalar activation applied only after the terminal voltage,
with positive slope at the operating point, preserves this first-order result.
\end{corollary}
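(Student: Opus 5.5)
The plan is to compose the activation with the terminal voltage by the chain rule and show that its slope only rescales the shunting Jacobian by a positive scalar. Let $\phi$ be the activation and $Y=\phi(V^{\sh})$ the readout. At a positive self-consistent operating point $(E_0,I_0)$ with voltage $V_0$, differentiability gives
\begin{equation*}
 \delta Y \approx \phi'(V_0)\left[\frac{1+I_0}{D_0^2}\delta E-\frac{E_0}{D_0^2}\delta I\right],
\end{equation*}
using the terminal linearization \eqref{eq:shunting-linearization}. Since $\phi'(V_0)>0$, the effective Jacobian ray of $Y$ in signed coordinates $x=(\delta E,-\delta I)$ is exactly the ray of $V^{\sh}$. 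The local shunting cone $\mathcal C_{\sh}^{\rm loc}$ is therefore unchanged, and the inclusion $\mathcal C_{\sh}^{\rm loc}\subseteq\mathcal C_{\add}$ in Theorem~\ref{thm:local-equivalence} carries over directly.

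Next, I would note that $\dprime^2$ in \eqref{eq:dprime} is invariant under multiplying the linearized readout by a nonzero constant. The numerator and denominator both scale by $\phi'(V_0)^2$, and affine offsets from $\phi(V_0)$ cancel in the mean difference and do not enter the variances. The first-order optimal value over the shunting class is therefore the same with or without $\phi$, so $\dprime^{2,*}_{\sh,\mathrm{loc}}\le\dprime^{2,*}_{\add}$ still holds.

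For the equality clause, the same positive self-consistent realization of an additive-optimal ray still works, because the activation acts after the voltage. It therefore does not alter the operating point or its self-consistency.

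If the additive comparator is also passed through the same activation, its local ray is rescaled by a positive slope as well and the additive optimum is unchanged. If the additive comparator is not passed through it, nothing changes on that side either.

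The main obstacle is the remainder term. The Methods curvature bound must be extended to $\phi\circ V^{\sh}$. I would write the second-order remainder as $\phi'(V_0)R_V+\tfrac12\phi''(\xi)(\delta V)^2$. This requires $\phi$ to be $C^2$ near $V_0$, or at least locally Lipschitz-differentiable, so that "valid first-order linearization" keeps its meaning.

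I would also flag the degenerate case $\phi'(V_0)=0$. There the ray collapses and the result fails, which is why the corollary requires positive slope.
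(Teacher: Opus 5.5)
Your proposal is correct and matches the paper's proof. You linearize $\phi$ at $V_0$, note that the offset $\phi(V_0)$ cancels in the class-mean difference, and note that the common factor $\phi'(V_0)^2$ cancels between the numerator and denominator of $d'^2$; containment and the realizability-conditioned equality of Theorem~\ref{thm:local-equivalence} then carry over, because the activation sits after the voltage and does not move the self-consistent operating point. The only difference is your remainder step: the paper does not extend the curvature bound to $\phi\circ V^{\sh}$ but treats the curvature of $\phi$ as a beyond-local term, so your extra $C^2$ assumption is not needed for the first-order claim itself.
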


Internal activation instead multiplies each pathwise passive derivative by
the intervening branch slopes $\gamma_n^q=(\phi_n^q)'(\bar V_n^q)$,
$q\in\{\add,\sh\}$; a terminal slope cancels from local $d'^2$, whereas
internal slopes reweight paths (Methods). For $Y=\phi(V(g))>0$, the exact gain
elasticity is $A_\phi(V)\,\partial\log|V|/\partial\log g$, where
$A_\phi=V\phi'/\phi$. For shifted-tanh slope $\kappa_\phi$,
$A_\phi(V)=\kappa_\phi V[1-\tanh(\kappa_\phi(V-b))]$, so the activated additive
elasticity need not be flat. A strictly increasing terminal activation preserves
sample ordering, AUC, and optimal scalar-threshold error, but midpoint mismatch
can change moment-based effects (Supplementary
Fig.~\ref{fig:activation-sensitivity-supp}).

\begin{proposition}[Constrained local tie]
\label{prop:constraint-equivariant}
For feasible set $\mathcal F$ and positive diagonal map $D$, an added constraint
preserves the fixed-operating-point comparison when the generated ray sets
agree,
$\mathcal R(D\mathcal F)=\mathcal R(\mathcal F)$; a tie still requires a feasible,
self-consistent realization of an additive-optimal ray.
\end{proposition}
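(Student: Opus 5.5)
The plan is to reduce the constrained comparison to the cone argument behind Theorem~\ref{thm:local-equivalence}, with the feasible set $\mathcal F$ in place of the full nonnegative orthant. At a fixed operating point, the linearization \eqref{eq:shunting-linearization} shows that every shunting readout's effective Jacobian is $Dq$ for some feasible weight vector $q\in\mathcal F$. Here $D$ is the positive diagonal matrix with entries $(1+I_0)/D_0^2$ on excitatory coordinates and $E_0/D_0^2$ on inhibitory coordinates, written in the signed coordinates $x=(\delta E,-\delta I)$. Hence the shunting local ray set is $\mathcal R(D\mathcal F)$, and the additive ray set is $\mathcal R(\mathcal F)$.

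The first step is to note that local $d'^2$, computed from \eqref{eq:dprime} with $\Delta\mu$ and $\Sigma$ the input mean difference and covariance, depends on a readout $w$ only through its ray. Indeed, $(w^\top\Delta\mu)^2/(w^\top\Sigma w)$ is invariant under positive rescaling of $w$. Optimizing over both signal orientations handles the sign of $w^\top\Delta\mu$. Nondegeneracy of $\Sigma$ on the active subspaces, as assumed in Theorem~\ref{thm:local-equivalence}, makes the ratio well defined on every feasible ray. Consequently, if $\mathcal R(D\mathcal F)=\mathcal R(\mathcal F)$, the two constrained suprema are taken over the same ray set. They therefore coincide, and the fixed-operating-point comparison is preserved. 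If instead the ray sets differ, no conclusion is claimed, so only this direction needs proof.

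The second step addresses the tie between the realizable mechanisms. A ray in $\mathcal R(D\mathcal F)$ need not be attained by an actual shunting circuit, because $D$ itself depends on the operating point $(E_0,I_0)$, and that point is induced by the weights. The fixed-point equality above is therefore not yet a tie. I would argue exactly as in the equality clause of Theorem~\ref{thm:local-equivalence}. Take an additive-optimal ray $r\in\mathcal R(\mathcal F)$. Suppose there exist feasible weights $q\in\mathcal F$ whose induced operating point gives $D(q)q\in\mathbb R_+r$, and whose operating point is positive and self-consistent. Then the shunting local $d'^2$ equals the constrained additive optimum. Combined with the containment inequality, which holds under the constraint because $\mathcal R(D\mathcal F)$ is obtained by a positive diagonal map of a feasible set, this yields equality.

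The main obstacle is the self-consistency requirement, since $D$ depends on $q$. I would first try to solve $q=\lambda D(q)^{-1}r$ by a scaling argument: fix the direction $D_\star^{-1}r$ for a target operating point, then adjust the overall scale and the E/I share so that the induced $(E_0,I_0)$ reproduces $D_\star$. This works only if the constraint set is closed under that rescaling, which is why the statement lists a feasible self-consistent realization as an extra hypothesis rather than deriving it.
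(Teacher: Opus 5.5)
Your overall route is the paper's. You pass to effective coordinates $\tilde u=Du$ and use scale invariance of the Rayleigh quotient, so only $\mathcal R(\mathcal F)$ and $\mathcal R(D\mathcal F)$ matter. You conclude that equal ray sets give equal frozen-$D$ optima. You then require an $\mathcal F$-feasible, self-consistent representative of an additive-optimal ray before claiming a tie.

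One step is misjustified. You say containment survives the constraint ``because $\mathcal R(D\mathcal F)$ is obtained by a positive diagonal map of a feasible set.'' Positivity of $D$ only gives $D(\mathbb R_+\cap\mathcal F)\subseteq\mathbb R_+$. That is the unconstrained containment of Theorem~\ref{thm:local-equivalence}; it does not keep shunting rays inside $\mathcal R(\mathcal F)$.

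A concrete counterexample uses one E input, one I input, and $\mathcal F=\{w_E=w_I\}$. The additive ray set is then $\{[(1,1)]\}$, while the shunting effective ray is $[(a_E,|a_I|)]$. These differ unless $E_0=1+I_0$. The shunting ray can have the larger Rayleigh value (small $w$ makes $|a_I|/a_E=E_0/(1+I_0)$ small), so the constrained shunting value can exceed the constrained additive optimum. This is exactly the E/I-ratio mismatch that the paper says the ray-set hypothesis must exclude.

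The upper bound therefore has to come from $\mathcal R(D\mathcal F)=\mathcal R(\mathcal F)$ itself. That hypothesis must hold at the map $D(q)$ induced by your realizing weights $q$. This is why the paper requires the self-consistent representative's induced map to be ``this $D$'' and states the tie at that frozen operating point. If you want a tie against every self-consistent shunting configuration, you also need $\mathcal R(D(q')\mathcal F)\subseteq\mathcal R(\mathcal F)$ at every induced $D(q')$, and a single-$D$ hypothesis does not supply that. With the containment sourced this way, the rest of your argument goes through.
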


The local boundary leaves four possible sources of separation: finite-amplitude
gating, ranking or calibration, additional constraints, and composition across
branches. In a fixed 160-case numerical audit, 90 additive optima admit
self-consistent shunting realizations and tie exactly; the remaining 70 fail
that realizability test
(Supplementary Fig.~\ref{fig:local-equivalence-supp}).
Continuous conductance constraints and discrete contact allocation answer
different questions. Fixing total conductance preserves rays under a fixed
diagonal map, whereas fixing E/I share may exclude them.
Figure~\ref{fig:framework-main}D shows the continuous conductance-mass control;
Fig.~\ref{fig:framework-main}E--F separately shows how a fixed contact budget
changes the preferred allocation and held-out mechanism gap.

\subsection{Beyond the local limit: gain suppression versus denominator variability}
\label{subsec:beyond-local}

Beyond the local limit, shunting helps only when it removes signal-aligned gain
more strongly than it attenuates signal or imports denominator variability.
Figure~\ref{fig:regime-map}A--B organizes this competition using gain alignment
$\alpha$, surviving gain $M_g$, baseline fidelity $B$, and residual denominator
variability $\Lambda$; formal definitions follow below.

\ifpapersynthesis
\begin{figure}[!t]
\else
\begin{figure}[tbp]
\fi
\centering
\includegraphics[width=\figwidth]{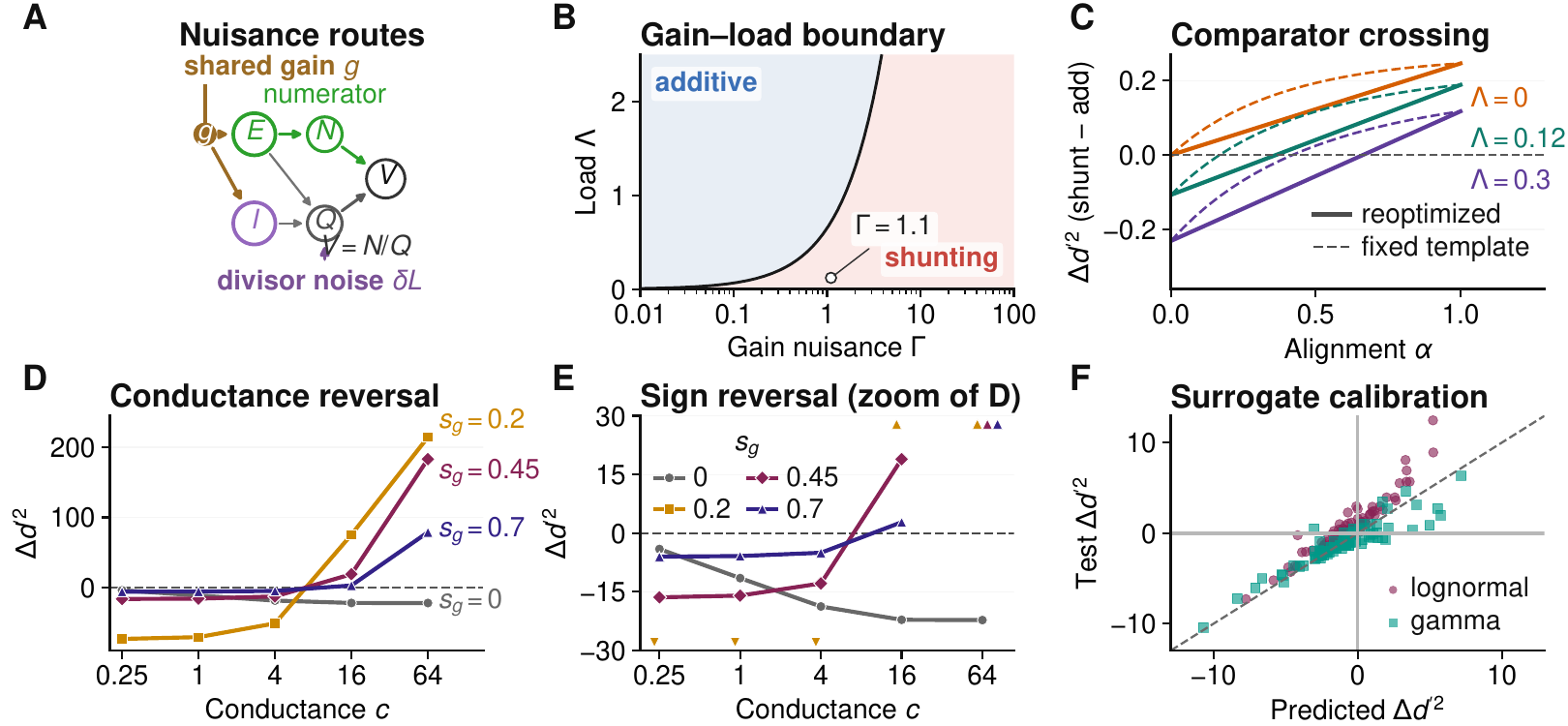}
\caption{\textbf{Gain suppression competes with load and signal loss.}
\textbf{(A--B)}~Shared gain reaches numerator and divisor; independent pool
noise reaches only the divisor. At $B=\alpha=1$ and $M_g=0.35$, the boundary is
$(1-M_g)\Gamma=\Lambda$. \textbf{(C)}~Across $\alpha\in[0,1]$ at
$\Gamma=1.1$,
reoptimization (solid) shifts the
fixed-template crossing (dashed). \textbf{(D)}~Conductance and gain change
the mechanism ordering on identical E/I observations. \textbf{(E)}~The
near-zero band of D, where the ordering actually flips; carets mark samples
that leave the band. \textbf{(F)}~A split-sample surrogate predicts 133/166
held-out signs. Passive voltage is used throughout.}
\label{fig:regime-map}
\end{figure}

If common gain multiplies both pooled drives, the shunt
$gE/[1+g(E+I)]$ becomes progressively less sensitive to gain as conductance
increases. In the low-drive limit, by contrast, its log-gain elasticity returns
to one (Fig.~\ref{fig:framework-main}C). More generally, cancellation requires matched fractional
changes in the nuisance $z$:
$\partial_z\operatorname{logit}V^{\sh}=\partial_z\log E-\partial_z\log(1+I)$.
Common scaling of $E$ and $I$ approaches exact cancellation only when leak is
negligible relative to $I$. A divisor that removes gain can also attenuate task
signal, so distribution shape, comparator choice, and evaluation metric can
reverse the ordering. For the scalar stimulus coordinate $\theta$, the gain--signal
alignment is
\begin{equation}
 \alpha(\theta)=
 \frac{(f'(\theta)^\top\Sigma_{\rm ind}^{-1}f(\theta))^2}
 {(f'(\theta)^\top\Sigma_{\rm ind}^{-1}f'(\theta))
  (f(\theta)^\top\Sigma_{\rm ind}^{-1}f(\theta))}\in[0,1],
 \label{eq:alignment-alpha}
\end{equation}
with dimensionless gain strength
$\Gamma(\theta)=\sigma_g^2 f(\theta)^\top\Sigma_{\rm ind}^{-1}f(\theta)$; we write
$G_{\rm eff}=\alpha\Gamma$. Denominator fluctuations instead add a
penalty proportional to $(\bar E/D_0^2)^2\sigma_\xi^2$, where $\xi=\delta L$ is
the residual denominator fluctuation; positive
numerator--denominator covariance can partially cancel it.

We next separate three effects of the divisor: preservation of the baseline
signal, suppression of aligned gain, and propagation of residual denominator
variability. Let $d_0^{\prime2}(r)$ be the nuisance-free discriminability of the
optimized passive additive comparator in input-statistics regime $r$. Under
the direction-preserving surrogate, define the surviving gain factor
$M_g(\mathcal T,r)$ for tree $\mathcal T$, whose branch-voltage vector is
$h_{\sh}$ with linear readout weights $w_{\sh}$, by
$\Var_g(w_{\sh}^{\top}h_{\sh})/
\Var_{\rm ind}(w_{\sh}^{\top}h_{\sh})=\alpha M_g(\mathcal T,r)\Gamma$ whenever
$\alpha\Gamma>0$; the product is zero when $\alpha\Gamma=0$. Let
$\Lambda_{\mathcal T}(r)$ denote the accumulated residual denominator-variance penalty,
and let $B(\mathcal T,r)=d_{0,\sh}^{\prime2}(\mathcal T,r)/d_{0,\add}^{\prime2}(r)$ measure
baseline signal fidelity. Holding the baseline decoder fixed after adding
nuisance gives the first-order morphology-to-readout approximation
\begin{equation}
 d_{\add}^{\prime2}\approx\frac{d_0^{\prime2}}{1+G_{\rm eff}},\qquad
 d_{\sh}^{\prime2}\approx
 \frac{B(\mathcal T,r)d_0^{\prime2}}
 {1+M_g(\mathcal T,r)G_{\rm eff}+\Lambda_{\mathcal T}(r)}.
 \label{eq:regime-caricature}
\end{equation}
The corresponding criterion for a fixed-template shunting advantage is
\begin{equation}
 B(\mathcal T,r)(1+G_{\rm eff}(r))>
 1+M_g(\mathcal T,r)G_{\rm eff}(r)+\Lambda_{\mathcal T}(r).
 \label{eq:tree-criterion}
\end{equation}
Equivalently, define the signed \emph{shunting-advantage margin}
\begin{equation}
 \mathcal A(\mathcal T,r)
 :=B(1+G_{\rm eff})-[1+M_gG_{\rm eff}+\Lambda_{\mathcal T}]
 =(B-1)+(B-M_g)G_{\rm eff}-\Lambda_{\mathcal T}.
 \label{eq:advantage-margin}
\end{equation}
The leading-order prediction is $\mathcal A>0$. If the nuisance is orthogonal to
the signal ($\alpha=0$), suppressing it cannot improve this readout. With the
other quantities fixed, stronger aligned gain favors shunting only when
$B>M_g$. Independent denominator variability lowers the margin, and a divisor
that suppresses nuisance can still lose if it also reduces $B$. For a deeper
topology $\mathcal T'$ relative to $\mathcal T$,
\begin{equation}
 \Delta\mathcal A=(1+G_{\rm eff})\Delta B
 -G_{\rm eff}\Delta M_g-\Delta\Lambda_{\mathcal T}.
 \label{eq:depth-advantage-margin}
\end{equation}
Thus depth helps shunting when improved multiscale routing and gain suppression
outweigh the added attenuation and load.
For a tree with internal activation, Eqs.~\eqref{eq:regime-caricature}--
\eqref{eq:depth-advantage-margin} retain this bookkeeping form only after
$B$, $M_g$, and $\Lambda_{\mathcal T}$ are recomputed from the activated representations
(or their Jacobians); passive values cannot be carried over.  Saturated internal
slopes can suppress nuisance, reduce signal fidelity, or do both.

The comparison must expose both mechanisms to the same nuisance inputs. For any
fixed additive
comparator $a$, define its isolated normalized gain and pool terms
$G_a=d_{0,a}^{\prime2}/d_{g,a}^{\prime2}-1$ and
$L_a=d_{0,a}^{\prime2}/d_{l,a}^{\prime2}-1$, and define $G_{\sh},L_{\sh}$
analogously. If the reciprocal-discriminability penalties from gain and pool
noise are additive, the comparator-aware margin is
\begin{equation}
 \mathcal A_a^{\rm match}
 =B_a(1+G_a+L_a)-(1+G_{\sh}+L_{\sh}),
 \qquad B_a=\frac{d_{0,\sh}^{\prime2}}{d_{0,a}^{\prime2}}.
 \label{eq:matched-input-margin}
\end{equation}
This expression reduces to Eq.~\eqref{eq:advantage-margin} when $B_a=B$,
$G_a=G_{\rm eff}$, $L_a=0$, $G_{\sh}=M_gG_a$, and
$L_{\sh}=\Lambda_{\mathcal T}$. Because the tangent-additive control also
receives the corrupted sensor (the inhibitory stream reporting the nuisance),
its signed load term $L_a$ must be included in the comparison.

A reoptimized decoder can recover signal components orthogonal to the nuisance,
so we distinguish it from the fixed-template calculation. For rank-one Gaussian
covariance, the additive identity below holds for any positive-definite
independent covariance. The shunting identity additionally assumes an
isotropic, direction-preserving surrogate with $c_{\mathcal T}=1+\Lambda_{\mathcal T}$.
Writing $R^{\rm opt}$ for the fraction of clean discriminability an optimized
comparator retains,
\begin{equation}
 R_{\add}^{\rm opt}=1-\frac{\alpha\Gamma}{1+\Gamma},\qquad
 R_{\sh}^{\rm opt}=\frac{B}{c_{\mathcal T}}\left(1-\frac{\alpha M_g\Gamma}{c_{\mathcal T}+M_g\Gamma}\right).
 \label{eq:noise-resilience-reoptimized}
\end{equation}
Within each mechanism, its fixed-template and reoptimized expressions coincide
at $\alpha=1$ but can differ strongly otherwise
(Fig.~\ref{fig:regime-map}C); positive-E/I cone constraints
require the corresponding numerical optimization.
The Appendix gives an exact Cox/shared-gain additive penalty and states the
additional direction-preserving assumptions used for the shunting surrogate;
positive non-Gaussian families are tested numerically rather than inferred from
Gaussian moments.

A split-sample second-order surrogate tracks the held-out nonlinear effects,
especially when those effects are large (Fig.~\ref{fig:regime-map}F). One
sample fits the direction, a second calibrates its moments, and an independent
third tests the nonlinear effect. Positive non-Gaussian controls further show that the outcome
depends on the input distribution, evaluation metric, and comparator
(Supplementary Fig.~\ref{fig:nongaussian-supp}).

Increasing conductance can reverse the mechanism ordering when shared gain is
accessible to the divisor (Fig.~\ref{fig:regime-map}D,E). The benefit grows
with population access because shunting more strongly reduces the effective
shared covariance. Zero gain or mismatched sensing removes the advantage,
whereas sensor load weakens it
(Supplementary Fig.~\ref{fig:identified-ceiling-supp}). A contact-capped
control also approaches, but cannot exceed, the information ceiling set by its
inputs.

\subsection{Morphology changes locality and resource allocation}
\label{subsec:trees}

The criterion predicts that morphology matters not because depth creates
capacity by itself, but because branch support determines where a divisor meets
the nuisance it must suppress (Fig.~\ref{fig:locality}A). We test this
prediction using balanced,
level-regular trees that isolate routing effects without ranking biological
morphologies. Prior work establishes morphology-dependent computation
~\citep{Poirazi2003,Beniaguev2021,Chavlis2025,LyoSavin2024,
AgrawalBuice2025}; here morphology determines how nuisance information reaches
the branches that carry the affected signal. The per-branch caps $N_E$ and
$N_I$ count contacts rather than E/I cells. Somatic width is $P_E$, and the main
experiments use no learned inhibitory population ($P_I=0$).

\begin{figure}[tbp]
\centering
\includegraphics[width=\figwidth]{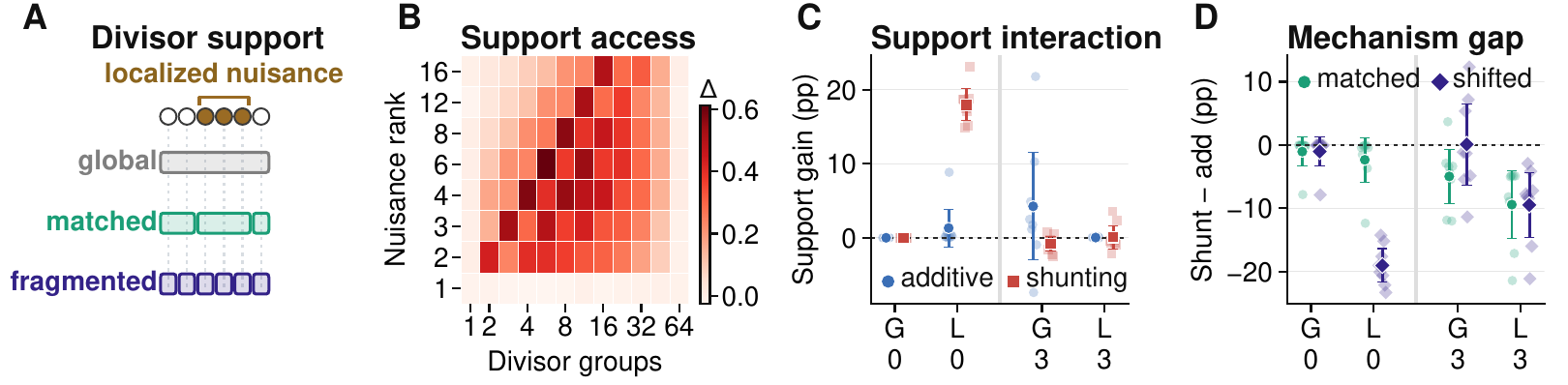}
\caption{\textbf{Divisor support must match the nuisance.}
\textbf{(A)}~Global, matched, and fragmented divisors repartition the same
nuisance-bearing inputs. \textbf{(B)}~Matched-minus-shuffled linear
accessibility across 96 computed oracle conditions; the benefit peaks on the
diagonal, where divisor granularity equals nuisance rank.
\textbf{(C--D)}~Matching inhibitory
support preferentially improves shunting at zero I-stream background (I-bg),
without creating a robust absolute shunting win; G/L denote global/local
nuisance support, C distinguishes mechanisms, and D support placements.
Passive voltage; paired computational
seeds ($n=8$, 95\% $t$ intervals).}
\label{fig:locality}
\end{figure}

Accessibility peaks where divisor granularity equals nuisance rank and vanishes
for one global or 64 fragmented divisors (Fig.~\ref{fig:locality}B). The
locality factorial confirms the same requirement in trained models: inhibitory
support must match the nuisance-bearing pathway (Fig.~\ref{fig:locality}C--D).
All non-rule settings are held fixed, and the support-by-mechanism interaction
is not an overall shunting advantage.

To understand how routing then changes with depth, note that a passive additive
tree has no nonlinear composition: it flattens exactly to
\begin{equation}
 V_s^{\add}=\sum_{n\in\mathcal T}a_n^{\add}(E_n-I_n).
 \label{eq:add-tree-flatten}
\end{equation}
For fixed gains and couplings, $a_n^{\add}$ depends on path depth, so morphology
changes the realized additive score. If local linear weights are freely
reoptimized over the same observations, nonzero path gains can instead be
absorbed; under those assumptions additive capacity is topology-equivalent.
Linearized shunting instead has operating-point-dependent path gains and local
E/I sensitivity ratios,
\begin{equation}
 \delta V_s^{\sh}\approx
 \sum_{n\in\mathcal T}a_n^{\sh}(\delta E_n-\rho_n\delta I_n).
 \label{eq:sh-tree-linearize}
\end{equation}
Along each path, local divisor sensitivities determine the surviving gain, while
somatic sensitivities to node-specific loads determine $\Lambda_{\mathcal T}$. Signal
fidelity $B$ must be measured separately because a signal-bearing divisor can
perform poorly even when $M_g$ is small.

\begin{figure}[!t]
\centering
\includegraphics[width=\figwidth]{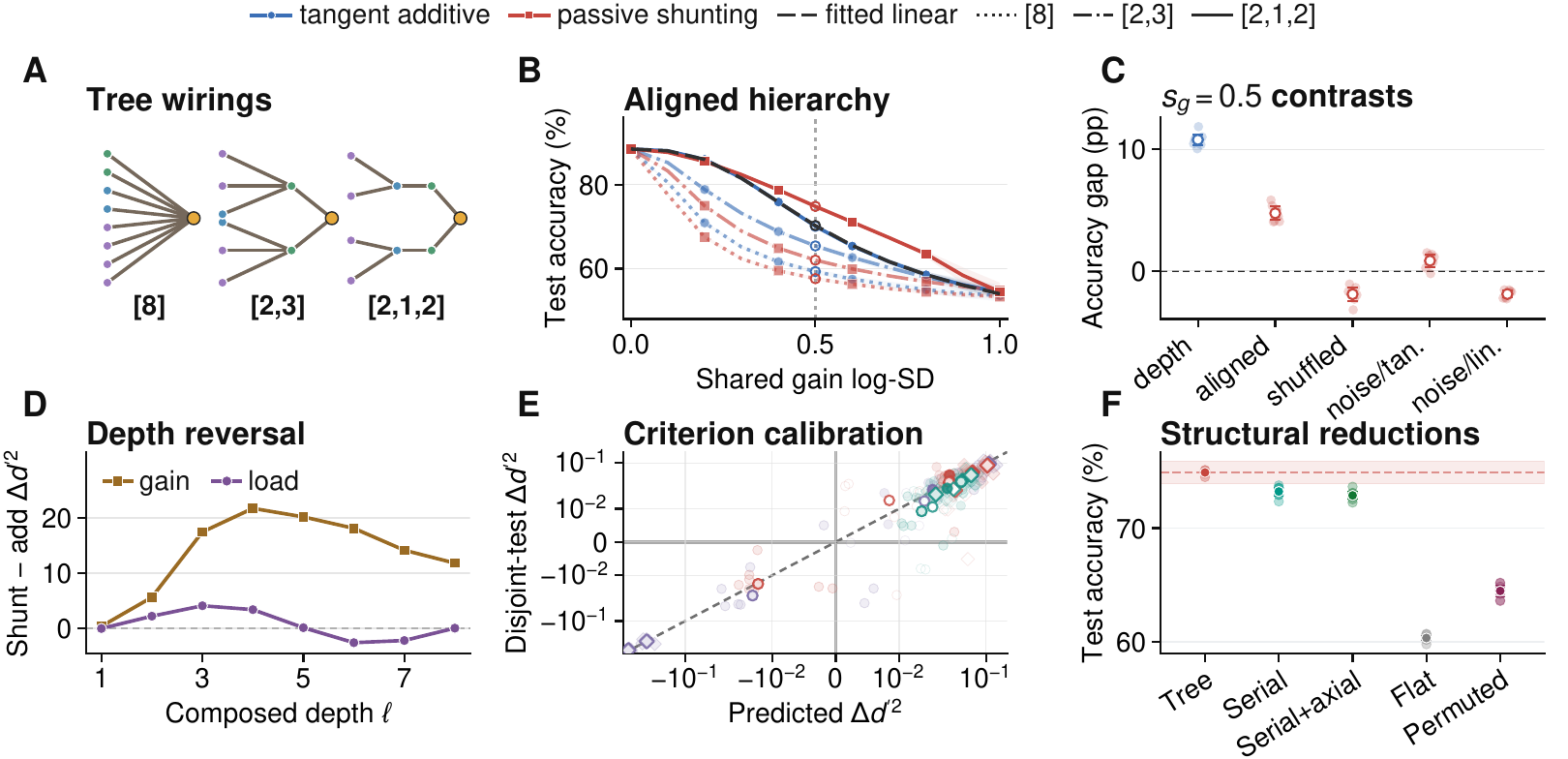}
\caption{\textbf{Aligned hierarchical routing can suppress shared gain.}
\textbf{(A)}~The same eight observations are routed through flat $[8]$, shallow
$[2,3]$, and deep $[2,1,2]$ trees; purple/blue/green mark fine/coarse/global
inputs. \textbf{(B)}~Only the deep shunting
composition crosses above its tangent and the fitted-linear decoder.
\textbf{(C)}~Paired contrasts show the aligned benefit and its support/noise
reversals. \textbf{(D)}~A controlled sweep shows that depth helps under gain
but reverses under load. \textbf{(E)}~A calibrated margin predicts held-out ordering;
purple/red/teal, gain log-SD $0.2/0.5/0.8$; circles/diamonds,
tangent/fitted-linear; filled/open, aligned/shuffled support; faint/large,
paired seeds/condition means.
\textbf{(F)}~Structural reductions of the same construction: the tree is not
reproduced by the tested serial, serial-plus-axial, flat, or level-permuted
reductions, which share its observations, divisor multiset and decoder. This
isolates structure, not biological branch alignment, which the
global-versus-local gain experiment addresses; shading marks the prespecified
$\pm1$ percentage-point tree-equivalence margin. Every dendritic arm in this
figure uses passive branch voltage, so no panel tests an activated tree.
B--C and E use paired computational seeds
($n=8$, paired 95\% $t$ intervals); D is a standalone simulation with
10,000 trials per class.}
\label{fig:criterion-morphology}
\end{figure}

Aligned hierarchical routing strongly suppresses within-class gain while node,
observation, and edge counts remain fixed
(Fig.~\ref{fig:criterion-morphology}A--C). Three comparators address distinct
questions: fixed recursion tests the implemented computation, the tangent at the
clean operating point tests the local approximation, and the fitted
sign-unconstrained readout provides a stronger linear baseline. From flat to
deep, both shunting and its tangent suppress within-class gain. Shunting is
below the tangent for the flat and shallow trees and exceeds it only in the
deep composition; the deep shunting readout also exceeds the fitted-linear
decoder by 4.57 percentage points at gain log-SD $0.5$. Shuffling support
reverses that comparison to $-2.08$ points
(Fig.~\ref{fig:criterion-morphology}B--C). The fixed additive cascade
shows the implemented raw $E-I$ rule under the same inventory, but it is
diagnostic rather than the primary capacity baseline: its subtractive current
drives performance near chance here. The fitted-linear decoder is the stronger
capacity comparison
(Supplementary Fig.~\ref{fig:morphology-noise-diagnostics-supp}). This ordering
is stable across the tested operating points and stronger linear comparators
(Supplementary Fig.~\ref{fig:exact-inventory-operating-sensitivity-supp}).

\ifpapersynthesis
Passive controls further bound this designed example. The
structure-matched shunting readout shows a support-dependent small-sample
accuracy advantage, but validation-selected nonlinear predictors overtake it
with more labels (Supplementary Fig.~\ref{fig:capacity-reduction-supp}A,B), and
log loss crosses in the same bracket
(Supplementary Fig.~\ref{fig:capacity-reduction-supp}C). The full tree also exceeds the tested
serial, flat, and level-permuted reductions
(Fig.~\ref{fig:criterion-morphology}F). DendriNet post-voltage activation is
off in the dendritic arms, while ordinary nonlinear predictors act on the raw
observations; these controls do not establish an ordering for activated
trained trees.
\else
Additional passive controls identify a small-sample inductive bias: nonlinear
predictors overtake with more labels (Supplementary
Fig.~\ref{fig:capacity-reduction-supp}A--C).
\fi

Serial shunting helps when gain removal outruns accumulated load, but becomes
neutral or harmful when load dominates (Fig.~\ref{fig:criterion-morphology}D).
This controlled sweep does not match parameters or contacts across depth;
complete diagnostics are in Supplementary Fig.~\ref{fig:tree-depth}.

For panel E, we fit templates on clean training data, estimate
comparator-specific terms on calibration data, and test combined corruption on
fresh samples. The margin correctly predicts which mechanism performs better for
the tested conditions (Fig.~\ref{fig:criterion-morphology}E). Omitting the
tangent load term substantially weakens the prediction. Across morphologies,
the shunting gain penalty falls while signed load rises. These proxies are
motivated by, but do not estimate, $(B,M_g,\Lambda_{\mathcal T})$.

Under contact- and parameter-matched training, greater depth shows no consistent
held-out
benefit (Supplementary Fig.~\ref{fig:resource-matched-morph-supp}). This
complete-model comparison reinforces that support and the balance among $M_g$,
$\Lambda_{\mathcal T}$, and $B$, not depth alone, determine performance.

\subsection{Training reveals conditional rule effects}
\label{subsec:ei-circuit}

The designed hierarchy establishes a regime in which aligned shunting helps,
but it does not show that training will discover it.
In the contextual task, a cue selects one of two streams while gain perturbs
the irrelevant stream (Fig.~\ref{fig:ei-circuit}A); the ordered-E/I task shifts
E and I oppositely by class and varies I-stream background. We train both mechanisms with shifted-tanh branches, but
optimization can still place them at responsive or saturated operating points.

\begin{figure}[t]
\centering
\includegraphics[width=\figwidth]{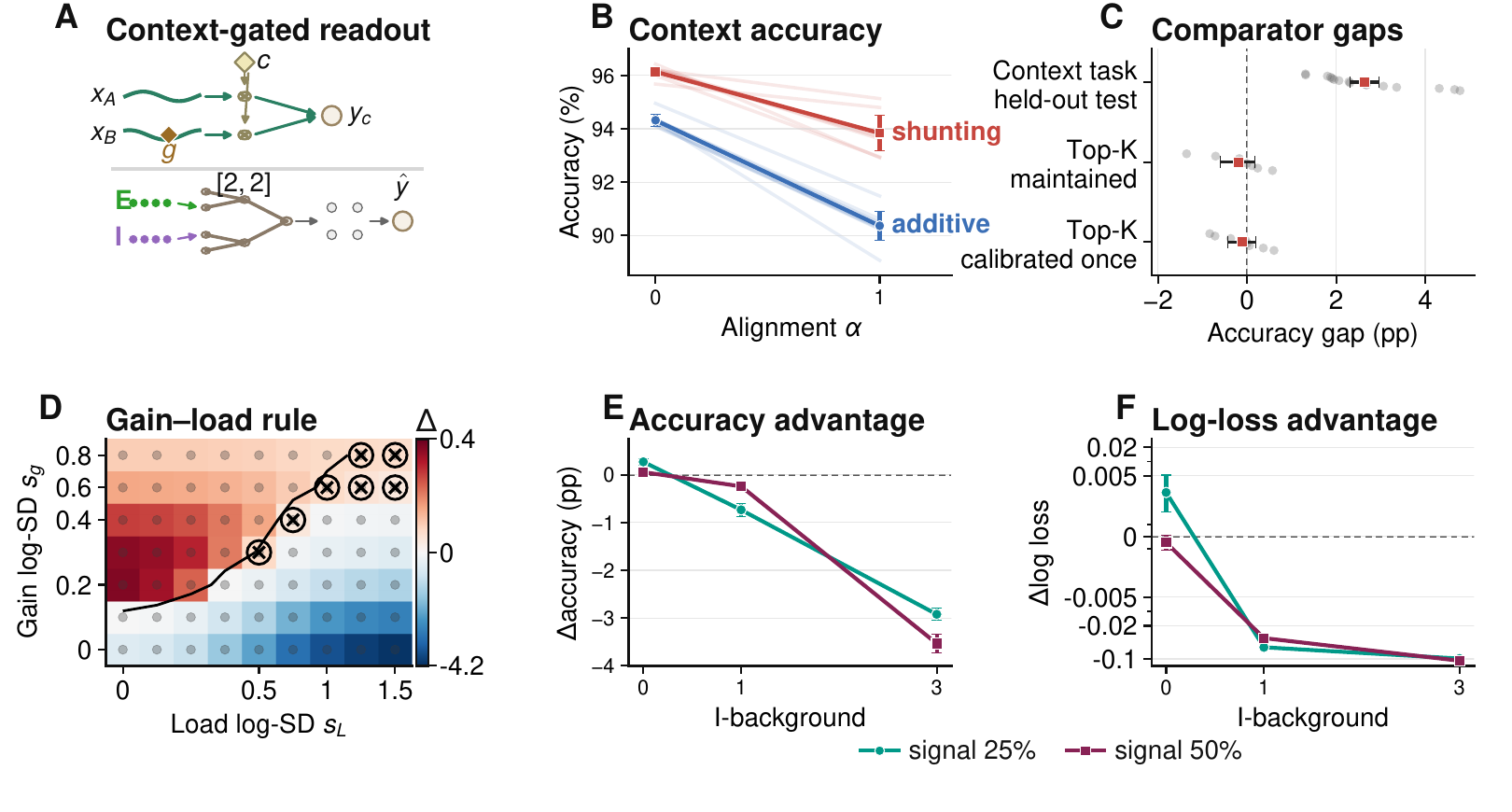}
\caption{\textbf{Training gives conditional effects and a load reversal.}
\textbf{(A)}~Cue $c$ selects target $y_c$ from two streams under gain $g$;
below, a $[2,2]$ DendriNet predicts $\hat y$. \textbf{(B)}~Selected shunting
and additive accuracies at the prespecified alignment endpoints
$\alpha=0,1$ show the absolute contextual-task effect.
\textbf{(C)}~Paired complete-model and matched-operating-point gaps distinguish
trained performance from rule-isolating controls. \textbf{(D)}~Passive Monte Carlo shows
shunting-minus-additive $\Delta d'^2$ over gain and load; grey dots, evaluated
cells; black curve, delta-method boundary; circled crosses, sign
disagreements (56/63 agreement). \textbf{(E--F)}~Accuracy and log-loss
advantages at 25/50\% signal support are oriented so positive favors shunting.
Shifted-tanh is used in B--C,E--F; D is passive. Paired computational seeds ($n=8$;
95\% $t$ intervals, except percentile bootstrap in C).}
\label{fig:ei-circuit}
\end{figure}

For the contextual comparison, we selected one setup per mechanism by validation
accuracy under the same search budget and froze both choices before held-out
testing. Shunting improves held-out accuracy in this selected comparison
(by 2.64 percentage points; Fig.~\ref{fig:ei-circuit}B--C), whereas the
advantage disappears under Top-$K$ policies matched in contacts and activation
calibration. Full model-selection controls are reported
in Supplementary Figs.~\ref{fig:contextual-factorial-supp}
and~\ref{fig:topk-comparator-ladder-supp}.

We next increased I-stream background in the prespecified ordered-E/I task to
test the predicted load-driven reversal. Increasing background changes the
ordering from approximately neutral to additive-favored, and the delta-method
boundary captures most of the Monte Carlo regime map
(Fig.~\ref{fig:ei-circuit}D--F). Denominator-only controls and progressively
stronger additive comparators further delimit this effect (Supplementary
Figs.~\ref{fig:denominator-only-load-supp}
and~\ref{fig:topk-comparator-ladder-supp}).

To test whether the same nuisance logic extends beyond dendritic trees, we
freeze conventional ResNet-18 feature maps on CIFAR-10
~\citep{HeEtAl2016,Krizhevsky2009CIFAR}. Divisor-aligned group gain separates
the mean-only divisive readout from an unnormalized additive readout, whereas
GroupNorm~\citep{WuHe2018GroupNorm} is comparably robust. Shuffling gain
support weakens the separation, and corrupting the divisor eventually reverses
it (Fig.~\ref{fig:cifar-normalization-main}A--D).

\begin{figure}[t]
\centering
\includegraphics[width=\figwidth]{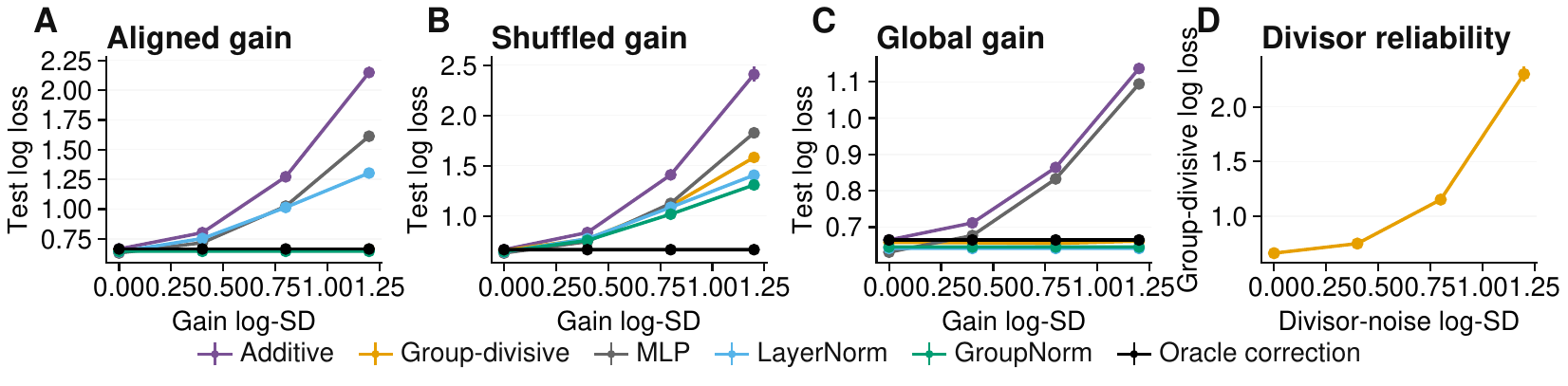}
\caption{\textbf{The gain--load interaction transfers to conventional feature
normalization.}
A frozen ImageNet-pretrained ResNet-18 supplies CIFAR-10 features; all trained
arms use the full 45,000/5,000 training/validation split and the same
clean-validation search budget. \textbf{(A)}~Aligned group gain separates the
mean-only divisive readout from an unnormalized additive readout, while
GroupNorm is comparably robust. \textbf{(B)}~Support shuffling weakens the
divisive benefit. \textbf{(C)}~Global gain is removed by divisive, LayerNorm,
and GroupNorm readouts. \textbf{(D)}~Divisor noise reverses the benefit. Mean
$\pm$ standard error of the mean (SEM) over five paired computational seeds;
this is an ML normalization
test, not biological evidence.}
\label{fig:cifar-normalization-main}
\end{figure}

\subsection{Shared variability in mouse primary visual cortex}
\label{subsec:v1-readout}

Shared trial-to-trial fluctuations can limit neural population codes when they
align with stimulus-informative dimensions
~\citep{GorisMovshonSimoncelli2014,CohenKohn2011,MorenoBote2014}. As locomotion
broadly modulates visually evoked V1 activity, it may therefore contribute
state-related shared fluctuations that affect stimulus decoding
~\citep{NiellStryker2010}. Here, we test such a downstream-decoder implication
and do not make a claim about biological mechanisms. We analyzed
nonnegative activity from three mouse V1 sessions
~\citep{Stringer2019,StringerOrientedStimuliFigshare2019}. Training-only
response pools feed parameter-matched direct-E/I readouts with excitatory
readout width $P_E=8,16,64$ for eight-way drifting-grating direction decoding.
After the same
validation budget selected one setup per mechanism, each condition was refit
with four new seeds; shifted-tanh branches feed a linear decoder
(Fig.~\ref{fig:v1-main}A).

\begin{figure}[t]
\centering
\includegraphics[width=\figwidth]{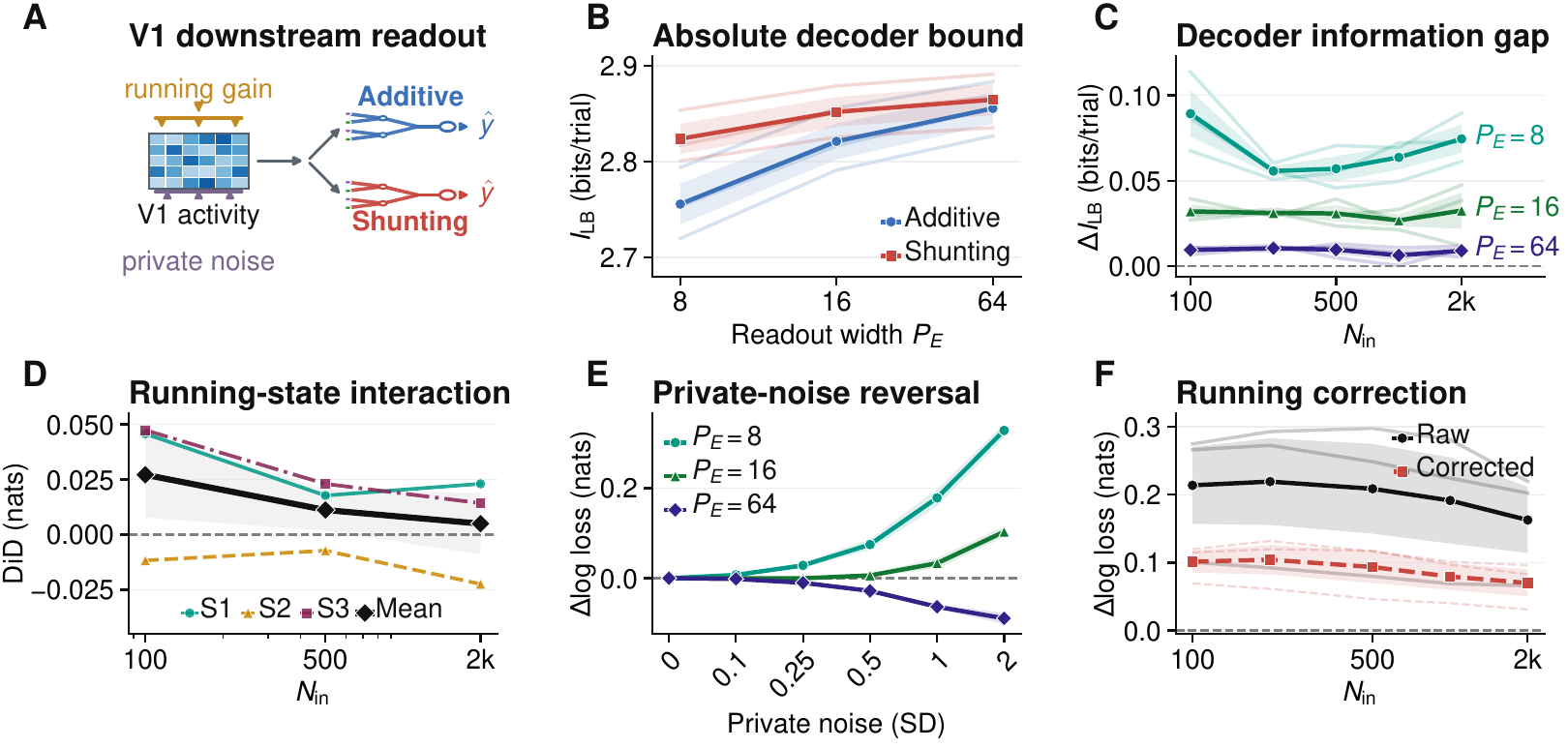}
\caption{\textbf{V1 decoder effects depend on readout width.}
\textbf{(A)}~Training-selected V1 pools feed matched additive and shunting
dendritic readouts (schematic). \textbf{(B--C)}~Absolute bounds stay high, but
their gap is largest at $P_E=8$. \textbf{(D)}~The running-state difference-in-differences (DiD)
varies by session. \textbf{(E)}~Private noise favors shunting at $P_E=8,16$ but
reverses at 64. \textbf{(F)}~A training-estimated multiplicative correction
reduces the fixed-feature shunting advantage. Positive favors shunting in C--F.
B--E use shifted-tanh branches; F is passive. Session/mouse
($n=3$; mean $\pm$ SEM).}
\label{fig:v1-main}
\end{figure}

We summarize held-out decoding with $I_{\rm LB}$, a variational
decoder-information lower bound rather than total population information
~\citep{PooleEtAl2019}. The shunting-minus-additive gap is largest for narrow
readouts and decreases as access widens despite high absolute bounds
(Fig.~\ref{fig:v1-main}B--C). Private noise favors shunting for narrow readouts
but reverses at the widest (Fig.~\ref{fig:v1-main}E); depth lowers the secondary
clean-accuracy endpoint for both mechanisms
(Supplementary Fig.~\ref{fig:v1-exact-resource-depth-noise-supp}A).

The prespecified running-state interaction is heterogeneous across sessions and
decreases with upstream access (Fig.~\ref{fig:v1-main}D). A training-estimated
multiplicative correction reduces the shunting log-loss advantage in every
session (Fig.~\ref{fig:v1-main}F), although the supplementary $d'$ endpoint
moves oppositely. These are conditional decoder effects, not increased
population information, a direct $\Lambda$ estimate, or circuit identification
(Supplementary Figs.~\ref{fig:v1-fixed-template-supp}
and~\ref{fig:v1-running-recovery-supp}).

\section{Discussion}
\label{sec:discussion}
Nonlinearity alone does not make division beneficial. Branch-local shunting
helps only when a reliable divisor meets the nuisance-affected signal and
suppresses more shared variability than its signal attenuation and denominator
noise. A terminal shunt adds no first-order decision direction; within a tree,
morphology and activation can reweight paths. Misplaced or noisy division can
therefore erase or reverse the benefit.

This refines classic accounts of divisive normalization and conductance-based
gain control~\citep{Heeger1992,ReynoldsHeeger2009,CarandiniHeeger2012,
ChanceAbbottReyes2002,MitchellSilver2003}, and complements work on dendritic
nonlinearities and morphology
~\citep{Poirazi2003,PolskyEtAl2004,LondonHausser2005,Beniaguev2021}. The matched-resource
comparison adds a boundary: local division is not automatically more expressive
than additive E/I, and depends on divisor placement and reliability.

Designed simulations recover this regime and its reversals, but training
recovers only part: I-stream background reverses the rule ordering, while V1
effects peak at narrow access and cross over with width and noise. These are
conditional decoder effects, not universal benefits, increased population
information, or circuit identification.

\ifpapersynthesis
The passive hierarchy further illustrates an inductive-bias rather than an
asymptotic-capacity advantage: flexible nonlinear predictors eventually
outperform the fixed shunting readout, while its small-sample accuracy benefit
is strongest with the correct support prior. Internal post-voltage activation
changes path gains and is therefore evaluated in the trained-model analyses
rather than inferred from this passive construction.
\fi

The gain--load--alignment principle separates nuisance tracking, fidelity,
denominator variability, and pathway placement.
\ifpapersynthesis\else
It predicts that mismatched inhibition, sensor corruption, or excess load should
reduce the shunting benefit.
\fi

For machine learning, the principle bounds local normalization under
multiplicative shift: division helps when reliable group-specific side
information precedes aggregation; global, noisy, or misaligned normalization
may be preferable. DendriNet tests this boundary, not generic benchmark
superiority.

\paragraph{Scope and limitations.}
The model omits temporal and regenerative mechanisms
~\citep{YangMurrayWang2016,Larkum2013,StuartSpruston2015,Gidon2020}. Terminal
monotone activation preserves the local boundary; internal activation can
reweight paths. The E/I streams are computational pathways, not identified cell
types, and model seeds are computational rather than biological replicates. The
hierarchy is designed; three-mouse V1 decoding cannot establish independent
replication or circuit identity.
\ifpapersynthesis
A five-mouse Allen scope control gave heterogeneous state interactions and did
not pass its prespecified replication rule
(Methods~\ref{methods:allen-v1-scope}).
\fi

\ifpapersynthesis
\section{Scope, predictions, and limitations}
\label{sec:camera-synthesis}

Taken together, the results show that depth changes a routing graph rather than
a scalar capacity knob. Mechanism-by-depth interactions and absolute depth
effects answer different questions; Table~\ref{tab:depth-synthesis}
distinguishes them across nuisance regimes.

\begin{table}[htbp]
\centering
\small
\setlength{\tabcolsep}{6pt}
\renewcommand{\arraystretch}{1.08}
\caption{Predictions by nuisance regime. Evidence is shown in parentheses.}
\label{tab:depth-synthesis}
\begin{tabularx}{\textwidth}{@{}>{\raggedright\arraybackslash}p{0.24\textwidth}>{\raggedright\arraybackslash}X@{}}
\toprule
Regime & Outcome (evidence) \\
\midrule
Terminal local & \textbf{Neutral.} No new decision direction
  \emph{(threshold and cone results)}. \\ \addlinespace[2pt]
Aligned reliable divisor & \textbf{Helps conditionally.} Gain suppression must
  exceed fidelity and load costs \emph{(gain sweep; exact hierarchy)}. \\ \addlinespace[2pt]
Recoverable nuisance & \textbf{Signal retained.} Refitting recovers orthogonal
  signal \emph{(rank-one boundaries)}. \\ \addlinespace[2pt]
Noisy divisor & \textbf{Can hurt.} Load can dominate
  \emph{(load and sensor-noise sweeps)}. \\ \addlinespace[2pt]
Mismatched support & \textbf{Fails.} The divisor misses the nuisance
  \emph{(support-shuffle reversal)}. \\ \addlinespace[2pt]
Aligned hierarchy & \textbf{Can help.} Gain reduction can beat composition
  costs \emph{(aligned-routing construction)}. \\ \addlinespace[2pt]
Resource-matched depth & \textbf{No generic depth benefit}
  \emph{(synthetic and V1 controls)}. \\ \addlinespace[2pt]
Decoder bottleneck & \textbf{Access only.} Access can improve; the information
  ceiling cannot \emph{(scaling, contact-cap, and V1 controls)}. \\
\bottomrule
\end{tabularx}
\end{table}

\paragraph{Predictions.}
With contacts held fixed, moving a reliable nuisance sensor onto the branch
carrying its affected excitation should preferentially improve shunting, while
sensor corruption or support shuffling should erase that increment. The
shunting increment should be largest behind a downstream bottleneck and need not grow with the
number of recorded inputs. Branch-resolved inhibitory-placement perturbations
are needed to test the biological mechanism; somatic V1 fits alone cannot do so.

\fi


\ifpapersynthesis
  \par\medskip
\else
  \clearpage
\fi

\ifneuripsanonymous\else
\begin{ack}
This work was supported in part by a gift from the Chan Zuckerberg Initiative
Foundation to establish the Kempner Institute for the Study of Natural and
Artificial Intelligence at Harvard University.
\end{ack}
\fi

\bibliographystyle{unsrtnat}
\bibliography{dendrit_theory}

\clearpage
\appendix

\renewcommand{\topfraction}{0.92}
\renewcommand{\bottomfraction}{0.75}
\renewcommand{\textfraction}{0.06}
\renewcommand{\floatpagefraction}{0.60}
\setcounter{topnumber}{3}
\setcounter{bottomnumber}{2}
\setcounter{totalnumber}{4}

\section*{Appendix Overview}
This appendix provides the theoretical derivations, experimental methods, and
additional analyses underlying the main results.

\begin{itemize}[leftmargin=*]
\item \textbf{Section~\ref{sec:app-methods}: Methods.} Biophysical derivations and
proofs; nuisance models and tree linearization; dendritic-network simulations;
representation geometry; V1 preprocessing, model fitting, and evaluation;
scope and assumptions; and reproducibility details covering statistical units,
code entry points, random seeds, compute, and asset licenses.
\item \textbf{Section~\ref{sec:app-supp-results}: Supplementary results.}
Additional analyses of local and comparator boundaries, population access,
morphology and divisor locality, learned-network optimization, and transfer to
V1 recordings.
\end{itemize}

The formal material states the assumptions behind the scalar boundary,
local-containment theorem, additive-tree flattening result, and depth-wise
decorrelation lemma. The implementation sections connect the abstract E/I
variables to the simulations and data analyses, specifying which quantities
are fitted and which remain fixed. Table~\ref{tab:comparator-ladder} lists the
comparator ladder used throughout the figures and the question each comparator
answers. Supplementary figures follow the order of
the main argument, from local theory through morphology and trained networks to
the V1 analyses.

\begin{table}[htbp]
\centering
\small
\setlength{\tabcolsep}{3.5pt}
\renewcommand{\arraystretch}{1.08}
\caption{\textbf{Comparator ladder.} Each row answers a different question; no
single baseline is used for every claim.}
\label{tab:comparator-ladder}
\begin{tabularx}{\textwidth}{@{}p{0.22\textwidth}p{0.25\textwidth}X@{}}
\toprule
Comparator & Fixed element & Purpose \\
\midrule
Fixed raw $E-I$ & Weights and recursion & Implemented-rule diagnostic \\
Local tangent & Shunting anchor and Jacobian & Finite-amplitude gating \\
Optimized positive E/I & Inputs and positive cone & Constrained additive capacity \\
Fitted $E-\rho I-b$ & Pooled E/I features & Ranking and calibration \\
Fitted linear & Raw observations & Unconstrained linear capacity \\
Validation-selected nonlinear & Raw observations and search budget & Passive nonlinear capacity and sample efficiency \\
\bottomrule
\end{tabularx}
\end{table}

\begin{table}[htbp]
\centering
\small
\setlength{\tabcolsep}{3.5pt}
\renewcommand{\arraystretch}{1.06}
\caption{\textbf{Operational meanings of matched comparisons.} Wherever
``matched'' names an experimental control, the main text states the fixed
quantity in the sentence itself. The remaining main-text uses are mathematical,
not control terms: ``tangent-matched'' labels the contrast defined by its own
equation, ``matched fractional changes'' means numerically equal
log-derivatives, and Fig.~\ref{fig:locality}'s ``matched'' denotes divisor
support that matches the nuisance.}
\label{tab:matching-controls}
\begin{tabularx}{\textwidth}{@{}p{0.26\textwidth}X@{}}
\toprule
Label & Quantity held fixed \\
\midrule
Conductance-mass matched & Total nonnegative E/I weight mass \\
Contact matched & Realized E/I mask counts; the E/I share may vary \\
Topology/resource matched & Branch graph, population width, active contacts, and parameter count \\
Same-policy mechanism control & Initialization, activation policy, optimizer, data, and nuisance draws \\
Equal-search complete models & Validation information and search budget; selected settings may differ \\
\bottomrule
\end{tabularx}
\end{table}

\begin{table}[htbp]
\centering
\caption{\textbf{Notation quick-reference (partial).}}
\label{tab:notation}
\footnotesize
\begin{tabularx}{\textwidth}{@{}lX@{}}
\toprule
Symbol & Meaning \\
\midrule
$X$ & upstream population activity vector; strict conductance analyses use $X\ge 0$ \\
$\xe,\xI$ & nonnegative excitatory/inhibitory pathway activities routed from $X$ \\
$\we,\wi$ & nonnegative excitatory/inhibitory synaptic weights \\
$E,I$ & nonnegative pooled excitatory/inhibitory drives ($E=\we^\top\xe$, $I=\wi^\top\xI$) \\
$N,T$ & implemented branch numerator and total conductance, Eq.~\eqref{eq:node-NT} \\
$V^{\add}, V^{\sh}$ & additive and shunting readouts \\
$\dprime^2$ & discriminability index (Eq.~\ref{eq:dprime}) \\
$M$ & total conductance/weight mass, $M=\lVert w\rVert_1$ \\
$P_E,P_I$ & learned excitatory/inhibitory somatic population widths; the trained configurations analyzed here use $P_I=0$ \\
$D_E,D_I$ & candidate feature dimensions available to E/I Top-$K$ projections \\
$N_E,N_I$ & configured upper caps on active E/I contacts per directly innervated branch (implementation fields \texttt{k\_e}, \texttt{k\_i}) \\
$N_{E,b}^{\rm real},N_{I,b}^{\rm real}$ & exact row-mask sums; respectively at most $\min(N_E,D_{E,b})$ and $\min(N_I,D_{I,b})$, with equality in the unmasked reported sweeps \\
$N_+,N_-$ & training-assigned positive-/negative-evidence counts in the fixed-template V1 diagnostic; not biological cell types \\
$B_{\rm dend}$ & non-somatic branches per somatic unit, Eq.~\eqref{eq:branch-resource-count} \\
$r$ & input-statistics regime index (Gaussian, nonlinear/saturating, Cox/Poisson, heavy-tailed, or V1-derived)  (never a response; cf.\ Eq.~\eqref{eq:latent-gain-model})\\
$\alpha(r)$ & signal--gain alignment in Fisher geometry (0 = orthogonal nuisance; 1 = fully aligned) \\
$\Gamma(r)$ & shared-gain magnitude before alignment projection (dimensionless; Methods~\ref{methods:regime-map}) \\
$G_{\mathrm{eff}}(r)$ & harmful aligned gain strength, $G_{\mathrm{eff}}(r)=\alpha(r)\Gamma(r)$ \\
$M_g(\mathcal T,r)$ & dimensionless surviving gain factor defined by
$\Var_g(w_{\sh}^{\top}h_{\sh})/\Var_{\rm ind}(w_{\sh}^{\top}h_{\sh})
=\alpha M_g\Gamma$ under the direction-preserving surrogate \\
$\mathcal T$ & branch tree (topology); $T$ always denotes total conductance \\
$\Lambda,\Lambda_{\mathcal T}$ & single-stage and tree-accumulated residual denominator-variance penalty \\
$B(\mathcal T,r)$ & optimized-comparator baseline fidelity $d_{0,\sh}^{\prime\,2}(\mathcal T,r)/d_{0,\add}^{\prime\,2}(r)$ of the divisive operating point \\
$B_{\mathrm{cas}}(\mathcal T,r)$ & fixed-cascade baseline fidelity, where both additive and shunting cascades can depend on morphology \\
$J_{q,\phi}$ & local Jacobian of the activated additive or shunting tree, $q\in\{\add,\sh\}$ \\
$\gamma_n$ & local branch-activation slope $\phi'_n(\bar V_n)$ \\
$\gamma_g$ & geometric mean per-stage attenuation of shared gain in the chain special case \\
$\rho,\rho_n$ & local shunting I/E sensitivity ratios ($\rho=E_0/(1+I_0)$ in one compartment; $\rho_n=\beta_n/\alpha_n$ in trees) \\
$\rho_{EE},\rho_{II},\rho_{EI}$; $\rho$ in Sec.~\ref{methods:exchangeable} & exchangeable pairwise correlations in saturation models \\
$\eta_n,\tau_{n\leftarrow b}$ & nonnegative self-gain and child-transfer coefficients of the passive additive tree recursion \\
\bottomrule
\end{tabularx}
\end{table}

\clearpage
\section{Methods}
\label{sec:app-methods}
\renewcommand{\figurename}{Figure}
\renewcommand{\thefigure}{\arabic{figure}}
\setcounter{secnumdepth}{3}

This section gives the derivations, assumptions, simulation details, dataset
processing, and reproducibility information supporting the main text.
Table~\ref{tab:notation} summarizes the notation used throughout.

\ifneuripsanonymous
A concurrent analysis uses the same architecture to ask a distinct
feedback-credit-assignment question; the present study concerns forward
population readout rather than feedback-field compression.
\else
\citet{SafaaiRichardsSabatini2026} use the same architecture to study local
credit assignment; the present study concerns forward population readout rather
than feedback-field compression.
\fi

\subsection{Biophysical and local comparator theory}
\label{methods:grp-theory}

\subsubsection{Biophysical derivation of the node equation}
\label{methods:node-derivation}

We derive the nondimensional node rule~\eqref{eq:node-voltage} from
Kirchhoff's current law at steady state.
Let $V_n$ denote the membrane potential at node $n$,
$g_L$ the leak conductance (reversal $E_L$),
$g_{E,n}$ the total excitatory conductance (reversal $E_E$),
$g_{I,n}$ the total inhibitory conductance (reversal $E_I$),
and $g_{n\leftarrow b}$ the axial conductance coupling child $b$ to node $n$.
Current balance requires
\[
  g_L(V_n - E_L)
  + g_{E,n}(V_n - E_E)
  + g_{I,n}(V_n - E_I)
  + \sum_b g_{n\leftarrow b}(V_n - V_b) = 0.
\]
Solving for $V_n$:
\[
  V_n = \frac{g_L E_L + g_{E,n} E_E + g_{I,n} E_I + \sum_b g_{n\leftarrow b} V_b}
             {g_L + g_{E,n} + g_{I,n} + \sum_b g_{n\leftarrow b}}.
\]
Define nondimensional variables:
$E_n := g_{E,n}(E_E - E_L)/(g_L(E_E - E_L)) = g_{E,n}/g_L$,
$I_n := g_{I,n}/g_L$,
$g_{n\leftarrow b}:=g_{n\leftarrow b}/g_L$,
and the inhibitory-reversal factor
$\chi_{\mathrm{rev}}:=(E_I-E_L)/(E_E-E_L)$.
Rescaling potentials by $(E_E - E_L)$ so that leak maps to 0 and excitatory reversal
to 1, and normalizing conductances by $g_L$, gives
\[
V_n
=
\frac{E_n+\chi_{\mathrm{rev}}I_n+\sum_b g_{n\leftarrow b}V_b}
     {1+E_n+I_n+\sum_b g_{n\leftarrow b}}.
\]
For pure shunting ($E_I=E_L$), $\chi_{\mathrm{rev}}=0$ and this reduces to
Eq.~\eqref{eq:node-voltage}.

An external denominator conductance has the same normalization: if
$g_{\mathrm{bg}}(t)$ has reversal $E_{\mathrm{bg}}=E_L$, then
$L(t)=g_{\mathrm{bg}}(t)/g_L$ adds to the denominator and not the numerator.
Its constant component changes input resistance and the operating point; its
fluctuating component changes the trial distribution. Stochastic E/I
conductances and balanced background barrages are established ways to produce
such high-conductance states and gain changes
~\citep{DestexhePare1999,ChanceAbbottReyes2002,FellousEtAl2003}. This mapping is
exact only near the reference reversal. For $E_{\mathrm{bg}}\ne E_L$, the same
physical input contributes both numerator current and denominator conductance.
Moreover, a phenomenological normalization pool need not be implemented by a
local inhibitory shunt; recurrent withdrawal of excitation can generate
divisive suppression as well~\citep{OzekiEtAl2009}. The denominator-only
DendriNet intervention should therefore be read as mechanism isolation, not as
a same-physical-input comparison of two biological neurons.

When the inhibitory reversal is slightly below leak
($E_I = E_L + \chi_{\mathrm{rev}}(E_E - E_L)$ with $\chi_{\mathrm{rev}}\approx -0.07$ to $-0.14$ for cortical
parameters), a small subtractive term $\chi_{\mathrm{rev}} I_n$ appears in the numerator,
giving the conductance-based subtractive comparator
$V_n^{\mathrm{cond}} = (E_n + \chi_{\mathrm{rev}} I_n + \sum_b g_{n\leftarrow b} V_b)/
(1 + E_n + I_n + \sum_b g_{n\leftarrow b})$.
We use this as a matched-mechanism control to separate denominator effects from
current-mode baseline choices.

\subsubsection{From the cable equation to the node-voltage model}
\label{methods:cable-derivation}

The single-equation node rule used throughout the paper is a conductance-based
feed-forward compartmental abstraction motivated by steady-state cable balance
in the high-conductance regime. Here we make that connection explicit and
document what the simplification keeps versus drops.

\paragraph{Steady-state cable equation.}
The continuous passive cable equation for a dendritic segment is
\[
  c_m\,\partial_t V(x,t) = \tfrac{1}{r_a}\partial_x^2 V(x,t)
                            - g_L (V - E_L)
                            - g_E(x,t)(V - E_E)
                            - g_I(x,t)(V - E_I),
\]
with $c_m$ the membrane capacitance, $r_a$ the axial resistance per unit length,
$g_L$ the leak, and $g_{E},g_I$ the excitatory and inhibitory synaptic conductance
densities. Spatially discretizing into compartments and going to steady state
($\partial_t = 0$) gives Kirchhoff's law at each node
\citep{Rall1962, Koch1999}. A full passive cable tree contains bidirectional
parent--child axial coupling. Eq.~\eqref{eq:node-voltage} is the reduced
directional pooling model used here, in which child subunit voltages are passed
proximally with fixed effective axial gains.

\paragraph{Validity of the steady-state approximation.}
The steady-state approximation is accurate when input fluctuations are slow
relative to the effective membrane time constant
$\tau_{\mathrm{eff}} = c_m / (g_L + g_E + g_I)$. In the high-conductance state
characteristic of cortical neurons in vivo \citep{DestexhePare1999,
DestexheRudolphPare2003}, $\tau_{\mathrm{eff}}$ falls to $\sim$5--15~ms, well
below typical perceptual integration windows. We additionally restrict the
model to moderate subthreshold inputs and omit active dendritic conductances;
under this passive modeling approximation, conductance currents sum linearly in
the numerator of Eq.~\eqref{eq:node-voltage}. The observed location dependence
of somatic EPSP efficacy \citep{MageeCook2000,WilliamsStuart2002} motivates the
morphology-specific effective axial gains; it does not establish exact
linearity of active dendrites.

\paragraph{What the simplification drops.}
The model omits four classes of mechanism:
\textbf{(i)~NMDA-receptor voltage-dependence}, which makes $g_E$ depend on
$V$, introducing a multiplicative gain on excitation;
\textbf{(ii)~regenerative dendritic events} (Ca$^{2+}$ plateaus, NMDA spikes,
Na$^+$ spikelets) that produce all-or-none nonlinear amplification;
\textbf{(iii)~active backpropagation} of axonal action potentials into the
tree, which can gate plasticity and resting state;
and \textbf{(iv)~time-dependent synaptic dynamics} (short-term depression,
facilitation). Each of these has been studied extensively
\citep{Larkum2013, StuartSpruston2015, Gidon2020} and is known to add to,
rather than replace, the passive integration the model captures.

\paragraph{Scope of active mechanisms.}
The gain--load--alignment principle is geometric rather than a detailed
biophysical claim. It may extend to active mechanisms when their net
input-output map remains approximately ratio-like over the relevant operating
range. Mechanisms that switch compartments into regenerative or all-or-none
regimes, such as Ca$^{2+}$ plateaus or strong NMDA spikes, fall outside the
present analysis
\citep{TranVanMinhEtAl2015, CazeHumphriesGutkin2013, PayeurBeiqueNaud2019}.

\paragraph{Multi-compartment abstraction.}
The model retains the local conductance divisor suggested by passive current
balance, but replaces the full coupled cable solve with a proximal recursion over
subunit outputs. Thus the axial terms $g_{n\leftarrow b}$ in
Eq.~\eqref{eq:node-voltage} should be read as effective feed-forward coupling
gains, not as a claim that the complete passive cable equation has been solved
exactly.

\subsubsection{Pooled moments from high-dimensional inputs}
\label{methods:pooled-moments}

All strict conductance-readout analyses start from nonnegative presynaptic
activities. Given nonnegative synaptic weights $\we,\wi\ge 0$ and
class-conditional input moments, the pooled drives
$E=\we^\top\xe$, $I=\wi^\top\xI$ have
\begin{align}
  \mu_{E,c} &= \we^\top\mu_{e,c}, \qquad
  \mu_{I,c} = \wi^\top\mu_{i,c},
  \label{eq:pooled-means}
  \\
  \Var(E\mid c) &= \we^\top\Sigma_{ee,c}\,\we, \qquad
  \Var(I\mid c) = \wi^\top\Sigma_{ii,c}\,\wi, \qquad
  \Cov(E,I\mid c) = \we^\top\Sigma_{ei,c}\,\wi.
  \label{eq:pooled-covs}
\end{align}
The class-averaged operating point is
$(E_0,I_0)=(\we^\top\bar\mu_e,\,\wi^\top\bar\mu_i)$ with
$\bar\mu_e=\tfrac12(\mu_{e,0}+\mu_{e,1})$ and
$\bar\mu_i=\tfrac12(\mu_{i,0}+\mu_{i,1})$.
The class means are nonnegative because $\xe,\xI\ge 0$; covariance and
cross-covariance entries can have either sign because they describe fluctuations
around nonnegative rates, not signed conductances.
For real V1 fixed-budget readouts, $\xe$ and $\xI$ are raw calcium-response
rates and are not z-scored. For bounded synthetic population codes, they are
generated directly in $[0,1]$. If a benchmark begins with signed features, it
must either use an additive comparator, an ON/OFF or dual-rail encoding, or a
nonnegative transfer activation before being interpreted as conductance input.
Accordingly, the bounded contextual-stream, positive Stringer, and raw V1
fixed-budget analyses are treated as strict conductance results, whereas
normalized image or normalized-neural-feature experiments are treated as ML
extensions rather than direct conductance-readout evidence.

\paragraph{Implementation of E/I pathways.}
In code, the strict E/I pathway construction has three steps. First, the upstream
representation is converted into nonnegative pathway activities
$(\xe,\xI)$, either because the data are already nonnegative rates or because a
nonnegative transfer function is applied before the conductance interpretation.
Second, excitatory and inhibitory synaptic parameters are mapped through a
positive transform, usually softplus, before forming pooled drives. Third, the
chosen compartment rule combines those pooled drives. The additive comparator
$E-I$ is allowed to be signed because it represents a hidden voltage or current
proxy; it is not interpreted as a negative firing rate. Any quantity that is
passed forward as biological activity is converted back to a nonnegative rate.

This distinction is important for covariance terms. Although $E$ and $I$ are
nonnegative on every trial, fluctuations around their class means can be
positively or negatively correlated. A negative entry in a covariance or
cross-covariance matrix therefore does not imply a negative conductance. It only
means that two nonnegative rates tend to fluctuate in opposite directions around
their means.

\subsubsection{Cone-constrained linear discriminant analysis}
\label{methods:cone-lda}

Here we give the full linear-discriminant-analysis (LDA) derivation and its
Karush--Kuhn--Tucker (KKT) optimality conditions.

\begin{algorithm}[t]
\caption{Exact two-orientation nonnegative least-squares (NNLS) solver for
cone-constrained LDA. Because the Rayleigh quotient is squared, solve both
$s\in\{-1,+1\}$ orientations and retain the larger primal--dual feasible
value.}
\label{alg:cone-lda}
\begin{algorithmic}[1]
\REQUIRE Comparator-coded class-mean gap $\Delta\mu_Z$, pooled covariance $\bar\Sigma_Z$ (coordinates $Z=(\xe,-\xI)$)
\ENSURE Cone-optimal direction $u\ge0$, selected orientation $s^*$, and optimum $\dprime^{2,*}$
\STATE Regularize $\bar\Sigma_Z$ only if needed for positive definiteness and factor $\bar\Sigma_Z=A^\top A$.
\FOR{$s\in\{-1,+1\}$}
  \STATE $\Delta_s\gets s\Delta\mu_Z$,\quad $b_s\gets A^{-\top}\Delta_s$
  \STATE $q_s\gets\arg\min_{q\ge0}\|Aq-b_s\|_2^2$ \COMMENT{nonnegative least squares}
  \STATE $g_s\gets\bar\Sigma_Zq_s-\Delta_s$,\quad $d_s\gets\Delta_s^\top q_s$
  \IF{$d_s\le0$ or $q_s\not\ge0$ or $g_s\not\ge0$ or $q_s\odot g_s\not\approx0$}
    \STATE mark orientation $s$ infeasible and \textbf{continue}
  \ENDIF
  \STATE $u^{(s)}\gets q_s/d_s$,\quad $A_s\gets\{j:q_{s,j}>0\}$
\ENDFOR
\STATE $s^*\gets\arg\max_{s\ \mathrm{feasible}}d_s$,\quad $\dprime^{2,*}\gets d_{s^*}$
\STATE \RETURN $u^{(s^*)}=(\we,\wi)$, $s^*$, and $\dprime^{2,*}$
\end{algorithmic}
\end{algorithm}

The cone program is written in comparator coordinates
$Z=(\xe,-\xI)$, but the feasible variables are always biological synaptic
weights $u=(\we,\wi)\ge 0$ applied to nonnegative inputs.
Unconstrained signed LDA is therefore a diagnostic baseline, not the feasible
conductance readout used in the theorem.
We solve $\max_{u\ge 0}(u^\top\Delta\mu_Z)^2/(u^\top\bar\Sigma_Z u)$ by
considering both orientations $\Delta_s=s\Delta\mu_Z$, $s\in\{-1,+1\}$.
For each feasible orientation, fix the scale $u^\top\Delta_s=1$ and minimize
$u^\top\bar\Sigma_Zu$ subject to $u\ge0$; the larger of the two feasible
Rayleigh values is the squared-objective optimum.
The Lagrangian is
$\mathcal{L}_s=u^\top\bar\Sigma_Z u - \lambda^\top u + \nu(u^\top\Delta_s - 1)$
with $\lambda\ge 0$.
The KKT conditions are:
\begin{align}
  2\bar\Sigma_Z u - \lambda + \nu\Delta_s &= 0,
  \label{eq:kkt-stat}
  \\
  u\ge 0,\quad \lambda\ge 0,\quad \lambda\odot u &= 0,
  \label{eq:kkt-comp}
  \\
  u^\top\Delta_s &= 1.
  \label{eq:kkt-scale}
\end{align}
Let $A:=\{j: u_j>0\}$ denote the active set.
Complementarity~\eqref{eq:kkt-comp} gives $\lambda_A=0$.
Restricting stationarity~\eqref{eq:kkt-stat} to $A$:
$2(\bar\Sigma_Z)_{AA}u_A + \nu(\Delta_s)_A = 0$, hence
\begin{equation}
  u_A \propto (\bar\Sigma_Z)_{AA}^{-1}(\Delta_s)_A,
  \qquad u_{A^c}=0.
  \label{eq:active-set-solution}
\end{equation}
The active set is valid if $u_A\ge 0$ (primal feasibility) and the implied
multipliers on $A^c$ satisfy $\lambda_{A^c}\ge 0$ (dual feasibility).
This provides the exact closed-form solution given the active set; in practice
we factor $\bar\Sigma_Z=A^\top A$ and solve the equivalent nonnegative
least-squares problem for each orientation. This enforces both primal and
inactive-coordinate dual feasibility. All reported numerical solutions use
this two-orientation procedure.

For orientation $s$, the optimal value is
$d_s=(\Delta_s)_{A_s}^\top(\bar\Sigma_Z)_{A_sA_s}^{-1}(\Delta_s)_{A_s}$;
the squared-objective optimum is $\dprime^{2,*}=\max_s d_s$ over feasible
orientations.
When $s\bar\Sigma_Z^{-1}\Delta\mu_Z\ge0$ for either orientation $s$, the
corresponding active set is the full index set and the
constrained optimum equals the unconstrained Fisher value
$\Delta\mu_Z^\top\bar\Sigma_Z^{-1}\Delta\mu_Z$.
As usual for a Rayleigh quotient, only the optimal cone direction and active set
are intrinsic; absolute weight scale is fixed by the chosen normalization
(here $u^\top\Delta_s=1$ in each oriented subproblem).

\subsubsection{Self-consistent local optimization for shunting}
\label{methods:self-consistent}

\begin{algorithm}[t]
\caption{Closed-form test and construction for a self-consistent local
shunting realization. The cone solve evaluates both signal orientations. The
Rayleigh quotient fixes only a ray, so realizability is tested after allowing
its otherwise arbitrary positive scale to vary.}
\label{alg:self-consistent}
\begin{algorithmic}[1]
\REQUIRE Class-conditional pooled means $\bar\mu_e,\bar\mu_i$; comparator-coded
moments $(\Delta\mu_Z,\bar\Sigma_Z)$; chosen interior fraction $0<f<1$
\ENSURE Self-consistent shunting weights $(\we^\star,\wi^\star)$ or failure
\STATE $(q,s^*,d^*)\gets\operatorname{ConeLDA}(\Delta\mu_Z,\bar\Sigma_Z)$ \COMMENT{best ray over $s\in\{-1,+1\}$}
\STATE $e\gets q_e^\top\bar\mu_e$,\quad $i\gets q_i^\top\bar\mu_i$
\IF{$e\le i$ or $e\le0$}
  \STATE \RETURN failure \COMMENT{the optimal ray has no positive shunting realization}
\ENDIF
\STATE $I_0\gets i/(e-i)$,\quad $r_{\max}\gets[4(1+I_0)]^{-1}$,
       \quad $r\gets f r_{\max}$
\STATE $\alpha\gets r/(e-i)$,\quad $k\gets r(1+I_0)$
\STATE $y\gets 2k/[1-2k+\sqrt{1-4k}]$,\quad $E_0\gets(1+I_0)y$
       \COMMENT{stable low-conductance root}
\STATE $D_0\gets1+E_0+I_0$,\quad
       $a_E\gets(1+I_0)/D_0^2$,\quad $|a_I|\gets E_0/D_0^2$
\STATE $\we^\star\gets\alpha q_e/a_E$,\quad
       $\wi^\star\gets\alpha q_i/|a_I|$
\STATE verify $\we^{\star\top}\bar\mu_e=E_0$,
       $\wi^{\star\top}\bar\mu_i=I_0$, and
       $(a_E\we^\star,|a_I|\wi^\star)=\alpha q$
\STATE \RETURN $(\we^\star,\wi^\star)$
\end{algorithmic}
\end{algorithm}

\paragraph{Why the operating points differ and how realizability is tested.}
The additive cone-LDA solution gives optimal weights $u^*_{\add}=(\we^{\star,\add},\wi^{\star,\add})$
directly from the comparator-coded moment statistics.
For shunting, the linearization coefficients $(a_E,|a_I|)$ depend on the
operating point $(E_0,I_0)=(\we^\top\bar\mu_e,\wi^\top\bar\mu_i)$, which
depends on the weights.
The two-orientation cone-LDA solution on the \emph{effective} weights
$\tilde u=(a_E\we,|a_I|\wi)$ supplies the additive-optimal candidate ray, but
mapping it back to synaptic weights gives
$\we^{\star,\sh}=(\tilde u^*_e)/a_E$ and $\wi^{\star,\sh}=(\tilde u^*_i)/|a_I|$.
Because $a_E\neq 1$ and $|a_I|\neq 1$ generically,
$(\we^{\star,\sh},\wi^{\star,\sh})\neq(\we^{\star,\add},\wi^{\star,\add})$.
These different weights induce a different operating point
$(E_0^{\sh},I_0^{\sh})\neq(E_0^{\add},I_0^{\add})$,
which in turn changes the linearization coefficients.
The apparent circularity can be resolved algebraically because the Rayleigh
objective fixes a direction but not its positive scale.

\paragraph{Equivalent operating-point self-consistency relations.}
Let $m_E:=\tilde u_e^\top\bar\mu_e$ and $m_I:=\tilde u_i^\top\bar\mu_i$ at a fixed point,
with $\tilde u=(a_E\we,|a_I|\wi)$ and $D_0=1+E_0+I_0$.
Using $a_E=(1+I_0)/D_0^2$ and $|a_I|=E_0/D_0^2$ gives
\begin{equation}
  m_E=\frac{E_0(1+I_0)}{D_0^2},
  \qquad
  m_I=\frac{E_0I_0}{D_0^2},
  \qquad
  \frac{m_I}{m_E}=\frac{I_0}{1+I_0}.
  \label{eq:op-self-consistent}
\end{equation}
Equation~\eqref{eq:op-self-consistent} makes explicit that the local operating
point is induced by the synaptic weights, not externally tuned. Write an
additive-optimal effective ray as $q$ and define
$e=q_e^\top\bar\mu_e$ and $i=q_i^\top\bar\mu_i$. A positive shunting
realization of some scaled version $\tilde u=\alpha q$ exists if and only if
$e>i$ (apart from degenerate zero-signal cases). Then
$I_0=i/(e-i)$ and
\[
  \alpha(e-i)=\frac{E_0}{(1+I_0+E_0)^2}.
\]
The right-hand side has maximum $[4(1+I_0)]^{-1}$, so any sufficiently small
$\alpha>0$ is feasible. Algorithm~\ref{alg:self-consistent} chooses a fixed
interior fraction of that maximum and verifies the induced means and effective
ray to numerical precision. If $e\le i$, changing scale cannot repair the mean
ratio and local equality with that additive-optimal ray is unavailable.

\paragraph{Practical numerical use.}
The implementation follows the same separation between direction and scale. For
local cone-LDA diagnostics, the two-orientation nonnegative least-squares solver returns a direction in the
positive comparator cone and a discriminability value. A normalization convention
then fixes a representative scale only for plotting operating points and
linearization coefficients. A pure upper mass cap alone does not change the
frozen-$D$ Rayleigh ray set. An equality or lower-bound budget can instead pin
the induced operating scale, and simultaneous E/I-specific constraints or
ratio bounds can remove rays; those cases must be checked together with
self-consistency and are analyzed as resource-regime changes.

For nonlinear shunting simulations, we do not infer an operating point from the
linear theorem alone. The pooled means and variances are either generated
directly by the simulation or induced by the learned weights, and the exact
nonlinear voltage is evaluated at those induced values. This is why the local
theorem and the beyond-local regime map play different roles: the theorem
establishes the matched local baseline, whereas the simulations measure the
curvature, gain, load, and depth effects that remain after leaving that baseline.

\paragraph{When the conditional equality is attained.}
At a fixed positive operating point, the diagonal rescaling
$(a_E,|a_I|)$ maps every shunting Jacobian ray into the additive cone. If the
two-orientation additive-optimal ray $q$ satisfies
$q_e^\top\bar\mu_e>q_i^\top\bar\mu_i$, the closed-form construction above
provides a positive self-consistent operating point that attains the same local
Rayleigh value; the mechanisms then
differ in the synaptic weights and operating points that realize it. If no such
ray exists, Theorem~\ref{thm:local-equivalence} guarantees only local
containment and the additive optimum is an upper bound.

\paragraph{Which additional constraints preserve the feasible-ray relation.}
Because $\dprime^2$ is scale invariant, the relevant object is the set of
feasible \emph{rays}, not equality of the vector-valued feasible sets.
Nonnegativity is preserved by the positive diagonal map
$u\mapsto \tilde u=(a_E\we,|a_I|\wi)$, and support/cardinality is preserved
because positive rescaling does not change which coordinates are active.
Equality still requires self-consistent realization of an optimal ray. In the
support-constrained simulation, 64/120 $K\le2$ additive optima admit such a
realization, and every realizable case ties exactly; Appendix
Supplementary Fig.~\ref{fig:local-equivalence-supp} shows this boundary.

A pure upper mass cap, $\sum_j w_j\le M$, also leaves the frozen-$D$ Rayleigh
ray set unchanged: any nonzero positive direction can be scaled down to satisfy
it. Writing the mapped cap as
\[
  \sum_{j\in E}\frac{\tilde u_j}{a_E}+\sum_{j\in I}\frac{\tilde u_j}{|a_I|}\le B
\]
changes coordinate scale but not the available positive rays. Budgets become
scientifically consequential when they pin or lower-bound the scale and thus
alter the induced $D(E_0,I_0)$ or self-consistent realizability, or when
multiple E/I-specific constraints restrict ratios and remove directions.
Gain-ratio bounds and simultaneous E/I equalities are examples of the latter.
By contrast, explicit per-branch count sweeps in a learned tree are a beyond-local
architectural regime change, not a counterexample to the local theorem itself.
At the learned-network level, analogous resource changes are best read as
beyond-local operating-point and architecture changes rather than
counterexamples to the cone-preserving local theorem.

\paragraph{Beyond local: how different operating points create divergence.}
At the self-consistent weights, the actual (nonlinear) shunting $\dprime^2$
differs from its linearized value by curvature, saturation, and
noise-propagation corrections that depend on operating point.
Because $(E_0^{\sh},I_0^{\sh})\neq(E_0^{\add},I_0^{\add})$, these
corrections differ between mechanisms, producing the beyond-local
divergences quantified in the regime map (Fig.~\ref{fig:regime-map}).

\subsubsection{Full proof of Theorem~\ref{thm:local-equivalence}}
\label{methods:local-equivalence-proof}

\begin{proof}
\emph{Domain.}
Throughout the proof, $\xe,\xI$ are nonnegative presynaptic activities and
$u=(\we,\wi)$ lies in the nonnegative synaptic cone.
The signed object $Z=(\xe,-\xI)$ is only a notational device for the additive
comparator voltage.

\emph{Containment.}
For shunting at any positive operating point $(E_0,I_0)$, linearization gives
$(\we,\wi)\mapsto\tilde u=(a_E\we,|a_I|\wi)$ with
$a_E=(1+I_0)/D_0^2>0$, $|a_I|=E_0/D_0^2>0$, and
$D_0=1+E_0+I_0$. Hence every positive shunting weight vector produces an
effective vector $\tilde u\ge0$ in the additive comparator cone. Taking the
union only over induced self-consistent operating points can restrict this set
but cannot enlarge it, so
$\mathcal C_{\sh}^{\rm loc}\subseteq\mathcal C_{\add}$. Maximizing the same
squared Rayleigh objective over a subset cannot exceed the maximum over the
full additive cone, proving
$\dprime^{2,*}_{\sh,\mathrm{loc}}\le\dprime^{2,*}_{\add}$.

\emph{Conditional equality.}
Algorithm~\ref{alg:cone-lda} evaluates both $s\Delta\mu_Z$ orientations and
returns an additive-optimal ray $\tilde u^*$. If that ray admits positive
weights satisfying $\tilde u^*=D(E_0,I_0)u$ at their own induced operating
point, then $\tilde u^*\in\mathcal C_{\sh}^{\rm loc}$. The subset and superset
then contain a common maximizer, so their optimal local values are equal. If
the realization does not exist, containment alone gives no equality claim. For
an optimal ray $q$ with positive input means, the construction in
Algorithm~\ref{alg:self-consistent} shows that
$q_e^\top\bar\mu_e>q_i^\top\bar\mu_i$ is the exact ray-scale realizability
test; equality may still hold through a different ray when the additive optimum
is degenerate.

\emph{Weights and operating points.}
When equality is realized, the inverse map is
$\we=(\tilde u^*)_e/a_E$, $\wi=(\tilde u^*)_i/|a_I|$. Since generically
$a_E\neq|a_I|$, these weights differ from the additive coordinates and induce a
different operating point. The degenerate $E_0=0$ case has $|a_I|=0$ and is not
an invertible positive realization.
\end{proof}

\subsubsection{Proofs of Corollary~\ref{cor:activation-local-tie} and
Proposition~\ref{prop:constraint-equivariant}}
\label{methods:corollary-proposition-proofs}

\begin{proof}[Proof of Corollary~\ref{cor:activation-local-tie}]
Apply a differentiable scalar activation $\phi$ with $\phi'(V_0)>0$ at the induced
operating point and linearize $y=\phi(V)\approx\phi(V_0)+\phi'(V_0)\,(V-V_0)$.
The constant $\phi(V_0)$ shifts both class-conditional means equally and cancels in
the mean difference, so $\Delta\mu_y=\phi'(V_0)\,\Delta\mu_V$, while
$\Var(y\mid c)=\phi'(V_0)^2\,\Var(V\mid c)$ to first order. The positive scalar
$\phi'(V_0)$ therefore cancels in the discriminability ratio,
\[
  \dprime^2(y)
  =\frac{(\Delta\mu_y)^2}{\tfrac12\Var(y\mid0)+\tfrac12\Var(y\mid1)}
  =\frac{\phi'(V_0)^2(\Delta\mu_V)^2}
        {\phi'(V_0)^2\big(\tfrac12\Var(V\mid0)+\tfrac12\Var(V\mid1)\big)}
  =\dprime^2(V).
\]
Hence applying $\phi$ to either local scalar readout leaves that readout's
moment-based discriminability unchanged. The containment of
Theorem~\ref{thm:local-equivalence} is therefore preserved, and equality holds
only under its self-consistent-realizability condition. Curvature of $\phi$
enters at the same order as the beyond-local terms and is outside this local
relation. Internal activation is instead described by the activated-tree
Jacobian in Eqs.~\eqref{eq:activated-tree-dprime}--
\eqref{eq:activated-shunting-local} and tested empirically in
Fig.~\ref{fig:ei-circuit}E--F.
\end{proof}

There is also an exact finite-amplitude statement for rank and threshold
metrics. If $\phi$ is strictly increasing, then
$V_i<V_j\Leftrightarrow\phi(V_i)<\phi(V_j)$, and every threshold set
$\{V>t\}$ is identical to $\{\phi(V)>\phi(t)\}$. Thus receiver operating
characteristic (ROC) AUC and the minimum
classification error over scalar thresholds are unchanged by a terminal
activation. This exact invariance does not extend to moment-based $d'^2$, to
calibration-sensitive losses, or to internal activation.

\begin{proof}[Proof of Proposition~\ref{prop:constraint-equivariant}]
The additive constrained problem maximizes the homogeneous Rayleigh quotient
\[
  \dprime^2_{\add}(u)
  =\frac{(u^\top\Delta\mu_Z)^2}{u^\top\bar\Sigma_Z u}
\]
over $u\in\mathbb{R}_+^{d_E+d_I}\cap\mathcal F$. In the shunting effective
coordinates
\[
  \tilde u=Du,
  \qquad
  D=\diag(a_E\1_{d_E},|a_I|\1_{d_I})\succ0,
\]
the shunting problem maximizes the \emph{same} Rayleigh quotient over
\[
  \{\tilde u:\,D^{-1}\tilde u\in\mathbb{R}_+^{d_E+d_I}\cap\mathcal F\}
  =
  D\big(\mathbb{R}_+^{d_E+d_I}\cap\mathcal F\big).
\]
Because $D$ is positive diagonal, it maps the nonnegative cone bijectively to
itself. The objective is constant on $[u]=\{cu:c>0\}$, so only the
projective feasible sets $\mathcal R(\mathcal F)$ and
$\mathcal R(D\mathcal F)$ matter. If those ray sets are equal, then at that
frozen operating point the two coordinate descriptions optimize the same
objective over the same local rays. A constrained tie follows only when a
common optimal ray has a positive, self-consistent shunting representative
$u_{\sh}\in\mathcal F$ whose induced map is this $D$ (and hence
$[Du_{\sh}]$ is the common effective ray). The unconstrained realization in
Theorem~\ref{thm:local-equivalence} need not satisfy a scale-pinning
$\mathcal F$; without an $\mathcal F$-feasible realization, only containment is
established.
If the ray sets differ, the same Rayleigh quotient is maximized over different
directions and the optima need not coincide.

This projective statement prevents a common error. An upper scale cap such as
$\sum_j w_j\le M$ is not, by itself, a local Rayleigh restriction: every
positive ray intersects that ball. The same is true of a single positive
equality normalization at a \emph{frozen} $D$. Such a budget can matter in the
self-consistent shunting problem only because it pins or bounds the operating
scale and thereby changes $D(E_0,I_0)$ or makes a ray unrealizable. Constraints
on E/I ratios, simultaneous E- and I-specific equalities/lower bounds, or other
conditions that remove projective rays can create a direct frozen-$D$ mismatch.
\end{proof}

\subsubsection{Delta-method approximation for shunting}
\label{methods:delta-method}

Let $f(E,I)=E/(1+E+I)$.
Here $E$ and $I$ are nonnegative pooled conductances.
When we use Gaussian moments in this subsection, the Gaussian is a local
approximation to pooled fluctuations around positive means, not a model that
allows negative conductances in the underlying biological circuit.
The gradient and Hessian at operating point $(\bar E,\bar I)$ are:
\begin{align}
  \nabla f &= \frac{1}{\bar D^2}
  \begin{pmatrix} 1+\bar I \\ -\bar E \end{pmatrix},
  \qquad \bar D:=1+\bar E+\bar I,
  \label{eq:gradient}
  \\
  H_f &= \frac{1}{\bar D^3}
  \begin{pmatrix}
    -2(1+\bar I) & \bar E-(1+\bar I) \\
    \bar E-(1+\bar I) & 2\bar E
  \end{pmatrix}.
  \label{eq:hessian}
\end{align}
This also gives a non-asymptotic scope check for the local theorem. For
$\delta=(\delta E,\delta I)$, suppose the segment from $(\bar E,\bar I)$ to
$(\bar E,\bar I)+\delta$ remains in the positive domain with
$1+E+I\ge s_{\min}>0$. Taylor's theorem and the maximum row-sum bound on
Eq.~\eqref{eq:hessian} imply
\begin{equation}
 \left|f(\bar E+\delta E,\bar I+\delta I)-f(\bar E,\bar I)
 -\nabla f(\bar E,\bar I)^\top\delta\right|
 \le \frac{3}{2s_{\min}^2}\lVert\delta\rVert_2^2.
 \label{eq:local-curvature-bound}
\end{equation}
The measured ratio of this exact remainder to the first-order voltage change
is therefore the appropriate local-fidelity diagnostic; morphology labels or
small weights alone do not certify the local regime.
A second-order Taylor expansion yields the delta-method approximations for
the class-conditional mean and variance of $V^{\sh}$:
\begin{align}
  \E[V^{\sh}\mid c]
  &\approx f(\mu_c) + \tfrac12\tr(H_f(\mu_c)\Sigma_c),
  \label{eq:dm-mean}
  \\
  \Var(V^{\sh}\mid c)
  &\approx \nabla f(\mu_c)^\top\Sigma_c\,\nabla f(\mu_c)
  + \tfrac12\tr(H_f\Sigma_c H_f\Sigma_c).
  \label{eq:dm-var}
\end{align}
Substituting into~\eqref{eq:dprime} yields a closed-form surrogate
$\widehat{\dprime}^2_{\sh}(\we,\wi)$ with explicit gradients for
projected-gradient optimization.

The first-order term in~\eqref{eq:dm-var} is the linearized (local) variance;
the second-order correction $\tfrac12\tr(H_f\Sigma_c H_f\Sigma_c)$ is the
curvature penalty, which is nonnegative and vanishes for additive readouts
($H_f=0$).
The approximation is accurate when
$\|\Sigma_c^{1/2}\|/\bar D\ll 1$, i.e., when pooled fluctuations are small
relative to the total conductance.

\subsubsection{Cox/shared-gain Poisson: explicit penalty}
\label{methods:cox}

For Poisson inputs $X_j\mid(g,\theta=c)\sim\mathrm{Poisson}(g\lambda_{c,j})$
with shared gain $g$ satisfying $\E[g]=1$, $\Var(g)=\sigma_g^2$, let
$\bar\lambda=(\lambda_0+\lambda_1)/2$ and
$\Delta\lambda=\lambda_1-\lambda_0$. The exact equal-prior pooled within-class
covariance is
\begin{equation}
  \bar\Sigma
  = \underbrace{\diag(\bar\lambda)
  +\sigma_g^2\bar\lambda\bar\lambda^\top}_{\Sigma_0}
  +\frac{\sigma_g^2}{4}\Delta\lambda\Delta\lambda^\top.
  \label{eq:cox-cov}
\end{equation}
Applying Sherman--Morrison first to $\Sigma_0$ gives
$\Sigma_0^{-1}=D^{-1}-\frac{\sigma_g^2 D^{-1}\bar\lambda\bar\lambda^\top D^{-1}}
{1+\sigma_g^2\bar\lambda^\top D^{-1}\bar\lambda}$
with $D=\diag(\bar\lambda)$, whence
$D^{-1}\bar\lambda=\1$ and $\bar\lambda^\top D^{-1}\bar\lambda=\sum_j\bar\lambda_j$.
The optimal weight ray is
$w^\star\propto D^{-1}\Delta\lambda
-\frac{\sigma_g^2}{1+\sigma_g^2\sum_j\bar\lambda_j}\1(\1^\top\Delta\lambda)$
because the final rank-one update changes $\bar\Sigma^{-1}\Delta\lambda$ only
by a scalar. Define the midpoint-rank-one discriminability
\[
  A
  :=
  \sum_{j}\frac{(\Delta\lambda_j)^2}{\bar\lambda_j}
  -
  \frac{\sigma_g^2}{1+\sigma_g^2\sum_j\bar\lambda_j}
  \Big(\sum_j\Delta\lambda_j\Big)^2.
\]
A second Sherman--Morrison update for the finite-contrast term in
Eq.~\eqref{eq:cox-cov} yields the exact optimal additive discriminability
\begin{equation}
  \dprime^{2,*}_{\add}
  = \frac{A}{1+(\sigma_g^2/4)A}.
  \label{eq:cox-penalty}
\end{equation}

The shared-gain correction inside $A$ grows with $\sigma_g^2$ and with
$|\sum_j\Delta\lambda_j|$ (the total signal that is aligned with the shared-gain
mode); the outer denominator is the additional finite-class-separation
covariance term and tends to one in the local small-$\Delta\lambda$ limit.
Branch-local shunting can reduce the corresponding gain variance through
divisive suppression because $\partial V^{\sh}/\partial g\propto \bar D^{-2}$, but
this is not a generic Poisson advantage: if the additive comparator is
unconstrained or has a matched E/I common-mode cancellation channel, much of this
rank-1 gain can already be removed. The shunting claim is therefore made for the
constrained, fixed-template, positive-cone readouts used in the paper, and for
gain modes that remain accessible to branch-local divisors.

\subsection{Gain--load--alignment theory}
\label{methods:regime-map}

We summarize the gain-vs-load competition with dimensionless parameters.

\emph{Gain magnitude $\Gamma$, alignment $\alpha$, and effective harmful gain
$G_{\mathrm{eff}}$}: let $a$ be the shared-gain loading and
$\Sigma_{\mathrm{ind}}$ the independent-noise covariance. We define
\[
  \Gamma(r):=\sigma_g^2\,a^\top\Sigma_{\mathrm{ind}}^{-1}a,
  \qquad
  \alpha(r):=
  \frac{(\Delta\mu^\top\Sigma_{\mathrm{ind}}^{-1}a)^2}
       {(\Delta\mu^\top\Sigma_{\mathrm{ind}}^{-1}\Delta\mu)
        (a^\top\Sigma_{\mathrm{ind}}^{-1}a)}.
\]
The harmful gain strength along the signal/readout direction is
$G_{\mathrm{eff}}(r):=\alpha(r)\Gamma(r)$. Thus $\alpha=0$ corresponds to an
orthogonal nuisance, $\alpha=1$ to a fully signal-aligned gain mode, and
$G_{\mathrm{eff}}\gg 1$ means signal-aligned gain dominates the accessible
readout variance.

\emph{Residual denominator-variability strength $\Lambda$}:
\[
  \Lambda :=
  \frac{(\bar E/\bar D^2)^2\sigma_\xi^2}{\Var_{\mathrm{ind}}(V)}
\]
for a single divisive stage, where $\xi=\delta L$ is explicitly zero mean and
$\Var_{\mathrm{ind}}(V)$ is the shunting-output variance at the same frozen
operating point with shared gain and the externally injected fluctuation removed.
$\Lambda\gg 1$ means residual
denominator variability is the dominant limitation for shunting. A constant
mean $L_0$ has $\Lambda=0$ and enters through the operating point and $B$;
class-correlated $\Delta L_y$ additionally changes the signal. Raw synthetic
input means labeled 0/3 elsewhere are not values of $\Lambda$.
For a general ratio $R=N/Q$, numerator--denominator covariance makes the
effective added term proportional to
$r_0^2\Var(\delta Q)-2r_0\Cov(\delta N,\delta Q)$ and it can be signed.
We reserve the nonnegative symbol $\Lambda$
for the independent, externally injected $\delta L$ intervention; correlated
pool activity is analyzed with the full covariance expression.

\emph{Baseline fidelity $B(\mathcal T,r)$}:
Relative to the nuisance-free optimized passive baseline,
$B(\mathcal T,r):=d_{0,\sh}^{\prime\,2}(\mathcal T,r)/d_{0,\add}^{\prime\,2}(r)$, measured with
shared gain and residual denominator variability removed. The additive denominator is the optimized
passive additive comparator after deterministic path gains have been absorbed
into reoptimized weights. In fixed passive-cascade diagnostics we instead report
$B_{\mathrm{cas}}(\mathcal T,r):=
d_{0,\sh,\mathrm{cas}}^{\prime\,2}(\mathcal T,r)/
d_{0,\add,\mathrm{cas}}^{\prime\,2}(\mathcal T,r)$; this realized cascade quantity can
vary with morphology for both mechanisms. In either case, values near one mean
the denominator estimates a nuisance while preserving the baseline signal,
whereas values below one flag signal-bearing division that attenuates useful
discriminability.

\emph{Fixed-template and reoptimized rank-one criteria.}
Let $\Delta\mu$ be the signal and
$\Sigma=\Sigma_{\mathrm{ind}}+\sigma_g^2aa^\top$. The nuisance-free optimal
direction is $w_0\propto\Sigma_{\mathrm{ind}}^{-1}\Delta\mu$, and along that
fixed direction
\[
 \frac{\sigma_g^2(w_0^\top a)^2}
 {w_0^\top\Sigma_{\mathrm{ind}}w_0}=\alpha\Gamma.
\]
This gives
$d_{\add,\mathrm{fix}}^{\prime\,2}/d_0^{\prime\,2}
=1/(1+\alpha\Gamma)$, the additive arm of
Eq.~\eqref{eq:regime-caricature}. By contrast, reoptimizing unconstrained LDA
under the perturbed covariance gives exactly
\[
 \frac{\Delta\mu^\top\Sigma^{-1}\Delta\mu}
 {\Delta\mu^\top\Sigma_{\mathrm{ind}}^{-1}\Delta\mu}
 =1-\frac{\alpha\Gamma}{1+\Gamma},
\]
by Sherman--Morrison. These expressions are equal at $\alpha=1$ and differ away
from it; for example, at $\alpha=0.1$ and $\Gamma=100$ they retain $0.091$ and
$0.901$ of baseline discriminability, respectively. This additive identity is
exact for any positive-definite $\Sigma_{\rm ind}$ with one rank-one nuisance
mode. For the shunting surrogate, we additionally assume a direction-preserving
geometry: residual load rescales normalized independent covariance by the scalar
$c_{\mathcal T}=1+\Lambda_{\mathcal T}$, shunting rescales the same nuisance direction to strength
$M_g\Gamma$, and signal--nuisance alignment remains $\alpha$. Under those
assumptions, multiplying by baseline fidelity $B$ yields
Eq.~\eqref{eq:noise-resilience-reoptimized}. Generic tree-induced rotations or
full-covariance changes require numerical reoptimization. The fixed-template comparison yields
Eq.~\eqref{eq:tree-criterion}; the reoptimized comparison uses
$R_{\sh}^{\mathrm{opt}}>R_{\add}^{\mathrm{opt}}$.

For a matched-input comparison, both frozen scores can acquire gain and pool
terms. Writing
\[
 G_j=\frac{d_{0,j}^{\prime2}}{d_{g,j}^{\prime2}}-1,
 \qquad
 L_j=\frac{d_{0,j}^{\prime2}}{d_{l,j}^{\prime2}}-1,
 \qquad j\in\{a,\sh\},
\]
the separated-denominator approximation gives
$\widehat d_{g+l,j}^{\prime2}=d_{0,j}^{\prime2}/(1+G_j+L_j)$.
Comparing the two expressions yields Eq.~\eqref{eq:matched-input-margin}.
These empirical $L_j$ can be signed because sensor corruption can change the
signal and numerator--denominator covariance; they are not identified with the
nonnegative independent-load coordinate $\Lambda_{\mathcal T}$. Likewise $G_j$ is a
reciprocal-$d'^2$ degradation, not pure variance inflation unless the signal gap
is preserved. The approximation accommodates multirank gain but does not
replace full-covariance decoder reoptimization.

For shunting through a tree, local linearization propagates gain and load
sensitivities to the soma. Operationally, under the same direction-preserving
surrogate used above, $M_g$ is the dimensionless factor satisfying
\[
 \frac{\Var_g(w_{\sh}^{\top}h_{\sh})}
      {\Var_{\rm ind}(w_{\sh}^{\top}h_{\sh})}
 =\alpha M_g(\mathcal T,r)\Gamma .
\]
This defines the product $\alpha M_g\Gamma$ even at $\alpha\Gamma=0$. When a
single gain loading is propagated through positive paths, a normalized path
approximation is
\[
  M_g(\mathcal T,r)
  \approx
  \left(
  \frac{\sum_{p\in T}\omega_p\prod_{n\in p}\kappa_n}
       {\sum_{p\in T}\omega_p}
  \right)^2,
\]
where $\kappa_n$ is the local gain-sensitivity factor at node $n$ and
$\omega_p\ge0$ is the operating-point path weight. The normalization makes
$M_g=1$ when every stage transmits the gain unchanged; generic rotations or
signed effective paths require direct variance measurement rather than this
approximation. For independent load injected at distinct nodes, the load term
accumulates as
\[
  \Lambda_{\mathcal T}\approx
  \frac{1}{\Var_{\mathrm{ind}}(V_{\rm soma})}
  \sum_{n\in T}
  \left(\frac{\partial V_{\rm soma}}{\partial V_n}\right)^2
  \left(\frac{\bar N_n}{\bar D_n^2}\right)^2\sigma_{\xi,n}^2,
\]
where $N_n=E_n+C_n$ is the complete local numerator, including child current,
$\bar D_n=1+\bar E_n+\bar I_n+\sum_b g_{n\leftarrow b}=1+\bar T_n$ is the
operating-point denominator, and
$\partial V_{\rm soma}/\partial V_n$ is only the downstream
transfer from node voltage to soma. This avoids both omitting child current and
counting the local denominator derivative twice.
Combining baseline fidelity, surviving gain, and load yields
Eq.~\eqref{eq:regime-caricature}, the sign rule in
Eq.~\eqref{eq:tree-criterion}, and the depth increment in
Eq.~\eqref{eq:depth-advantage-margin}. The familiar tie boundary
$(1-\gamma_g^{2L})G_{\mathrm{eff}}=\Lambda_L$ is the special chain case with
$\alpha=1$, $B(\mathcal T,r)\approx1$, $M_g(\mathcal T,r)=\gamma_g^{2L}$, and
depth-accumulated load $\Lambda_L$.
The tree caricature holds the nuisance-free readout direction fixed and is a leading-order
(small-penalty) approximation rather than a re-solved cone program at every grid
point: it tracks the gain-induced variance inflation and the accumulated
denominator load to first order, and treats the gain and load contributions as
decoupling in the denominator. Its additive form is exact for the
fixed-direction rank-one covariance model. It agrees to first order with the
subtractive Cox/shared-gain penalty in Eq.~\eqref{eq:cox-penalty} once
$G_{\mathrm{eff}}=\alpha\Gamma$ is identified with the signal-aligned
gain-to-independent variance ratio \emph{and} the finite-contrast correction is
local, $(\sigma_g^2/4)A\ll1$. Thus the reduced
caricature and the subtractive Cox optimum describe the same shared-gain effect
in their common local regime. Figure~\ref{fig:regime-map}C instead displays the
fixed-direction and covariance-reoptimized rank-one values separately.
Here the per-stage gain-suppression factors $\kappa_1,\ldots,\kappa_L$ whose
geometric mean is $\gamma_g$ are distinct from the axial coupling coefficients
$\kappa_{n\leftarrow b}$ of the tree linearization
(Eq.~\eqref{eq:tree-coefs-kappa}); the two are related but not assumed equal.

\paragraph{What is held fixed in the regime map.}
Figure~\ref{fig:regime-map} is not a fresh positive-cone synaptic optimization at
every grid point. It is a reduced pooled-drive analysis that holds the baseline
signal and independent variance fixed while varying shared-gain strength and
denominator load. Panel B shows the common fully aligned boundary; panel C
separates a fixed nuisance-free decoder from covariance-reoptimized unconstrained
LDA. Neither curve resolves a binding positive-E/I cone or globally optimizes the
nonlinear shunting map. The random-library local optimization in
Supplementary Fig.~\ref{fig:local-equivalence-supp} is the separate check that the
matched local comparator is implemented correctly.

The same distinction applies to the empirical diagnostics. The V1 analysis is
reported as a fixed-budget population-data diagnostic: it compares outer-fold
held-out ranking, threshold error, and calibration across candidate access and a
prespecified effective leak/input-scale grid, with training-only scale selection
and label-free running correction as sensitivities. We do not fit a unique
$(G_{\mathrm{eff}},\Lambda)$ coordinate for V1, because that would require an
additional generative model for the recorded population and the unobserved
denominator noise. The V1 result therefore tests a bounded fixed-feature
signature; it does not identify a load coordinate or estimate the exact
phase-boundary location.

\subsubsection{Additive tree flattening}
\label{methods:tree-flattening}

Algorithm~\ref{alg:tree-flatten} computes the effective path gains used in
Proposition~\ref{prop:additive-flattening}.

\begin{algorithm}[t]
\caption{Additive tree flattening: leaf-to-soma path-gain accumulation yielding the equivalent single-layer effective weights $a_n^{\add}$ (Prop.~\ref{prop:additive-flattening}).}
\label{alg:tree-flatten}
\begin{algorithmic}[1]
\REQUIRE Rooted tree $\mathcal{T}$ with soma $s$; for each node $n$ its parent $p(n)$, children $\{b: p(b)=n\}$, axial couplings $g_{n\leftarrow b}\ge0$; mode $\in\{\text{analytic},\text{current}\}$.
\ENSURE Path gains $\{a_n^{\add}\}_{n\in\mathcal{T}}$ with $V_s^{\add}=\sum_{n\in\mathcal{T}} a_n^{\add}(E_n-I_n)$.
\FOR[{set self- and child-edge transfers $t$}]{each node $n\in\mathcal{T}$}
  \IF[{leak-normalized analytic cascade}]{mode $=$ analytic}
    \STATE $Z_n^{\add} \gets 1+\sum_{b'} g_{n\leftarrow b'}$
    \STATE $t_{n\leftarrow n}\gets 1/Z_n^{\add}$;\quad $t_{n\leftarrow b}\gets g_{n\leftarrow b}/Z_n^{\add}$ for each child $b$
  \ELSE[{raw current-mode}]
    \STATE $t_{n\leftarrow n}\gets 1$;\quad $t_{n\leftarrow b}\gets g_{n\leftarrow b}$ for each child $b$
  \ENDIF
\ENDFOR
\STATE $q_s\gets 1$;\quad $a_s^{\add}\gets q_s t_{s\leftarrow s}$
\COMMENT{empty soma path product}
\FOR{each non-soma node $n\in\mathcal{T}$ in order of nondecreasing depth from $s$}
  \STATE $q_n\gets q_{p(n)}t_{p(n)\leftarrow n}$;\quad
  $a_n^{\add}\gets q_n t_{n\leftarrow n}$
  \COMMENT{edge-path product times the source self-gain}
\ENDFOR
\STATE \RETURN $\{a_n^{\add}\}_{n\in\mathcal{T}}$
\end{algorithmic}
\end{algorithm}

\begin{proposition}[Exact flattening]
\label{prop:additive-flattening}
For any rooted tree with passive additive recursion
$V_n^{\add}=\eta_n(E_n-I_n)+\sum_b \tau_{n\leftarrow b}V_b^{\add}$ and
$\eta_n,\tau_{n\leftarrow b}\ge0$,
the somatic voltage is an exact linear function of all local signed drives:
$V_s^{\add}=\sum_{n\in\mathcal{T}} a_n^{\add}(E_n-I_n)$.
\end{proposition}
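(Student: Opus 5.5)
The plan is to argue by structural induction on the rooted tree, processing nodes from the leaves toward the soma. The claim I will prove at each node $n$ is stronger than the statement: the voltage $V_n^{\add}$ is an exact linear function of the signed drives $(E_m-I_m)$ over the subtree $\mathcal T_n$ rooted at $n$, with coefficients equal to path products. Concretely, the induction hypothesis will be
\[
V_n^{\add}=\sum_{m\in\mathcal T_n}\Big(\prod_{(u\leftarrow v)\in\pi(n,m)}\tau_{u\leftarrow v}\Big)\eta_m\,(E_m-I_m),
\]
where $\pi(n,m)$ is the unique directed path of child-to-parent edges from $m$ up to $n$. When $m=n$ the path is empty and the product is taken to be $1$.

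For the base case, take $n$ to be a leaf. The sum over children is empty, so $V_n^{\add}=\eta_n(E_n-I_n)$, which matches the formula with the empty product.

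For the inductive step, take an internal node $n$ and substitute the hypothesis for each child $b$ into the recursion $V_n^{\add}=\eta_n(E_n-I_n)+\sum_b\tau_{n\leftarrow b}V_b^{\add}$. Multiplying the child's expansion by $\tau_{n\leftarrow b}$ prepends the edge $n\leftarrow b$ to every path from a node $m\in\mathcal T_b$. Because the tree is rooted, the subtrees $\mathcal T_b$ of distinct children are disjoint, and together with $\{n\}$ they partition $\mathcal T_n$. So each $m\in\mathcal T_n$ receives its coefficient exactly once: $\eta_n$ when $m=n$, and $\tau_{n\leftarrow b}\prod_{\pi(b,m)}\tau\cdot\eta_m$ when $m$ lies under child $b$. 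This is the path product for $\pi(n,m)$, which closes the induction. Setting $n=s$ then gives $V_s^{\add}=\sum_{n}a_n^{\add}(E_n-I_n)$, with $a_n^{\add}=\eta_n\prod_{\pi(s,n)}\tau\ge0$.

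Finally, I will note that this closed form is exactly what Algorithm~\ref{alg:tree-flatten} computes. The running product $q_n=q_{p(n)}t_{p(n)\leftarrow n}$ accumulates the edge products, and $t_{n\leftarrow n}$ plays the role of $\eta_n$. The analytic mode corresponds to $\eta_n=1/Z_n^{\add}$ and $\tau_{n\leftarrow b}=g_{n\leftarrow b}/Z_n^{\add}$; the current mode corresponds to $\eta_n=1$ and $\tau_{n\leftarrow b}=g_{n\leftarrow b}$.

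No step is a real obstacle, since the recursion is linear with coefficients that do not depend on the drives, and linear maps compose to linear maps. The one point that needs care is bookkeeping: I must use the tree property (unique parent, no cycles) so that each node's drive reaches the soma along exactly one path. In a graph that is not a tree, a node could reach the soma by several paths, and its coefficient would be a sum over those paths rather than a single product.
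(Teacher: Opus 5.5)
Your proposal is correct and follows the paper's proof, which likewise unrolls the linear recursion from the leaves to the soma. The paper obtains $a_n^{\add}$ as the product of edge transfers along the unique path times the source self-gain $\eta_n$. Your structural induction makes that unrolling explicit, and your mapping of the analytic and current-mode coefficients onto $(\eta_n,\tau_{n\leftarrow b})$ agrees with the paper's.
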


\begin{proof}
Define $t_{n\leftarrow n}:=\eta_n$ and
$t_{n\leftarrow b}:=\tau_{n\leftarrow b}$. The recursion is already linear:
$V_n^{\add}=t_{n\leftarrow n}(E_n-I_n)+\sum_b t_{n\leftarrow b}V_b^{\add}$.
Unrolling from leaves to soma:
$V_s^{\add}=\sum_{n\in\mathcal{T}} a_n^{\add}(E_n-I_n)$
where the path gain $a_n^{\add}=t_{p_L\leftarrow p_{L-1}}\cdots t_{p_1\leftarrow p_0}\cdot t_{p_0\leftarrow p_0}$ for the unique path $n=p_0\to p_1\to\cdots\to p_L=s$.
For the leak-normalized analytic cascade,
$t_{n\leftarrow n}=1/(1+\sum_{b'}g_{n\leftarrow b'})$ and
$t_{n\leftarrow b}=g_{n\leftarrow b}/(1+\sum_{b'}g_{n\leftarrow b'})$, so distal
path gains are attenuated multiplicatively. For the raw current-mode
implementation, $t_{n\leftarrow n}=1$ and $t_{n\leftarrow b}=g_{n\leftarrow b}$;
the same flattening identity holds, but fixed-budget scaling and gradients can
differ from the leak-normalized analytic cascade.
\end{proof}

\subsubsection{Shunting tree linearization}
\label{methods:tree-linearization}

At each node $n$, differentiating~\eqref{eq:node-voltage} at the operating point gives
\begin{equation}
  \delta V_n^{\sh}
  \approx
  \alpha_n\,\delta E_n - \beta_n\,\delta I_n
  + \sum_b \kappa_{n\leftarrow b}\,\delta V_b^{\sh},
  \label{eq:tree-linearization}
\end{equation}
with explicit coefficients:
\begin{align}
  \alpha_n &= \frac{1+\bar I_n+\sum_b g_{n\leftarrow b}(1-\bar V_b)}{\bar D_n^2},
  \qquad
  \beta_n = \frac{\bar E_n+\sum_b g_{n\leftarrow b}\bar V_b}{\bar D_n^2},
  \label{eq:tree-coefs-ab}
  \\
  \kappa_{n\leftarrow b} &= \frac{g_{n\leftarrow b}}{\bar D_n},
  \qquad
  \bar D_n = 1+\bar E_n+\bar I_n+\sum_{b'} g_{n\leftarrow b'}=1+\bar T_n.
  \label{eq:tree-coefs-kappa}
\end{align}
Unrolling yields~\eqref{eq:sh-tree-linearize} with shunting path gains
$a_n^{\sh}$ being products of $\kappa$ and $\alpha$ factors along the path.

For the two-stage chain (distal node 1, proximal node 0, coupling $g$):
$\alpha_1=(1+\bar I_1)/\bar D_1^2$,
$\beta_1=\bar E_1/\bar D_1^2$,
$\kappa=g/\bar D_0$, and
$\rho_1=\bar E_1/(1+\bar I_1)$,
$\rho_0=(\bar E_0+g\bar V_1)/(1+\bar I_0+g(1-\bar V_1))$.

\subsubsection{Tree optimization and self-consistent chains}
\label{methods:tree-optimization}

For additive trees, the exact linearity~\eqref{eq:add-tree-flatten} means
optimizing $\dprime^2$ over all synaptic weights is a single cone-constrained
LDA problem on the morphology-weighted super-library.
If nonlinear post-voltage activations are inserted between tree stages, this exact
flattening no longer holds globally; the exact result is specific to passive
additive recursion.

For shunting trees, all nodewise linearization coefficients depend jointly on
the induced operating points. A putative local optimum must therefore satisfy a
coupled fixed-point condition. For a two-stage chain one can define
\begin{equation}
  (w_{e,0}^\star,w_{e,1}^\star,w_{i,0}^\star,w_{i,1}^\star)
  =
  \mathcal{T}_{\mathrm{chain}}(w_{e,0}^\star,w_{e,1}^\star,w_{i,0}^\star,w_{i,1}^\star),
  \label{eq:chain-fixed-point}
\end{equation}
where $\mathcal{T}_{\mathrm{chain}}$ is the composition
``weights $\mapsto$ operating points $\mapsto$ linearization coefficients
$\mapsto$ exact two-orientation cone-LDA $\mapsto$ weights.'' Unlike the
single-compartment ray construction, naive repeated application of this map has
no general contraction or global-optimality guarantee. Any numerical candidate
would need explicit primal/dual feasibility and fixed-point-residual checks.
We therefore evaluate specified pooled-drive recursions in the analytic tree
experiments and optimize the full nonlinear loss by backpropagation in trained
trees; we do not claim a globally optimized shunting-tree cone solution.
With a smooth monotone post-voltage activation $\phi$, the appropriate local
object is the activated-tree Jacobian,
\begin{equation}
  d_{q,\phi}^{\prime\,2}
  \approx
  \frac{\big(J_{q,\phi}\Delta\mu\big)^2}
       {J_{q,\phi}\bar\Sigma J_{q,\phi}^{\top}},
  \qquad
  q\in\{\add,\sh\}.
  \label{eq:activated-tree-dprime}
\end{equation}
For the shifted-tanh family used in the learned-network runs (with parameters
allowed to differ by mechanism),
\[
  \phi_n^q(V)=\frac{1+\tanh(\kappa_{\phi,n}^q(V-b_n^q))}{2},
  \qquad
  \gamma_n^q=(\phi_n^q)'(\bar V_n^q)
  =\frac{\kappa_{\phi,n}^q}{2}\operatorname{sech}^2\!\big(\kappa_{\phi,n}^q(\bar V_n^q-b_n^q)\big).
\]
Equivalently, each passive local derivative is multiplied by the relevant
node-wise slope before it is propagated to the soma. For an activated additive
branch,
\begin{equation}
  \delta Y_n^{\add}
  =
  \gamma_n^{\add}
  \left(
    \delta E_n-\delta I_n+\sum_c a_{nc}\delta Y_c^{\add}
  \right),
  \label{eq:activated-additive-local}
\end{equation}
whereas if the passive shunting voltage is $V_n=N_n/D_n$, then at the
activated-tree operating point
\begin{equation}
  \delta Y_n^{\sh}
  =
  \gamma_n^{\sh}
  \left(
    \frac{\delta N_n}{\bar D_n}
    -
    \frac{\bar N_n}{\bar D_n^2}\delta D_n
  \right).
  \label{eq:activated-shunting-local}
\end{equation}
Unrolling either recursion shows why a terminal activation and internal
activation are not equivalent.  The somatic slope is common to every input
path and cancels from the local scalar $d'^2$, but each internal path also
acquires the product of the slopes of the branches it traverses.  Paths of
different depth or operating point are therefore reweighted differently.
Saturation or class-dependent operating points add further operating-point
dependence and can change the full-tree optimum beyond the passive local theorem.

For a common-gain sweep, the corresponding scalar chain rule is exact wherever
$Y=\phi(V(g))>0$ and $V\ne0$:
\begin{equation}
 \eta_Y(g)
 :=\frac{\partial\log Y}{\partial\log g}
 =\frac{V\phi'(V)}{\phi(V)}\,
  \frac{\partial\log|V|}{\partial\log g}
 =A_\phi(V)\eta_V(g).
 \label{eq:activated-elasticity-appendix}
\end{equation}
For $\phi(V)=[1+\tanh(\kappa_\phi(V-b))]/2$,
$A_\phi(V)=\kappa_\phi V[1-\tanh(\kappa_\phi(V-b))]
=2\kappa_\phi V[1-\phi(V)]$.
Thus an activated additive response has elasticity $A_\phi(V)$ rather than a
constant one, whereas an activated shunting response has
$A_\phi(V)/(1+gT_0)$ in the positive scalar scaling example.  Activation
saturation and divisive gain suppression are distinct multiplicative factors.

For Supplementary Fig.~\ref{fig:activation-sensitivity-supp}, we avoid
confounding activation with the different native voltage scales of the two
rules. Let $\bar V_q>0$ be the passive baseline voltage for
$q\in\{\add,\sh\}$ and define the matched terminal gate
\begin{equation}
 \phi_q(V;\kappa,z)
 =\frac{1+\tanh\!\left[\kappa(V/\bar V_q-1)-z\right]}{2}.
 \label{eq:matched-terminal-activation}
\end{equation}
Equivalently, $\kappa_{\phi,q}=\kappa/\bar V_q$ and
$b_q=\bar V_q(1+z/\kappa)$. At $z=0$ both rules are centered at their own
baseline and have the same dimensionless local multiplier
$A_{\phi_q}(\bar V_q)=\kappa$. Varying $\kappa$ therefore tests curvature while
holding the baseline comparison locally matched; varying $z$ tests midpoint
or operating-point mismatch. We apply this gate to the exact Monte Carlo
voltages generated on the same $7\times9$ gain--load grid as
Fig.~\ref{fig:ei-circuit}D, with eight paired seeds and 120,000
trials per seed. The gate is terminal only; applying it recursively would
instead require the pathwise Jacobian products above.

Under the same direction-preserving variance decomposition used for
Eq.~\eqref{eq:regime-caricature}, the gain--load--alignment principle retains its
algebraic form for activated trees if $B$, $M_g$, and $\Lambda_{\mathcal T}$ are all
defined from $J_{q,\phi}$ (or from the corresponding finite-amplitude activated
representations).  Reusing their passive values would not be valid: internal
$\gamma_n$ factors can alter both the surviving gain mode and baseline signal
fidelity.

\subsubsection{Ordered depth importance and Schur increments}
\label{methods:importance}

For a Gaussian representation $Z=(Z_1,\ldots,Z_L)$ and an explicitly chosen
conditioning set $C$, the Schur contribution of block $\ell$ beyond $C$ is
\begin{equation}
  \mathcal{I}_{\ell\mid C}
  =
  \Delta\mu_{\ell\mid C}^\top\Sigma_{\ell\mid C}^{-1}
  \Delta\mu_{\ell\mid C},
  \label{eq:schur-layer}
\end{equation}
where $\Sigma_{\ell\mid C}=\Sigma_{\ell\ell}-\Sigma_{\ell C}\Sigma_{CC}^{-1}\Sigma_{C\ell}$
is the conditional covariance and
$\Delta\mu_{\ell\mid C}=\Delta\mu_\ell-\Sigma_{\ell C}\Sigma_{CC}^{-1}\Delta\mu_C$
is the residual mean difference.
For the ordered profile in Supplementary Fig.~\ref{fig:tree-depth}C, layer 1 is
distal and indices increase toward the soma; the code sets
$C_\ell=\{1,\ldots,\ell-1\}$ and equivalently computes the increment in
$d'^2(Z_1,\ldots,Z_\ell)$ when the next, more-proximal layer is appended. By
block matrix inversion, these sequential increments sum to the Fisher
discriminability of the full ordered representation, and their normalized
profile $\pi_\ell$ sums to one. This attribution is exact for the declared
distal-to-proximal order but is not order invariant. Appendix
Supplementary Fig.~\ref{fig:tree-depth}E instead conditions each of two sibling
branches on its complement to compare that branch-specific Schur value with a
matching ablation; those two conditional values are not treated as one additive
partition.

\paragraph{Spatial placement of correlated inputs.}
The Schur increment makes explicit why branch placement matters. If two
gain-correlated inputs land on the same branch, a branch-local divisor can reduce
their shared multiplicative component before that branch is combined with the
rest of the tree. This lowers the covariance that downstream compartments see.
If the same correlated inputs are spread across different branches, each branch
sees only part of the nuisance, and a shared gain component can remain visible
at the soma until a later, less local divisor is applied. Conversely, if useful
signal dimensions are split across branches but denominator noise is local and
large, the tree can pay multiple load penalties for the same downstream signal.
The depth simulations therefore test not only the algebraic form of division but
also whether divisor locality matches the spatial locality of nuisance
variation.

\begin{lemma}[Depth-wise gain-correlation bound under branch-specific divisors]
\label{lem:decorrelation}
Fix a depth $\ell$ and two sibling branch contributions at the soma,
$Y_{\ell,1},Y_{\ell,2}$, under shared gain $g$.
Assume the decomposition
\[
Y_{\ell,j}=a_\ell Z_g+\varepsilon_{\ell,j},\qquad j\in\{1,2\},
\]
where $Z_g$ is the shared gain component with variance $\sigma_g^2$,
$a_\ell=a_0\prod_{k=1}^{\ell}\kappa_k$ with $0<\kappa_k<1$,
and private terms satisfy
$\Cov(\varepsilon_{\ell,1},\varepsilon_{\ell,2})=0$ and
$\Cov(\varepsilon_{\ell,j},Z_g)=0$ for $j\in\{1,2\}$, and
$\Var(\varepsilon_{\ell,j})\ge \sigma_{\mathrm{priv}}^2>0$.
Then the sibling correlation satisfies
\begin{equation}
  \eta_\ell^{\sh}
  :=
  \Corr(Y_{\ell,1},Y_{\ell,2})
  \le
  \frac{a_0^2\sigma_g^2}{\sigma_{\mathrm{priv}}^2}
  \prod_{k=1}^{\ell}\kappa_k^2.
  \label{eq:decorrelation-bound}
\end{equation}
If $\kappa_k\equiv\kappa$, this gives the explicit geometric bound
$\eta_\ell^{\sh}\le C\kappa^{2\ell}$ with
$C=a_0^2\sigma_g^2/\sigma_{\mathrm{priv}}^2$.
\end{lemma}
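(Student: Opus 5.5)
The plan is to compute the covariance and variances directly from the assumed decomposition, then bound the correlation by replacing each variance in the denominator with its private lower bound.

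\emph{Numerator.} By bilinearity of covariance,
\[
\Cov(Y_{\ell,1},Y_{\ell,2})=a_\ell^2\Var(Z_g)+a_\ell\Cov(Z_g,\varepsilon_{\ell,2})+a_\ell\Cov(\varepsilon_{\ell,1},Z_g)+\Cov(\varepsilon_{\ell,1},\varepsilon_{\ell,2}).
\]
The three cross terms vanish by hypothesis, so the covariance equals $a_\ell^2\sigma_g^2$, which is nonnegative.

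\emph{Denominator.} In the same way, $\Var(Y_{\ell,j})=a_\ell^2\sigma_g^2+\Var(\varepsilon_{\ell,j})$. This is at least $\Var(\varepsilon_{\ell,j})\ge\sigma_{\mathrm{priv}}^2>0$, so both variances are positive and the correlation is well defined. Hence
\[
\sqrt{\Var(Y_{\ell,1})\Var(Y_{\ell,2})}\ge\sigma_{\mathrm{priv}}^2 .
\]

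\emph{Combine.} Since the numerator is nonnegative, dividing it by the smaller quantity $\sigma_{\mathrm{priv}}^2$ can only increase the ratio:
\[
\eta_\ell^{\sh}\le \frac{a_\ell^2\sigma_g^2}{\sigma_{\mathrm{priv}}^2}.
\]
Substituting $a_\ell=a_0\prod_{k=1}^{\ell}\kappa_k$ gives $a_\ell^2=a_0^2\prod_k\kappa_k^2$, which is Eq.~\eqref{eq:decorrelation-bound}. Setting $\kappa_k\equiv\kappa$ turns the product into $\kappa^{2\ell}$, giving the geometric form with $C=a_0^2\sigma_g^2/\sigma_{\mathrm{priv}}^2$.

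There is no substantive obstacle. The one point needing care is the direction of the inequality in the combining step: it relies on the numerator being nonnegative, which holds because the covariance is exactly $a_\ell^2\sigma_g^2$. A sharper bound, $a_\ell^2\sigma_g^2/(a_\ell^2\sigma_g^2+\sigma_{\mathrm{priv}}^2)$, also follows from the exact variances, but the stated bound only needs the weaker one.

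The lemma says nothing about whether the decomposition with multiplicative factors $\kappa_k<1$ actually arises from the shunting tree. Connecting it to the tree would mean identifying $\kappa_k$ with the per-stage gain-sensitivity factors of the linearization in Eq.~\eqref{eq:tree-linearization}. Because the lemma takes the decomposition as a hypothesis, that identification is not part of this proof.
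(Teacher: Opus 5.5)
Your proposal is correct and follows essentially the same route as the paper's proof: compute $\Cov(Y_{\ell,1},Y_{\ell,2})=a_\ell^2\sigma_g^2$ and $\Var(Y_{\ell,j})=a_\ell^2\sigma_g^2+\Var(\varepsilon_{\ell,j})\ge\sigma_{\mathrm{priv}}^2$, then bound the correlation by dividing by $\sigma_{\mathrm{priv}}^2$. You also make explicit two points the paper leaves implicit: the nonnegativity of the numerator that justifies the inequality direction, and the sharper intermediate bound.
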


\begin{proof}
Under the stated model,
\[
\Cov(Y_{\ell,1},Y_{\ell,2})
=
a_\ell^2\sigma_g^2,
\]
because the private terms are cross-sibling independent and independent of $Z_g$.
Also,
\[
\Var(Y_{\ell,j})
=
a_\ell^2\sigma_g^2+\Var(\varepsilon_{\ell,j})
\ge
a_\ell^2\sigma_g^2+\sigma_{\mathrm{priv}}^2.
\]
Hence
\[
\eta_\ell^{\sh}
=
\frac{a_\ell^2\sigma_g^2}
{\sqrt{\Var(Y_{\ell,1})\Var(Y_{\ell,2})}}
\le
\frac{a_\ell^2\sigma_g^2}{\sigma_{\mathrm{priv}}^2}
=
\frac{a_0^2\sigma_g^2}{\sigma_{\mathrm{priv}}^2}
\prod_{k=1}^{\ell}\kappa_k^2,
\]
proving~\eqref{eq:decorrelation-bound}.
\end{proof}

\subsubsection{Two-stage chain: explicit decomposition}
\label{methods:two-stage}

For the additive two-stage chain with signed drives $U_j=E_j-I_j$ ($j\in\{0,1\}$),
the soma output is $V_0^{\add}=a_0 U_0+a_1 U_1$ with
$a_0=1/(1+g)$, $a_1=g/(1+g)$.
The discriminability is
$\dprime^2=(a_0\Delta\mu_0+a_1\Delta\mu_1)^2/(a_0^2\sigma_0^2+a_1^2\sigma_1^2+2a_0 a_1\sigma_{01})$.
The optimized two-coordinate readout decomposes as
\begin{equation}
  \Delta\mu^\top\Sigma^{-1}\Delta\mu
  =
  \frac{\Delta\mu_0^2}{\sigma_0^2}
  +
  \frac{\left(\Delta\mu_1-\frac{\sigma_{01}}{\sigma_0^2}\Delta\mu_0\right)^2}
       {\sigma_1^2-\sigma_{01}^2/\sigma_0^2}.
  \label{eq:chain-decomposition}
\end{equation}
The decomposition~\eqref{eq:chain-decomposition} follows from the block inversion
identity for a $2\times 2$ matrix: if
$\Sigma=\bigl(\begin{smallmatrix}\sigma_0^2&\sigma_{01}\\\sigma_{01}&\sigma_1^2\end{smallmatrix}\bigr)$,
then $\Delta\mu^\top\Sigma^{-1}\Delta\mu=\Delta\mu_0^2/\sigma_0^2+(\Delta\mu_1-(\sigma_{01}/\sigma_0^2)\Delta\mu_0)^2/(\sigma_1^2-\sigma_{01}^2/\sigma_0^2)$.

\emph{Gain- and load-sensitivity in the chain.}
Keep the axial coupling $g$ fixed and let a separate common input gain $z$
multiply baseline drives, $E_j(z)=z\widetilde E_j$ and
$I_j(z)=z\widetilde I_j$. At the reference $z=1$, the linearized somatic
gain sensitivity is
$\left.\partial V_0^{\sh}/\partial z\right|_{z=1}
\approx\alpha_0\widetilde E_0-\beta_0\widetilde I_0
+\kappa(\alpha_1\widetilde E_1-\beta_1\widetilde I_1)$.
For additive inhibitory loads $I_j\mapsto I_j+\xi_j$, the sensitivities are
$\partial V_0^{\sh}/\partial\xi_0\approx-\beta_0$,
$\partial V_0^{\sh}/\partial\xi_1\approx-\kappa\beta_1$.
The load-induced output variance is therefore
$\Var_\xi(V_0)\approx\beta_0^2\sigma_{\xi_0}^2+(\kappa\beta_1)^2\sigma_{\xi_1}^2$,
making explicit that load penalties compound with depth while gain sensitivity
can be suppressed.

\subsubsection{Multiclass extension}
\label{methods:multiclass}

For $K$ classes with equal priors $\pi_k=1/K$, class-conditional means
$\mu_k=\E[Z\mid c{=}k]$, and a common within-class covariance
$\Sigma_k=\Sigma=:S_W$ for every class, the average pairwise $\dprime^2$ along
a linear score $y=w^\top Z$ satisfies
\begin{equation}
\overline{\dprime^2}(w)
\;=\;
\frac{2K}{K-1}\;\frac{w^\top S_B\,w}{w^\top S_W\,w},
\label{eq:avg-dprime-multiclass}
\end{equation}
where $S_B=\frac{1}{K}\sum_k(\mu_k-\bar\mu)(\mu_k-\bar\mu)^\top$ is the
between-class scatter.
The identity follows from
$\sum_{a<b}(\mu_a{-}\mu_b)(\mu_a{-}\mu_b)^\top = K^2 S_B$ (a one-line
ordered-pair expansion). With heterogeneous class covariances, averaging
pair-specific ratios does not generally equal this single generalized Rayleigh
quotient; that case requires an explicitly chosen pooled denominator or a
heteroscedastic objective.

\emph{Dendritic constraints.}\;
For additive trees and linearized shunting trees, the somatic output has
the form $w=Au$, $u\ge 0$, where $A$ encodes sign flips and
morphology-dependent path gains (Sec.~\ref{methods:cone-lda}).
In this parameterization, optimizing \eqref{eq:avg-dprime-multiclass}
becomes a nonnegativity-constrained generalized Rayleigh quotient in~$u$:
\[
  \max_{u\ge 0}\;
  \frac{u^\top(A^\top S_B A)\,u}{u^\top(A^\top S_W A)\,u},
\]
which is structurally identical to the two-class cone-constrained Fisher
optimization with $\Delta\mu\Delta\mu^\top$ replaced by $S_B$.
The KKT conditions (stationarity on the active set, complementary slackness)
carry over because the dual-cone structure of the feasible set is
unchanged.

\emph{Beyond local shunting.}\;
When the exact shunting nonlinearity matters (shared gain, denominator variability),
one can still use the average pairwise $\dprime^2$ computed from the
scalar output distribution.  A delta-method expansion around a common
operating point yields a linear surrogate with an \emph{effective}
within-scatter matrix (Jacobian pullback), reducing the problem to
\eqref{eq:avg-dprime-multiclass} with a modified $S_W$.
If class covariances differ substantially (heteroscedasticity),
quadratic-discriminant-analysis (QDA) objectives become appropriate; the
present framework focuses
on the homoscedastic setting.

\subsection{Information and population scaling}
\subsubsection{Information-theoretic diagnostics}
\label{methods:info-theory}

The $\dprime^2$-based comparisons are complemented by a Gaussian analytic map
and by non-Gaussian cross-validated decoder lower bounds.

\emph{Binary equal-variance Gaussian surrogate.}
For an equiprobable binary stimulus and a scalar readout $V$ with equal-variance
Gaussian class-conditional distributions,
the mutual information between stimulus $\theta$ and readout is
\begin{equation}
  \MI(\theta;V)
  =
  1 - \E_{Z\sim\mathcal{N}(0,1)}\!\bigl[\log_2\bigl(1+e^{-\dprime\, Z - \dprime^2/2}\bigr)\bigr],
  \label{eq:mi-biawgn}
\end{equation}
which is the capacity of a binary-input additive white Gaussian noise channel
(BIAWGN) with signal-to-noise ratio (SNR) $\dprime^2/4$. This provides an
analytic monotone mapping
from $\dprime^2$ to bits only under those channel assumptions; it is not used as
empirical mutual information for non-Gaussian readouts.

\emph{Conditional decoder information.}
The exact conditional mutual information is
$\MI(\theta;Y\mid X)=\MI(\theta;X,Y)-\MI(\theta;X)$.
Under the equal-covariance Gaussian surrogate, the Schur complement supplies an
\emph{increment in discriminability}, not bits. If
$\dprime^2_{X,Y}=\dprime_X^2+\Delta\dprime^2_{Y\mid X}$, then the corresponding
channel approximation is
\[
\MI(\theta;Y\mid X)
\approx
f_{\mathrm{BIAWGN}}\!\left(\dprime_X^2+\Delta\dprime^2_{Y\mid X}\right)
-f_{\mathrm{BIAWGN}}\!\left(\dprime_X^2\right),
\]
where $f_{\mathrm{BIAWGN}}$ denotes Eq.~\eqref{eq:mi-biawgn} as a function of
$\dprime^2$. Only when $X$ is uninformative does this reduce to applying the map
to the Schur increment alone.
For non-Gaussian rate inputs or nonlinear learned representations, we use
cross-validated probabilistic decoders to estimate the variational lower bound
$I_{\rm LB}(q,V)=H(\theta)+\mathbb E[\log_2 q(\theta\mid V)]$. Each decoder's
$I_{\rm LB}$ is a lower bound on its representation's mutual information, but a
difference of two lower bounds is not a lower bound on the corresponding mutual-
information difference and need not preserve its sign. When two decoders share
the same held-out labels, the entropy term cancels, so their $\Delta I_{\rm LB}$
is exactly the additive-minus-shunting held-out categorical-log-loss difference
in bits. We therefore treat
it as an operational decoder-bound score rather than an information ordering.
We do not label residualized decoder increments as partial-information-
decomposition (PID) atoms or unique
information without an identified PID functional.

\subsubsection{Population-coding nuisance model and representation-linearization metrics}
\label{methods:population-linearization}

We model population responses under latent state variation as
\[
  X = g\,f(\theta)+\varepsilon,
\]
where $X$ is the population activity vector, $\theta$ is the stimulus, $g$ is a
low-dimensional state/contrast nuisance factor,
and $\varepsilon$ is independent residual variability with covariance
$\Sigma_{\mathrm{ind}}$.
The equivalent covariance form is Eq.~\eqref{eq:latent-covariance}.
The alignment scalar $\alpha(\theta)$ in Eq.~\eqref{eq:alignment-alpha} quantifies when
shared variability is readout-limiting for constrained linear readout.

For tree analyses, we treat the vector of branch/node voltages at depth $\ell$ as a
representation:
\[
  h^{(\ell)} = \big(V_1^{(\ell)},\dots,V_{B_\ell}^{(\ell)}\big).
\]
We characterize representations across depth with five metrics:

\emph{(i) Linear discriminability/linear Fisher information (LFI) at depth.}
For a two-class contrast, we compute
\begin{equation}
  d_{\mathrm{lin}}'^2\!\big(h^{(\ell)}\big)
  :=
  \Delta\mu_\ell^\top \Sigma_\ell^{-1}\Delta\mu_\ell,
  \label{eq:lfi-depth}
\end{equation}
where $\Delta\mu_\ell$ and $\Sigma_\ell$ are the class-mean difference and pooled covariance
of $h^{(\ell)}$.

\emph{(ii) Linear-probe decodability.}
We train a linear decoder from $h^{(\ell)}$ to class label and track held-out accuracy
versus depth $\ell$. In the controlled aligned-support diagnostic, shunting can
increase probe performance with depth when local divisors suppress harmful gain
while preserving signal fidelity; the effect can reverse when accumulated load
or fidelity loss dominates. Gain magnitude alone does not determine the sign.

\emph{(iii) Gain-sensitivity index.}
We report
\begin{equation}
  S_g(\ell)
  :=
  \frac{\mathbb{E}_{s}\!\left[
    \tr\Cov_{g}\!\left(\mathbb{E}[h^{(\ell)}\!\mid s,g]\right)\right]}
       {\mathbb{E}_{s}\!\left[
    \tr\Cov\!\left(h^{(\ell)}\!\mid s\right)\right]},
  \label{eq:gain-sensitivity-index}
\end{equation}
which isolates the fraction of depth-$\ell$ representation variance attributable to
latent gain fluctuations.

\emph{(iv) Nuisance-mode fraction.}
We quantify low-rank nuisance dominance in $h^{(\ell)}$ by the top-eigenvalue fraction
of within-class covariance,
\begin{equation}
  \nu_\ell := \frac{\lambda_1\!\left(\Cov(h^{(\ell)} \mid y)\right)}
                   {\sum_k \lambda_k\!\left(\Cov(h^{(\ell)} \mid y)\right)},
  \label{eq:nuisance-fraction}
\end{equation}
reported in Supplementary Fig.~\ref{fig:tree-depth}A.

\emph{(v) Signal--gain principal angle.}
We define a gain tangent at depth $\ell$ by a context split (high- vs low-gain means)
$a_\ell$ and report the principal angle between $a_\ell$ and the stimulus tangent
$\Delta\mu_\ell$ in the representation-geometry diagnostics.

\subsubsection{Exchangeable correlation model and saturation}
\label{methods:exchangeable}

A useful special case for understanding synapse selection and depth effects
uses exchangeable within-pool correlations.
Let $K$ synapses have equal signal $\delta$, variance $\sigma^2$, and
pairwise correlation $\rho\in[0,1)$.
The pooled $\dprime^2$ from the $K$ synapses is
\begin{equation}
  \dprime^2(K;\rho)
  =
  \frac{K\,\delta^2}{\sigma^2\bigl(1+(K-1)\rho\bigr)}.
  \label{eq:exchangeable-dprime}
\end{equation}
For $\rho=0$, discriminability grows linearly in $K$ (independent synapses).
For $\rho>0$, it saturates to $\delta^2/(\sigma^2\rho)$ as $K\to\infty$.
This scalar exchangeable model illustrates one decoder-accessible bottleneck:
reducing the effective $\rho$ raises its ceiling. A matched branch-local divisor
can reduce the shared-gain contribution under the assumptions of
Lemma~\ref{lem:decorrelation}; the current depth surrogate motivates, but does
not by itself establish, this mechanism in trained trees.
Combining this saturation law with Eq.~\eqref{eq:regime-caricature} gives the
population-level interpretation of the same morphology term: when branch-local
division lowers the surviving gain factor $M_g(\mathcal T,r)$, the effective shared
correlation floor $\rho_{\mathrm{eff}}$ falls in the one-dominant-nuisance
caricature, increasing the accessible ceiling
$\delta^2/(\sigma^2\rho_{\mathrm{eff}})$ and hence the monotone scalar
discriminability summary. When a bit-valued summary is needed for a binary
stimulus, we use the BIAWGN mapping in Eq.~\eqref{eq:mi-biawgn} rather than the
Gaussian-input capacity expression.
Within this exchangeable model, reducing signal-aligned covariance raises the
asymptote of the constrained linear readout. It does not imply increased mutual
information in the full input representation or establish empirical population
scaling.

\subsubsection{Population scaling with the implemented readouts and a differential-noise ceiling}
\label{methods:identified-ceiling}

We evaluate population scaling using the same terminal equations as the network
implementation. For balanced $y\in\{-1,1\}$ and population unit $j$,
\begin{equation}
 E_j=cG\left(\mu+y\frac{\Delta}{2}+\varepsilon_j\right),
 \qquad
 I_j=c\kappa G L_j,
 \label{eq:code-faithful-population-generator}
\end{equation}
where $G$ is one mean-one global lognormal gain, $L_j$ is mean-one lognormal
sensor load, and $\varepsilon_j$ is private Gaussian variability. In the
independent/mismatched-sensor condition, $I_j$ instead contains an independent
gain $G'_j$. This intervention changes nuisance correlation and rank; it is not
a literal support permutation. We fix $\mu=2$, $\Delta=0.4$,
$\sigma_\varepsilon=0.35$, and $\kappa=1$, and cross
$P\in\{1,2,4,8,16,32,64,128,256\}$,
$c\in\{0.25,1,4,16,64\}$, gain log-SD
$s_g\in\{0,0.2,0.45,0.7\}$, sensor-load log-SD
$s_L\in\{0,0.08,0.26,0.55\}$, both sensor relations, and eight seeds.
For a mean-one lognormal gain, the linear-scale variance used in the theory is
$\sigma_g^2=\exp(s_g^2)-1$.
The complete Cartesian grid is retained. The $L_j$ fluctuation is private
across population units, so this experiment does not test a shared denominator
load that could impose its own off-diagonal covariance floor.

Every comparator receives the same $P$ paired observations. We compare raw
additive $E-I$; the affine tangent of $E/(1+E+I)$ at
$(E_0,I_0)=(c\mu,c\kappa)$; an optimized symmetric positive-E/I response
$E-\beta_P I$; the implemented shunt $E/(1+E+I)$; and the leak-free limit
$E/(E+I)$ as a high-conductance diagnostic. Closed-form moments of the
unclipped generator set
\begin{equation}
 \beta_P=\max\!\left\{0,
 \frac{\Cov(\sum_{j=1}^{P}E_j,\sum_{j=1}^{P}I_j)}
 {\Var(\sum_{j=1}^{P}I_j)}\right\}.
 \label{eq:code-faithful-population-beta}
\end{equation}
Because $I$ has no class-mean difference and the generator is exchangeable,
this minimizes pooled within-class variance in the symmetric positive-E/I
linear family. At $P=2,8$, unrestricted $2P$ LDA gives the same ray and test
$d'^2$ to relative error below $7.1\times10^{-15}$; positivity does not bind
anywhere in the grid. This comparator estimates the performance of the linear
family rather than generic nonlinear-network capacity.

For each nonlinear cell and seed, independent calibration and test splits each
contain 100,000 balanced trials. Means and diagonal variances are sampled;
off-diagonal covariance is Rao--Blackwellized from
$\Var_G[\E(V_j\mid G,y)]$ with converged Gauss--Hermite quadrature. Exact
lognormal moments are used for the linear arms to avoid subtracting large noisy
shared terms near cancellation. The implemented positive-bracket guard clips at
most $5\times10^{-6}$ of sampled draws; the closed forms omit that negligible
truncation. Equation~\eqref{eq:exchangeable-dprime} then evaluates equal-weight
linear $d'^2(P)$. We verified the terminal equations against the codebase
branch helper and recorded diagonal and off-diagonal covariance separately.
Here $P$ adds one E/I observation pair per unit. The endpoint is constrained
linear discriminability; it should not be interpreted as mutual information,
a Bayes-information bound, total recorded-population information, or
information per synapse.

The separate strict negative control removes the auxiliary gain measurement,
caps the active numerator contacts at $K_{\rm act}\le256$, and adds one
global differential component,
$X_b=\mu+y\Delta/2+\Delta z+\bar\varepsilon_b$, with
$z\sim\mathcal N(0,\nu)$ and
$\bar\varepsilon_b\sim\mathcal N(0,\sigma_{\rm pr}^2/k)$.
Here $B_{\rm dend}=8$ is the branch count,
$k=K_{\rm act}/B_{\rm dend}$ is the number of active numerator contacts per
branch, and $\sigma_{\rm pr}^2$ is the per-contact private-noise variance, so
the branch average $\bar\varepsilon_b$ has variance $\sigma_{\rm pr}^2/k$. Both readouts are then the same statistic and
\begin{equation}
 d^{\prime2}=\frac{A}{1+\nu A},
 \qquad A=\frac{K_{\rm act}\Delta^2}{\sigma_{\rm pr}^2}
 =\frac{B_{\rm dend}k\Delta^2}{\sigma_{\rm pr}^2},
 \label{eq:identified-differential-ceiling}
\end{equation}
with exact ceiling $1/\nu$. An orthogonal low-rank control leaves the signal-axis
value $A$ unchanged. Closed forms were compared with 16 Monte Carlo repeats of
12,000 balanced trials per population size and condition. Because both readouts
in this negative control are identical, no deterministic transform can raise
the inaccessible ceiling. This control is separate from
Eq.~\eqref{eq:code-faithful-population-generator}; neither experiment
establishes that a cortical circuit observes the modeled nuisance sensor.

\subsubsection{Exact-inventory hierarchical gain construction}
\label{methods:exact-inventory-hierarchy}

The fixed-inventory construction in Fig.~\ref{fig:criterion-morphology}A--C
uses the exact passive shunting recursion in Eq.~\eqref{eq:node-voltage}, the
tangent control described below, and the implemented raw-additive recursion.
Across-output additive normalization is disabled:
\begin{equation}
 V_n^{\add,\mathrm{fix}}
 =E_n-I_n+g\sum_{b\in\mathcal D_n}V_b.
 \label{eq:exact-inventory-fixed-additive}
\end{equation}
For an equiprobable label $y\in\{-1,+1\}$ with
$\Pr(y=+1)=\Pr(y=-1)=1/2$, draw four fine gains $f_j$, two coarse gains
$c_k$, and one global gain $h$ independently as mean-one lognormal variables
with common log-SD $s_g$. With $q(1)=q(2)=1$ and
$q(3)=q(4)=2$, the four signal-node excitatory observations are
\begin{equation}
 E_j=\max\!\left\{\epsilon_{\rm loc},
 (\mu+y\Delta)f_jc_{q(j)}h+\xi_j\right\},
 \qquad \xi_j\sim\mathcal N(0,\sigma_{\rm pr}^2).
 \label{eq:hierarchical-inventory-signal}
\end{equation}
Their local inhibitory observations are $I_j=s f_j$. Two sensor-only nodes
have $(E,I)=(\epsilon_{\rm loc},sc_1)$ and
$(\epsilon_{\rm loc},sc_2)$, and two further nodes carry redundant observations
$(\epsilon_{\rm loc},sh)$. Thus every morphology receives
the identical multiset of eight local E/I node observations; the duplicated
global observation makes eight physical sensors but seven independent gain
values. The parameters are $\mu=1$, $\Delta=0.12$,
$\sigma_{\rm pr}=0.20$, sensor conductance $s=12$, axial conductance $g=0.4$,
local positivity floor $\epsilon_{\rm loc}=10^{-6}$, and recursion stabilizer
$\epsilon_{\rm code}=10^{-8}$.

The parallel $[8]$ tree attaches all eight nodes directly to the child-only
soma. The $[2,3]$ tree places the four signal/fine and two coarse-sensor nodes
distally beneath two global-sensor parents. The serial $[2,1,2]$ tree places
two signal/fine children below each coarse-sensor node and each coarse node
below a global-sensor node. Each tree has eight non-somatic nodes and eight
axial edges. Somatic child-only normalization is a positive constant rescaling
of the summed root voltages and is absorbed by the scalar decoder. At the fixed
$g=0.4$, Eq.~\eqref{eq:exact-inventory-fixed-additive} gives coefficients
$(1,1,1)$ for signal/fine, coarse, and global nodes in $[8]$;
$(0.4,0.4,1)$ in $[2,3]$; and $(0.16,0.4,1)$ in $[2,1,2]$. Hence the fixed
raw additive trees are linear but not topology-invariant. Their scores become
identical under these identity local gains only at $g=1$; more generally, with
the same observation inventory, independently optimized local weights can
absorb nonzero path gains if no magnitude, sharing, or regularization constraint
binds.

The aligned construction uses the coarse sensor above its matching two signal
nodes. The shuffled control exchanges the two coarse sensors while preserving
the complete raw observation multiset. The gain-sensor-noise control multiplies the
coarse and global sensor observations by independent mean-one lognormal gains
with log-SD $0.55$. It therefore changes sensor fidelity without changing the
latent gains in Eq.~\eqref{eq:hierarchical-inventory-signal}.

At each $s_g\in\{0,0.1,0.2,\ldots,1.0\}$, we generate independent
training and test samples of size 12,000 for each seed. The fixed-additive and
shunting classifiers are scalar logistic readouts of their exact tree outputs,
evaluated separately for all three morphologies. A second additive comparator
is a sign-unconstrained logistic linear classifier of all eight raw E and eight
raw I observations. This raw-linear feature class contains the passive
positive-E/I additive-tree family, but the finite training fit is not itself a
function-class optimum or a particular tree. Because its features do not
depend on topology, it is fit once rather than repeated under tree labels. We
use it as the primary additive capacity comparator; the fixed raw cascade is
retained to show the behavior of the implemented additive tree.
Features are standardized using training-split moments, and all scalar and
linear decoders use L2-regularized logistic fits with $C=10$
separately at each noise level. These curves test
performance after the decoder is refit separately at every noise level.
At $s_g=0.5$, varying $C$ from $0.1$ to $10^6$ changes mean accuracy by
only $0.024$ percentage points, and the $C=10$ accuracy is identical to the
unpenalized maximum-likelihood fit.

The morphology-specific tangent-additive comparison uses disjoint seeds
200--207 and repeats the complete 11-point
noise grid, three regimes, three morphologies, and 12,000/12,000 train/test
sizes. For each seed and morphology, the aligned, sensor-noise-free
$s_g=0$ training inventory is passed through the shunting tree once. At
each tree depth, $N$ and $T$ are pooled over training samples and all nodes in
that depth to set one layer-global scalar anchor $(N_0,T_0)$, matching the
implemented \texttt{tangent\_matched} control. The resulting affine node rule
from Eq.~\eqref{eq:tangent-matched-node} is then frozen across every noise level
and regime; no test observation sets an anchor. Each tangent tree, shunting
tree, and raw-linear comparator receives its own train-fitted logistic decoder.
This analysis tests whether the result survives local value, Jacobian, and
scale matching.

The calibration-to-test coordinate bridge in
Fig.~\ref{fig:criterion-morphology}E uses a third disjoint seed block, 300--307.
For each outer seed, a 12,000-example clean training split fixes the
morphology-specific tangent anchors and a sign-unconstrained linear template.
A separate 12,000-example calibration split estimates clean, gain-only, and
sensor-corruption-only $d'^2$ values at gain log-SD $0.2,0.5,0.8$. Independent
12,000-example test splits then combine gain and sensor corruption for all
three morphologies, aligned/shuffled supports, and tangent/linear comparators.
No combined-test observation fits a score, anchor, coordinate, threshold, or
parameter. The resulting $3\times2\times3\times2=36$ condition means test the sign of
Eq.~\eqref{eq:matched-input-margin}. The linear template is fixed from the
clean training split rather than refit at every nuisance condition.
The bridge reports $G_{\sh}/G_a$ only as a normalized $d'^2$-degradation ratio,
not as the variance-defined $M_g$: gain can also change the class-mean signal
gap, so those two quantities need not coincide.

The operating-sensitivity analysis evaluated the grid
$s_g\in\{0.5,0.8\}$, $s\in\{4,8,12,20\}$, and
$g\in\{0.2,0.4,0.8\}$. All 24 conditions used paired held-out seeds
100--107 with 12,000 train and 12,000 test samples. One topology-invariant
sign-unconstrained additive fit per seed and gain level was paired with every
shunting condition; positive sensor rescaling is absorbed by training-feature
standardization. Cell intervals use paired-seed
$t_{0.975,7}\operatorname{SEM}$.

\paragraph{Passive nonlinear-capacity and structural-reduction controls.}
The nonlinear-capacity control reuses the exact eight-excitatory,
eight-inhibitory observation vector of the hierarchical construction. It
compares the deep passive shunting tree plus a two-parameter logistic decoder
with a fitted linear model, a quadratic logistic model, validation-selected
multilayer perceptrons, and a sign-constrained E/I network whose hidden width
matches the selected multilayer perceptron. Training-set sizes
are nested prefixes of a common draw; within each size, 75\% fits candidate
models and 25\% selects architecture and regularization by log loss. An
independent 12,000-example test set is opened only after selection. The primary
aligned condition selects each nonlinear family independently at each sample
size using seeds 500--507; secondary gain, sensor-corruption, and
oracle-support-shuffle conditions reuse those selected settings. Accuracy and
log loss are reported separately rather than using one endpoint to select the
other. For multi-initialization arms, configurations are selected by mean
validation log loss and their mean test endpoints are reported; no
initialization is selected individually.

The structural-reduction control holds the passive inputs, divisor sensors,
two fitted decoder parameters, and seeds fixed while replacing the tree by
serial, serial-plus-axial, flat, or level-permuted compositions. Equivalence of
serial and tree arms is tested with a prespecified $\pm1$ percentage-point
margin over seeds 600--607. Neither control applies a DendriNet post-voltage
activation: the external nonlinear predictors test raw-observation capacity,
not activated dendritic trees.

Parameters were selected using pilot seeds 0--1. The held-out test used paired
seeds 100--107; trials and noise-grid points were not treated as
replicates. Curves show the seed mean and
$t_{0.975,7}\operatorname{SEM}$. In addition to accuracy, AUC, and categorical
log loss, we report a gain-sensitivity diagnostic: within each class we
demean the scalar decision score and the seven true log gains, then report the
$R^2$ of their multivariate linear regression. A decrease in this $R^2$ is
interpreted as useful nuisance rejection only when accuracy or AUC is preserved;
signal collapse can also reduce $R^2$.

\subsection{Model and implementation}
\label{methods:grp-model}

\subsubsection{Dendritic network implementation}
\label{methods:dendritic-implementation}

Algorithm~\ref{alg:dendrinet-forward} summarizes the implemented
distal-to-soma forward sweep.

\begin{algorithm}[t]
\caption{DendriNet forward pass: distal-to-soma branch-voltage sweep.}
\label{alg:dendrinet-forward}
\begin{algorithmic}[1]
\REQUIRE Excitatory and inhibitory presynaptic pools; branch factors $(b_1,\ldots,b_L)$, $n_s$ somatic units; per-branch contact counts $N_E,N_I$; positive child couplings $g_c$; rule $\in\{\mathrm{add},\mathrm{sh}\}$; flags \texttt{somatic\_synapses}, \texttt{reactivation}; stability $\epsilon$
\ENSURE Somatic output activity (excitatory population)
\STATE The $n_s\prod_{\ell=1}^{L}b_\ell$ level-$L$ leaves have no children;
level $0$ contains the $n_s$ somatic units
\FOR[{distal $\to$ soma}]{level $\ell = L$ \textbf{down to} $0$}
  \FOR{each branch/unit $b$ in level $\ell$}
    \IF{$\ell>0$ \textbf{or} \texttt{somatic\_synapses}}
      \STATE $E_b \gets$ Top-$N_E$ pooled excitatory drive onto $b$
      \STATE $I_b \gets$ Top-$N_I$ pooled inhibitory drive onto $b$
    \ELSE
      \STATE $E_b \gets 0;\ I_b \gets 0$\COMMENT{level-0 soma aggregates children only}
    \ENDIF
    \STATE Gather child voltages $\{V_c\}$ of $b$ (empty at distal leaves)
    \IF{rule $=\mathrm{sh}$}
      \STATE $V_b \gets \dfrac{E_b+\sum_c g_c V_c}{1+E_b+I_b+\sum_c g_c+\epsilon}$
    \ELSE
      \STATE $V_b \gets E_b - I_b + \sum_c g_c V_c$
    \ENDIF
    \IF{\texttt{reactivation}}
      \STATE $V_b \gets \phi(V_b)=\tfrac{1}{2}\bigl(1+\tanh(\kappa_\phi(V_b-b))\bigr)$
    \ENDIF
  \ENDFOR
\ENDFOR
\STATE \RETURN somatic branch voltages $\{V_b\}_{\text{layer }0}$ ($n_s$ units)
\end{algorithmic}
\end{algorithm}

The trainable framework used in the learned-network experiments is a
feed-forward E/I population network built from three nested modules. The outer
\emph{E/I network} applies an input transfer layer, then stacks one or more
E/I layers. Each E/I layer contains an excitatory population and, when the
configuration calls for learned inhibitory cells, an inhibitory population. Each
cell population is implemented as a \emph{DendriNet}: a balanced tree of branch
layers ending at a somatic output. Each branch layer contains sparse excitatory
and inhibitory contact pools, optional child-branch inputs, a passive
additive-or-shunting voltage rule, and an optional post-voltage activation.
The learned-network results in Fig.~\ref{fig:ei-circuit} and the related
appendix controls are neural-network simulations that generate activity through
explicit dendritic morphology and either additive or shunting branch integration;
non-dendritic multilayer-perceptron (MLP) baselines are treated separately as
capacity controls. The
trained V1 analysis (Fig.~\ref{fig:v1-main}), expanded by
the resource-matched topology analysis in Appendix
Supplementary Fig.~\ref{fig:v1-exact-resource-depth-noise-supp}, uses this same
population-network/DendriNet codepath with matched additive and shunting modules;
the models are fitted by backpropagation on nonnegative V1 responses.
Supplementary Fig.~\ref{fig:v1-fixed-template-supp} separately applies fixed scalar
readouts directly to recorded responses.

\paragraph{Input transfer and pathway modes.}
The transfer layer returns an excitatory pathway and an inhibitory pathway. With
\texttt{independent\_pathways=false}, both pathways receive the same transferred
input; with \texttt{independent\_pathways=true}, the input vector is split into
separate E- and I-driving coordinates. The configuration flag
\texttt{input\_mode} determines how that inhibitory pathway is used. In
\texttt{input\_mode=0}, inhibitory cells can be built inside the E/I layer and
their activity supplies the inhibitory drive to excitatory cells. In
\texttt{input\_mode=1}, the transferred inhibitory pathway is used directly as
the inhibitory presynaptic pool, and no separate inhibitory cell population is
built for that layer. The fixed-template V1 diagnostic does not instantiate the
trainable DendriNet and therefore does not use an \texttt{input\_mode}; it
constructs fixed E and I pools directly from recorded neurons.

For strict conductance runs, positive-conductance shunting networks require the
outputs of the transfer layer to be nonnegative. The current
\texttt{PopulationNetwork} path routes transfer outputs directly and does not
enforce this condition at the shunting boundary; strict-regime experiments must
therefore establish it from the dataset and transfer configuration. If raw data
are signed or normalized, the configuration must use a nonnegative transfer
output activation such as ReLU, softplus, or sigmoid, or the experiment is
treated as an ML extension rather than a strict conductance-level model.

\paragraph{Tree topology and branch layers.}
A DendriNet with $n_s$ somatic units and branch factors
$(b_1,\ldots,b_L)$ represents a balanced tree in which each somatic unit receives
$b_1$ child branches, each of those receives $b_2$ child branches, and so on.
The distal branch count is therefore $n_s\prod_{\ell=1}^L b_\ell$. Computation
proceeds from distal branches to soma. At a layer where $b$ child branches
converge onto one parent branch, the child voltages are grouped into blocks of
size $b$ and passed through a positive block-linear coupling. These block
weights are the effective axial conductances in the feed-forward tree
abstraction. When \texttt{somatic\_synapses=false}, the final somatic layer
aggregates child branches without adding a new set of direct E/I synapses at the
soma; when it is true, the soma also receives direct sparse E/I contacts.
Every non-somatic level receives a fresh projection from the original global E/I
input streams unless its per-level feed-forward contact count is explicitly set
to zero. Consequently, increasing depth in the default configuration changes
both propagation topology and repeated access to the raw input.

For one somatic unit define
\begin{equation}
  B_{\rm dend}:=\sum_{\ell=1}^{L}\prod_{r=1}^{\ell}b_r,
  \label{eq:branch-resource-count}
\end{equation}
the number of non-somatic branches. For branch $b$, the exact realized counts
are the forward-mask sums
$N_{E,b}^{\rm real}=\sum_jM^E_{bj}\le\min(N_E,D_{E,b})$ and analogously for I;
$N_E,N_I$ are configured caps. Equality holds for the unmasked experiments
reported here and is verified from the forward mask in the fixed-mask controls, but
need not hold under an arbitrary row-wise connection mask. With $P_E$
somatic units, \texttt{somatic\_synapses=false}, uniform unmasked candidate
dimensions, and uniform per-branch caps (giving common counts
$N_E^{\rm real},N_I^{\rm real}$), the realized feed-forward contact budget is
$P_EB_{\rm dend}(N_E^{\rm real}+N_I^{\rm real})$ (or the corresponding sum
over levels when dimensions differ), the number of child couplings is
$P_EB_{\rm dend}$, and a learned shifted-tanh gate contributes
$2P_E(B_{\rm dend}+1)$ parameters. These quantities, not $N_E$ and $N_I$
alone, define a morphology comparison's active resource budget.

\paragraph{Sparse synaptic contacts.}
Feed-forward excitatory and inhibitory synapses are implemented with a Top-$K$
linear layer. For each output branch, the layer keeps the $K$ largest raw
synaptic scores and masks the rest on every forward pass. The active synaptic
weights are then mapped through the configured transform. The strict positive
E/I experiments use a positive transform, usually softplus, so the active
conductances are nonnegative while the underlying trainable parameters remain
unconstrained. All reported experiments use the deterministic
\texttt{standard} Top-$K$ layer. We use $N_E$ and $N_I$ for the configured numbers of excitatory and
inhibitory feed-forward contacts per branch in trained-network sweeps; the
corresponding implementation fields remain \texttt{k\_e} and \texttt{k\_i}. For the
V1 diagnostic, $N_+$ and $N_-$ instead denote the numbers of training-assigned
positive- and negative-evidence inputs; these signs label downstream routing
rather than biological cell identity.
If a requested $N_E$ or $N_I$ exceeds its available candidate dimension, the
implementation silently caps it at that dimension, so source tables must record
both requested and realized contacts. Standard Top-$K$ retains a dense trainable
score matrix and activates only $K$ entries per branch; changing $K$ changes
active connectivity but does not generally change the dense raw score count.
When branch diagnostics are enabled, the implementation records the exact
forward-pass decomposition $(E,I,C,G,N,T,V)$ and obtains
\texttt{realized\_K} from the mask actually used by that forward call (or from
the fixed indexed connectivity), rather than inferring it from a configured
cap.

\paragraph{Optimized and fixed quantities.}
During training, the optimized parameters are the raw Top-$K$ synaptic scores,
the positive child-branch coupling weights, the optional activation parameters,
and the downstream decoder weights. The branch factors, number of somatic units,
per-branch E/I contact counts, integration rule, pathway mode, and presence or
absence of post-voltage activation are fixed by the experiment configuration. Paired runs
share tree shape, requested contact counts, pathway construction, data split,
and decoder family. When each model uses its own selected setup, the effective
initialization can differ: the passive rule, branch/axial scale, and gate
initialization change together. The resulting comparison is between complete
trained models; same-policy and matched-effective-operating-point controls ask
the narrower rule-isolation question.

\paragraph{Branch voltage rule.}
For a branch receiving pooled excitatory drive $E_b$, pooled inhibitory drive
$I_b$, child-branch voltages $V_c$, and positive child couplings $g_c$, the
shunting implementation computes
\begin{equation}
  V_b^{\sh}
  =
  \frac{E_b+\sum_c g_cV_c}
       {1+E_b+I_b+\sum_c g_c+\epsilon},
  \label{eq:implementation-shunting}
\end{equation}
with a small $\epsilon$ only for numerical stability. The additive counterpart
uses the same sparse contacts and child couplings but removes the denominator:
\begin{equation}
  V_b^{\add}
  =
  E_b-I_b+\sum_c g_cV_c.
  \label{eq:implementation-additive}
\end{equation}
Writing $\ell=1+\epsilon$ and dividing $E_b,I_b,g_c$ by $\ell$ maps
Eq.~\eqref{eq:implementation-shunting} exactly to the unit-leak equations in
the theory. Consequently, all exact identities stated there apply to the code
after this change of conductance units; no $10^{-8}$ term is discarded.
Thus paired additive and shunting branches use the same requested contact counts,
tree topology, and child-branch wiring pattern, while their passive transfer and
learned coupling values can differ. The
reported learned-network experiments use this raw current-mode additive rule.
The analytic sections use the generalized passive additive tree of
Proposition~\ref{prop:additive-flattening}, which includes this raw rule and the
leak-normalized axial cascade as special cases; both flatten exactly, but their
fixed-budget path gains and training dynamics need not be identical. The
\texttt{dendritic\_normalized\_additive} option in the code is a separate
per-sample, across-output post-voltage z-score scale control, not the
leak-normalized axial recursion, and is disabled in the reported main experiments. The code also
contains optional recurrent-current paths for recurrent experiments; these are
disabled in the feed-forward experiments reported in this paper.

\paragraph{Post-voltage activation and outputs.}
The implementation calls this option \texttt{reactivation}; in reader-facing
text we call it post-voltage activation. After the passive voltage, a branch can
apply the learned shifted tanh
\[
  \phi(V)=\frac{1+\tanh(\kappa_\phi(V-b))}{2},
\]
with positive learned slope $\kappa_\phi$ and learned midpoint $b$. This keeps propagated
branch activity in a nonnegative rate-like range. The same functional family is
applied after either passive branch rule, but its initialization can be
mechanism-dependent: the branch first computes $V_b^{\add}$ or
$V_b^{\sh}$ and then, when enabled, propagates $\phi(V_b)$ to the next layer. The
activation can be disabled to test passive-voltage-only trees. In an E/I layer, inhibitory cells are
computed first when they exist, and excitatory cells then receive their
inhibitory drive. The network output used by the classifier is the final
excitatory population activity; the classifier head is an ordinary MLP decoder
in the learned-context experiments.

\begin{table}[htbp]
\centering
\footnotesize
\setlength{\tabcolsep}{5pt}
\renewcommand{\arraystretch}{1.06}
\caption{Post-voltage activation settings. ``Passive'' means that no activation
follows the branch voltage. Within each comparison, additive and shunting arms
use the same activation family.}
\label{tab:activation-settings}
\begin{tabularx}{\linewidth}{@{}>{\raggedright\arraybackslash}p{0.25\linewidth}>{\raggedright\arraybackslash}X@{}}
\toprule
Setting & Analyses \\
\midrule
Passive branch voltage &
Main Figs.~\ref{fig:framework-main}C--F, \ref{fig:regime-map}--\ref{fig:criterion-morphology},
and \ref{fig:ei-circuit}D; Supplementary Figs.~\ref{fig:local-equivalence-supp},
\ref{fig:nongaussian-supp}--\ref{fig:capacity-reduction-supp},
\ref{fig:locality-ei-fixed-supp}, and \ref{fig:topk-comparator-ladder-supp}D--F. \\
Learned shifted tanh after every branch, including the soma &
Main Figs.~\ref{fig:ei-circuit}B--C,E--F and \ref{fig:v1-main}B--E;
Supplementary Figs.~\ref{fig:resource-matched-morph-supp},
\ref{fig:contextual-factorial-supp}, and
\ref{fig:v1-exact-resource-depth-noise-supp}. \\
Training-calibrated shifted tanh, not backpropagated &
Top-$K$ controls in Fig.~\ref{fig:ei-circuit}C and Supplementary
Fig.~\ref{fig:topk-comparator-ladder-supp}A--C. \\
Matched terminal activation only &
Supplementary Fig.~\ref{fig:activation-sensitivity-supp}; not internal tree activation. \\
No DendriNet activation &
Main Figs.~\ref{fig:cifar-normalization-main} and \ref{fig:v1-main}F;
Supplementary Figs.~\ref{fig:v1-fixed-template-supp} and
\ref{fig:v1-running-recovery-supp}. \\
\bottomrule
\end{tabularx}
\end{table}

\paragraph{Setup selection.}
Initialization is not a variable in the theory. For the complete trained-model
comparisons, each mechanism receives the same validation information and search
budget, freezes its selected setup, and is then evaluated once on the test set.
The synthetic trained comparisons use additive occupancy-quantile and shunting
analytical initialization; both V1 mechanisms selected analytical initialization.
Supplementary Figs.~\ref{fig:contextual-factorial-supp} and
\ref{fig:topk-comparator-ladder-supp} contain the shared-policy and fixed-state
implementation controls.

\subsection{Experimental protocols}
\label{methods:grp-experiments}

\subsubsection{Matched comparator controls}
\label{methods:controls}

We use a nested comparator family to separate effects of finite-amplitude
division from those of coefficient fitting, conductance normalization, and
optimization policy.

\emph{Free additive comparator.}
At a terminal node fit $A_{\rho,b}=E-\rho I-b$ with $\rho,b$ selected on
training data. By Proposition~\ref{prop:scalar-threshold}, this is the relevant
fitted additive baseline for the binary accuracy of a scalar shunt.

\emph{Tangent-matched additive control.}
At a fixed operating point calibrated on training data, compare the shunting node with
$A_{\rm tan}$ from Eq.~\eqref{eq:tangent-matched-node}. This isolates the
sample-dependent total-drive gate in Eq.~\eqref{eq:node-gated-contrast} from a
mere change in local E/I coefficients.

 \emph{Conductance-normalized additive control.}
To isolate denominator normalization from numerator sign structure, use
\begin{equation}
 V_b^{\rm norm-add}
 =\frac{E_b-I_b+\sum_cg_cV_c}
 {1+E_b+I_b+\sum_cg_c+\epsilon}.
 \label{eq:true-normalized-additive}
\end{equation}
This differs from the per-sample, across-output
\texttt{use\_additive\_normalization} z-score. The numerator coefficient is
fixed at one; the terminal free-$\rho$ comparator above instead fits the
decision rule. These node controls correspond to the
\texttt{additive\_mode}: \texttt{raw},
\texttt{conductance\_normalized}, and \texttt{tangent\_matched}. The tangent
mode uses one layer-global scalar anchor $(N_0,T_0)$, pooled over all samples
and output coordinates in its explicit calibration batch, and evaluates the
affine tangent of the implemented $N/(1+T+\epsilon)$ rule. It is never
calibrated on an evaluation batch. The trained comparisons in the main text use
\texttt{raw}; the additional modes appear in the Top-$K$ comparator analysis.

\emph{Conductance-based subtractive control.}
$V^{\mathrm{cond}}=(E+\chi_{\mathrm{rev}} I)/(1+E+I)$ with
$\chi_{\mathrm{rev}}\le 0$ shares the shunting normalization pool and therefore
the same sensitivity to denominator variability.

\emph{Frozen-divisor control.}
Replace the random denominator $1+E+I$ by its operating-point value $D_0$,
giving $V^{\mathrm{frozen}}=E/D_0$.
This isolates divisor variability from the numerator nonlinearity in the
terminal scalar analysis. The Top-$K$ implementation analogously divides
the complete raw subtractive branch voltage $E-I+\sum_c g_cV_c$ by the paired
shunting reference's fixed training-only mean denominator; it does not replace
that numerator by $E$.

For trained networks we report three comparisons. The first uses the same
initialization policy in both arms. The second matches the effective operating
point using training-only output quantiles, saturation, and mean gate derivative.
The third lets each mechanism use the setup with the highest validation
accuracy under the same tuning budget. Setup selection uses only validation
data, averages over tuning seeds, and is frozen before test evaluation. This is
the whole-model comparison used in the main text; the first two isolate the
rule and quantify setup sensitivity. The supplement reports all three:
same-policy contrasts, a Top-$K$ activation-policy control, and the
equal-budget model-specific comparison. In the operating-point control, paired models use identical mechanism-neutral
initialization. Without labels or
evaluation data, each activation maps pooled training-voltage q10/q90 to
outputs 0.1/0.9; its parameters are excluded from backpropagation. The
maintained schedule recalibrates from eight training batches before each epoch,
whereas the initialization-only schedule quantifies drift. We record output
quantiles, saturation, raw and scale-normalized gate sensitivity,
pre-gate voltage quantiles, realized $K$, and parameter/mask hashes
(Supplementary Fig.~\ref{fig:topk-comparator-ladder-supp}A--C). This matches
post-activation occupancy; raw additive and shunting voltages, E/I currents, and
class-conditional branch distributions may still differ.
Identical raw initializer numbers are not required because additive and shunting
voltages have different scales. The same policy is therefore a robustness
comparison, not automatically an operating-point match. A rule-swap comparison
requires matched operating points; a matched whole-model comparison instead
requires equal tuning information and budgets and an untouched test set.

\paragraph{Top-$K$ comparator ladder.}
The comparator analysis uses the implemented passive DendriNet with branch factors
$[2,2]$, 64 somatic units, two directly innervated dendritic levels, exactly
$N_E/N_I=16/12$ contacts per synaptic branch, child-only somatic pooling, no
post-voltage activation, and an MLP decoder of widths 64 and 32. Seven arms are crossed
with seeds 12640--12647: shunting; raw additive; conductance-normalized
additive; the implemented layer-global tangent; a layer-wise positive affine
response match; the frozen-divisor control; and the per-sample across-output
z-score. All arms begin from identical mechanism-neutral
parameters, buffers, Top-$K$ masks, data, and decoder values. The tangent,
response-affine, and frozen-divisor constants use the same fixed-order,
label-free prefix of eight training batches and are fixed thereafter; no test
input or label is used for calibration or model selection. Primary estimands
are paired out-of-distribution (OOD) accuracy and log loss for each of the six
comparators versus shunting,
with Holm correction applied separately to the six accuracy and six log-loss tests.
Bootstrap intervals, AUC, the $<80\%$ failure count, and a training-fitted
free-$\rho$ leaf-population readout are diagnostics. The latter averages distal
$E$ and $I$ drives and bypasses proximal branches, soma, and the trained decoder,
so it is not a seventh internal comparator. We also verify
$V>\tau\Longleftrightarrow N-\tau T>\tau(1+\epsilon)$ exactly at $\tau=0.5$.

\paragraph{DendriNet denominator-only intervention.}
This causal control uses the implemented DendriNet with one passive $[8]$ level,
32 somatic units, $N_E/N_I=8/8$ contacts per distal branch, no somatic synapses or
post-voltage activation, and a linear population decoder. An external positive conductance
is passed directly to
\texttt{DendriticBranchLayer.voltage\_from\_currents}: it appears in the
shunting denominator but is absent from the raw-additive graph. Positive
lognormal and gamma inputs are each crossed with eight paired seeds
(13000--13007) and four interventions: zero, constant conductance 3,
independent conductance variability, and class-correlated adverse conductance.
The three nonzero conductance conditions have exactly matched pooled means
within every split. All pairs share E/I
samples, minibatch order, optimizer, decoder initialization, initial parameters,
and initial masks. The primary estimand trains both mechanisms with zero external conductance, freezes the network
and decoder, and cross-evaluates all four held-out interventions. The four
prespecified metrics are balanced error, AUC, balanced log loss, and
fixed-decoder margin $\dprime^2$; mechanism-by-intervention and within-shunting
contrast types were also prespecified. We apply one exact sign-flip Holm
correction to 60 family-specific, cross-family-average, and load-decomposition
tests. Cross-family averages were not a design factor, so their intervals are
descriptive. A secondary analysis retrains within each regime to measure
adaptation. We verified raw-additive output invariance, invariance of the
directly intervened distal numerator, initial parameter and mask identity, and
shared training order. The 16 array tasks produced 80 unique fits, all of which
passed the stated inclusion checks.

\subsubsection{Learned dendritic networks under latent gain}
\label{methods:ei-circuit}

We train dendritic networks with nonnegative input rates and separate E/I
pathways to test the gain--load--alignment predictions under hierarchical latent
gain.

\emph{Architecture.}
The contextual experiments in Fig.~\ref{fig:ei-circuit}A--C use transfer mode~1.
There is no separate learned
inhibitory-cell population; the transferred inhibitory pathway is used directly
as the inhibitory presynaptic pool. Because
\texttt{independent\_pathways=false} in this contextual grid, the same
transferred feature vector is routed through separate direct E and I synaptic
pathways; these are not statistically independent input streams. The reported
network has one learned excitatory DendriNet population. The alignment grid
uses the same $[2,2]$ branch topology throughout and varies only the
architecture variables documented here:
excitatory dendritic units
$P_E\in\{64,128,256\}$, excitatory contacts per branch
$N_E\in\{16,32\}$, inhibitory contacts per branch $N_I\in\{4,12\}$, and
alignment $\alpha\in\{0,0.25,0.5,0.75,1\}$. Each condition has $12$ matched
seeds for additive and shunting, giving $1440$ model runs and $720$
additive--shunting pairs. The
resulting excitatory population is followed by an MLP decoder with hidden widths
$64$ and $32$.
Weights are transformed by softplus so that synaptic conductances are
nonnegative.
The key architectural variable is the dendritic integration rule in E~cells:
shunting ($V=E/(1{+}E{+}I)$) or additive ($V=E{-}I$) at the passive-voltage stage.
In the default learned-network setting this voltage is followed by a post-voltage
shifted parametric tanh rate,
\[
  \phi(V)=\frac{1+\tanh(\kappa_\phi(V-b))}{2}\in(0,1),
\]
with learned slope $\kappa_\phi$ and midpoint $b$; appendix activation sweeps compare this
against no post-voltage activation. This does not change the passive single-node
local relation to first order, but it can change whole-tree operating points and
optimization. In the mechanism-specific-recipe alignment grid the activation
policy was omitted, so the implementation defaulted to
\texttt{analytical} initialization for shunting and
\texttt{occupancy\_quantile} calibration for additive.
Figure~\ref{fig:ei-circuit}B--C reports the resulting held-out model comparison.
The supplement reports the
full same-policy factorial.
Analytical branch and axial initialization is itself mechanism aware, including
different initial gate slopes and child-coupling scales. The additive arm uses
the raw current-mode rule
\texttt{use\_additive\_normalization=false}. The normalized-additive codepath is
therefore not used in this contextual grid.

\emph{Contextual task.}
The contextual task summarized in Fig.~\ref{fig:ei-circuit}A--C is a bounded
nonnegative-rate stream benchmark. Each example contains two streams,
$x_A,x_B\in[0,1]^{64}$, and a two-dimensional context cue. The context selects
the behaviorally relevant stream, so the target is $y=y_A$ when $c=0$ and
$y=y_B$ when $c=1$. Class means differ by a random direction with signal
strength $0.35$; Gaussian rate noise has standard deviation $0.08$ on the
relevant stream and $0.18$ on the irrelevant stream, followed by clipping to
$[0,1]$. Training uses mild midpoint gains on both streams
($g_{\rm rel}\in[0.9,1.1]$, $g_{\rm irr}\in[0.8,1.2]$). At test time the
irrelevant stream receives a stronger midpoint gain $g_{\rm irr}=3$, and the
parameter $\alpha$ rotates the nuisance direction from orthogonal to
stimulus-aligned.

\emph{Training and evaluation protocol.}
Models are trained with categorical log loss using Adam for up to $250$
epochs, batch size $256$, early
stopping with patience $40$, best-state restoration, gradient clipping at $5$,
and the configured dendritic weight-decay/boosting stabilizers
(\texttt{weight\_decay\_rate=0.01}, \texttt{weight\_boosting=true}; optimizer
\texttt{weight\_decay=0}). The alignment grid uses $10^{-3}$ for
synapse-selection, block-linear, activation, and decoder parameters.
Figure~\ref{fig:ei-circuit}B reports absolute held-out accuracies, and panel C
reports the paired whole-model comparison and fixed-Top-$K$ controls.
Supplementary Fig.~\ref{fig:contextual-factorial-supp} reports setup sensitivity,
and Fig.~\ref{fig:ei-circuit}D reports the
single-branch check described below.
For the trained comparisons, both metrics are oriented so that positive values
favor shunting: $\Delta_{\rm acc}=\operatorname{Acc}_{\sh}-
\operatorname{Acc}_{\add}$ and $\Delta_{\rm loss}=\operatorname{Loss}_{\add}-
\operatorname{Loss}_{\sh}$.
Panel C uses paired-seed percentile-bootstrap intervals; panels B and E--F use
two-sided 95\% $t$ intervals over the stated paired seeds.

The gain--load benchmark in Fig.~\ref{fig:ei-circuit}E--F uses
the same population-network implementation and post-voltage activation after
each branch.
Inputs are ordered positive E/I streams from the \path{branch_local_gain_load}
generator in balanced-ratio mode: class identity shifts E and I in opposite
directions, each with magnitude $0.25$, approximately preserving total E+I
drive. The stream dimension is $64$, with $8$ gain groups,
independent noise standard deviation $0.18$, lognormal gain standard deviation
$2.2$ during both training and testing, alignment
$\alpha\in\{0,0.5,1\}$, I-stream background mean $\bar L\in\{0,1,3\}$,
background log-SD $0.35$, and signal support fraction $0.25$ or $0.5$. The background is added
to the inhibitory input stream, so it changes both the additive numerator and
the shunting denominator. Gain groups are coordinate groups accessible to every
unmasked branch, not imposed branch-local support. The network has
one learned excitatory DendriNet
population with $P_E=64$, direct excitatory and inhibitory input streams, no
learned inhibitory-cell population, branch factors $[2]$, $[2,2]$, $[2,2,2]$,
or $[3,3]$, $N_E=16$ excitatory and $N_I=12$ inhibitory contacts per branch, raw
additive or shunting branch integration, shifted parametric-tanh activation,
and a linear decoder. The experiment includes both initialization policies
(\texttt{analytical} and \texttt{occupancy\_quantile}). Both are
mechanism-aware and use mechanism-specific initialization equations. The main
Fig.~\ref{fig:ei-circuit}E--F comparison uses the prespecified complete-model
pair (additive/occupancy-quantile and shunting/analytical); same-named-policy
comparisons remain supplementary setup controls. Within each policy, every
alignment/background/signal-fraction/morphology condition has $8$ matched additive
and shunting seeds (base seed $9200$). The factorial contains $2304$ completed
model configurations and $1152$ within-policy
additive--shunting pairs. We report held-out test accuracy and categorical
log-loss gaps (the generated binary task is class-balanced). For
Fig.~\ref{fig:ei-circuit}E--F, the four fixed
morphology settings are first averaged within each paired seed; intervals and
inference therefore use $n=8$ paired seed blocks per plotted point, not the 32
raw morphology-by-seed pairs.

\emph{Layer implementation.}
The model uses the dendritic implementation described in
Methods~\ref{methods:dendritic-implementation}. In the activated runs, the
shifted parametric tanh above is applied after each passive branch voltage
before the branch output is passed forward; in the passive-voltage ablation,
this internal activation is removed.

\emph{Paired comparisons and estimand.}
The additive and shunting networks in a paired comparison share the same data
splits, random seed, branch topology, contact counts, decoder architecture,
optimizer settings, and early-stopping protocol. In mechanism-specific-recipe runs the passive
rule and effective initialization change together; same-policy and
matched-effective-operating-point comparisons are reported separately. When
flat or point controls are included,
the control is matched to the closest available parameter budget and input/output
interface, but removes a specific ingredient: intermediate hierarchy/depth for
the single-level star (``flat'') alias, or the E/I/dendritic decomposition for
point-neuron MLP controls. The star alias retains local terminal branches and
their divisors. These ablation
controls are treated as appendix diagnostics rather than the main Fig.~\ref{fig:ei-circuit}
evidence.

\paragraph{Locality and fixed-total-contact controls.}
\label{methods:locality-ei-controls}

Both controls use the codebase's \path{branch_local_gain_load} generator
with 64 ordered nonnegative E features and 64 ordered nonnegative I features,
8,000 requested examples per independently generated train, validation, and
test split before exact class balancing, independent input-noise standard deviation
0.18, matched train/test gain statistics, and positive feature-wise I-stream background with
log-space standard deviation 0.35 when its mean is 3. Models have one $[8]$
excitatory DendriNet population with $P_E=64$, no learned inhibitory-cell
population, and a common linear $64\!\to\!2$ decoder. Reactivation and adaptive
initialization are disabled; raw additive and shunting arms use the same
mechanism-neutral initializer, optimizer, stopping rule, and paired data seed.
Training uses Adam at $10^{-3}$ with zero Adam weight decay, batches of 256, at
most 140 epochs, early-stop patience 25, and gradient clipping at 5. Separately,
the codebase's active/inactive synaptic-maintenance rule uses configured rate
0.01, as in the V1 protocol below; this is not optimizer weight decay.

The locality factorial (Supplementary
Fig.~\ref{fig:locality-ei-fixed-supp}) divides each stream into eight exact
contiguous groups.
Every one of 512 distal branches receives eight E contacts from one group and
eight inhibitory-pathway contacts either from that group or from the group four
positions away; E support is unchanged. It crosses one global versus eight
independent local gains (log-SD $s_g=1.2$), I-stream background mean 0 versus
3, raw additive versus shunting, and eight paired seeds (128 runs). The class
signal is E-only, has amplitude 0.75, and occupies a seed-sampled 25\% of
features. The fixed-total allocation control instead uses a balanced-ratio
class signal with $\Delta E=\Delta I=0.25$, eight local gain groups (log-SD
$s_g=2.2$),
and $(N_E,N_I)\in\{(24,4),(20,8),(16,12),(12,16),(8,20),(4,24)\}$ at I-stream background means 0
and 3 (192 runs). Thus every allocation has exactly 28 contacts per branch,
14,336 active contacts, 65,536 dense synaptic scores, and 66,178 trainable
parameters. The common ``inhibitory share'' label in
Fig.~\ref{fig:framework-main}E--F denotes $N_I/(N_E+N_I)$. The continuous
control in Fig.~\ref{fig:framework-main}D instead uses inhibitory conductance
mass divided by total conductance mass. For panel E, we
identify the allocation with maximum
validation accuracy separately within each seed, mechanism, and background
condition; test accuracy does not enter the choice, and the two exact validation
ties are both plotted.

All 320 included rows matched the source version and design recorded in the
reproducibility manifest. We verified realized masks, resource counts, and
identical initial parameters and state in every reconstructed mechanism pair. We
report paired-seed means and 95\% $t$ intervals, paired $t$ tests, two-sided
exact sign-flip tests, and Holm correction within the stated contrast families.
The all-allocation linear mechanism-by-allocation-by-load interaction is
exploratory because it was specified during aggregation. Seeds are computational replicates;
$N_E:N_I$ is a contact allocation, not an E/I-cell ratio.

\emph{Why internal activation matters.}
The scalar-output corollary in the main text concerns a monotone activation
applied after a single optimized readout. Internal activation in a tree is a
different operation: it changes the operating point and slope of every
intermediate branch before downstream branches are evaluated. Removing it can
compress passive shunting voltages, reduce useful gradients, and move the
learned network into a load- or saturation-dominated regime. The sign reversal
observed in the mechanism-specific-recipe ablation confounds the
branch rule with its optimization policy. We therefore interpret it only as an
operating-regime diagnostic. Same-policy and matched-operating-point
comparisons, which mostly span zero, provide the rule comparison and are
consistent with the local theorem.

\emph{Single-branch gain--load simulation.}
To directly compare the gain--load calculation with the implementation, we also
simulate one branch using the exact passive voltage rules from
Methods~\ref{methods:dendritic-implementation}: raw additive
$V^{\add}=E-I$ and shunting $V^{\sh}=E/(1+E+I+L)$. This benchmark was designed
before looking at learned-network performance: it varies exactly the two
quantities that the criterion says should control the sign in this simplified
setting, shared-gain log-SD $s_g$ and positive-load log-SD $s_L$. Positive class-conditioned E/I drives are
perturbed by lognormal shared gain and positive lognormal load. For each grid
point we compute a local delta-method prediction from the branch-rule Jacobian at
the baseline operating point, then compare it to exact Monte Carlo $\dprime^2$
from the nonlinear branch equations. The operating point is
$(\bar E,\bar I,\bar L)=(3,1,0.5)$, with
$(\Delta E,\Delta I)=(0.6,-0.2)$ and $\sigma_{\rm pr}=0.25$.
The mean-preserving lognormal grids are
$s_g\in\{0,0.1,0.2,0.3,0.4,0.6,0.8\}$ and
$s_L\in\{0,0.1,0.2,0.35,0.5,0.75,1,1.25,1.5\}$, where both coordinates
are log-SDs. Each grid point uses eight seeds and 120,000 trials per seed. This
untrained simulation directly compares the local prediction with the nonlinear
voltage rule, independently of the contextual-stream task
(Algorithm~\ref{alg:branch-mc}).

\label{methods:branch-rule-check}
\begin{algorithm}[t]
\caption{Branch-rule Monte Carlo check: delta-method vs.\ exact $\dprime^2$ over the gain--load grid (Fig.~\ref{fig:ei-circuit}D).}
\label{alg:branch-mc}
\begin{algorithmic}[1]
\REQUIRE Positive midpoint drives $(\bar E,\bar I)$, class differences $(\Delta E,\Delta I)$, private-noise SD $\sigma_{\rm pr}$, and mean external denominator drive $\bar L$; mechanism $m\in\{\add,\sh\}$; log-SD grid $\{(s_g,s_L)\}$; samples $N$.
\ENSURE Predicted $\widehat{\dprime}^2_{m}(s_g,s_L)$ and Monte Carlo $\dprime^2_{m}(s_g,s_L)$ on the grid.
\STATE Branch voltage rules: $V^{\add}=E-I$, \quad $V^{\sh}=E/(1+E+I+L)$.
\FOR{each grid point $(s_g,s_L)$}
  \STATE \textbf{// Delta method at $(\bar E,\bar I,\bar L)$}
  \STATE $v_g\gets\exp(s_g^2)-1$, \quad $v_L\gets\bar L^2[\exp(s_L^2)-1]$ \COMMENT{exact variances for mean-one lognormal factors}
  \STATE Compute $\partial_E V^m,\partial_I V^m,\partial_L V^m$ and $\partial_gV^m=\bar E\partial_EV^m+\bar I\partial_IV^m$
  \STATE $\Delta V^m\gets\partial_EV^m\Delta E+\partial_IV^m\Delta I$
  \STATE $s_m^2\gets\sigma_{\rm pr}^2[(\partial_EV^m)^2+(\partial_IV^m)^2]+v_g(\partial_gV^m)^2+v_L(\partial_LV^m)^2$
  \STATE $\widehat{\dprime}^2_m\gets(\Delta V^m)^2/s_m^2$ \COMMENT{first order; omits gain--private-noise products}
  \STATE \textbf{// Exact Monte Carlo through the nonlinear branch}
  \FOR{$\theta\in\{0,1\}$, $\ n=1,\dots,N$}
    \STATE $g \gets \exp(s_g z_g-\tfrac12s_g^2)$, \quad $L \gets \bar L\exp(s_L z_L-\tfrac12s_L^2)$, \quad $z_g,z_L\sim\mathcal N(0,1)$
    \STATE $E\gets g\max\{\bar E+(2\theta-1)\Delta E/2+\epsilon_E,10^{-5}\}$, \quad $I\gets g\max\{\bar I+(2\theta-1)\Delta I/2+\epsilon_I,10^{-5}\}$, \quad $\epsilon_E,\epsilon_I\sim\mathcal N(0,\sigma_{\rm pr}^2)$
    \STATE $V^{(n)}_\theta \gets V^m(E,I,L)$ \COMMENT{exact rule, no linearization}
  \ENDFOR
  \STATE $\dprime^2_{m} \gets \big(\widehat\E[V_1]-\widehat\E[V_0]\big)^2 \big/ \big[\tfrac12\widehat\Var(V_0)+\tfrac12\widehat\Var(V_1)\big]$
\ENDFOR
\STATE \RETURN $\{\widehat{\dprime}^2_{m}\}$ and $\{\dprime^2_{m}\}$ over the $(s_g,s_L)$ grid \COMMENT{compare per mechanism}
\end{algorithmic}
\end{algorithm}

\subsubsection{Frozen-feature CIFAR-10 normalization bridge}
\label{methods:cifar-normalization}

We use CIFAR-10~\citep{Krizhevsky2009CIFAR} as a conventional machine-learning
bridge, not as biological evidence. An ImageNet-pretrained ResNet-18
~\citep{HeEtAl2016} is frozen before the CIFAR-10 split is opened. Images are
resized to $64\times64$ and passed through layer 4; adaptive $2\times2$ pooling
gives a nonnegative $512\times2\times2$ feature map. A fixed random split uses
45,000/5,000 training/validation images and the standard 10,000-image test set.

The additive arm applies a linear ten-class readout to the 2,048 features. The
group-divisive arm---the exact ANN analogue of the conductance rule in this
experiment---first partitions the 512 channels into eight contiguous groups
and maps each group $h_g$ to $h_g/(0.05+\overline h_g)$, where the mean spans
its 64 channels and four spatial positions. This mean-only local response
normalization is related to earlier divisive feature normalization
~\citep{JarrettEtAl2009,KrizhevskyEtAl2012AlexNet}; it is not GroupNorm
relabelled as a dendritic tree. Comparators are a
$2048\!\to\!256\!\to\!10$ ReLU MLP, LayerNorm, eight-group GroupNorm, and an
oracle that divides by the realized gain before applying the additive readout
~\citep{IoffeSzegedy2015,BaKirosHinton2016,WuHe2018GroupNorm}.
Within each of five fixed seeds, every trained arm receives the same
four-condition AdamW search over learning rates
$\{10^{-3},3\times10^{-3}\}$ and weight decays $\{0,10^{-4}\}$. Clean
validation log loss selects the complete model; training stops after at most 35
epochs with patience 6, and the test labels are not used for selection.

At test time, the same mean-one lognormal draws are paired across models.
Gain is global, shared within the eight channel groups aligned to the shunting
divisors, shared over the four spatial positions, or assigned by a fixed
balanced shuffle that mixes gain supports within each divisor. We test
$s_g\in\{0,0.4,0.8,1.2\}$. A separate aligned-gain control multiplies each
estimated shunting divisor by independent mean-one lognormal noise with
log-SD in $\{0,0.4,0.8,1.2\}$. The primary endpoint is categorical log loss;
accuracy and calibration gap are secondary. Computational seeds are paired
replicates. This experiment asks whether support-matched local normalization
protects a conventional frozen representation; it does not test morphology,
biological conductances, or benchmark superiority.

\subsubsection{Mouse V1 (Stringer et al.\ 2019) preprocessing and readout protocol}
\label{methods:v1-preprocessing}

\paragraph{Trained-DendriNet protocol.}
The V1 experiment uses raw nonnegative trial responses, eight uniform
direction classes, and no input normalization. For each computational seed,
candidate neurons are stably ranked by variance using the training partition
only; taking prefixes of that one ranking gives the nested upstream pools
$N_{\rm in}=100,250,500,1000,2000$. Held-out trials therefore affect neither
neuron identity nor order. The clean/noise fits use a seeded, non-stratified
90/10 train/test split followed by a non-stratified 85/15 split of the training
portion, for nominal train/validation/test fractions 76.5/13.5/10\%. The exact
counts for sessions GT1/GT2/GT3 are respectively 3275/578/429,
3418/604/447, and 3397/600/445. The direct-pathway population network uses
\texttt{input\_mode=1} with the identity input mapping, shared transferred input
(\texttt{independent\_pathways=false}), and no learned inhibitory population.
Each branch receives 40 excitatory and 10 inhibitory feedforward contacts, has
no recurrent or somatic contacts, and uses either the implemented additive or
shunting voltage rule followed by a trainable \texttt{param\_tanh}
post-voltage activation. Shallow $[8]$, intermediate $[2,3]$, and deep $[2,1,2]$ trees all
contain eight branch units per output. We verify equality of active
contacts, dense candidate slots, couplings, gates, and total trainable
parameters across morphology and mechanism within every $(N_{\rm in},P_E)$ cell.
Within a paired additive--shunting cell, the same base seed determines the data
partition, training-only neuron ranking, topology and masks, parameter
initialization, minibatch order, and evaluation/probe draws. Changing the base
seed changes all of these streams; the mechanism-specific analytical equations
remain the only initialization-policy difference.

\paragraph{Optimization recipe.}
An equal-budget validation sweep allowed mechanism-specific initializer and
learning-rate selection; both mechanisms selected analytical activation
initialization with adaptive operating-point preservation and branch/activation
learning rate $3\times10^{-4}$. Top-$K$ scores and the linear decoder use
$2\times10^{-3}$. The initial upstream and state fits used at most
200 epochs, patience 30, and best-state restoration. We then extended only the
training horizon while holding all scientific settings fixed. Of the upstream
fits, 454
required the 400-epoch, patience-80 boundary extension. For running state, 184,
18, and 86 fits were resolved at the 400/80, 800/160, and 1600/320 stages;
all 43 pairs extended from 800 to 1600 retained their gap sign, 82 of 86 fits
stopped before epoch 1600, and none was validation-best at the terminal epoch.
For the final analysis, every upstream cell uses the 400/80 ceiling and every
running-state cell uses the 1600/320 ceiling, with best-state restoration
unchanged. This uniform rule was fixed before the regenerated test endpoints
were examined. Training otherwise uses batch size 128, gradient clipping
at 5, and no automatic mixed precision (AMP). Top-$K$, block-linear, and
activation gradient-scaling
strategies are disabled. Adam's own weight decay is zero. Separately, the
codebase's \texttt{CustomWeightDecayOptimizer} applies model-specific decay to
active synaptic weights and the matched boost to inactive eligible weights with
configured rate 0.01 (the per-step multiplicative amount is learning rate
times that rate). These settings and validation budgets are identical across
the additive and shunting runs. The configuration switch is
\texttt{use\_shunting}, but the selected analytical policy uses
mechanism-specific voltage equations, branch scales, and gate states. The
comparison therefore uses the complete additive and shunting models selected
under the same validation budget; it is not a rule-swap experiment with all
parameters fixed.
As a same-policy sensitivity, occupancy initialization reduced validation
accuracy relative to analytical initialization by 0.186 percentage points for
additive and 0.187 percentage points for shunting (95\% paired intervals excluded
zero), supporting analytical initialization for both mechanisms as the primary
recipe. Occupancy initialization for both mechanisms left small clean
shunting--additive gaps (0.06--0.27 percentage points) but attenuated the
narrow-readout interaction; this sensitivity concerns clean accuracy only and does not validate the
decoder-log-loss or injected-noise endpoints.

\paragraph{Held-out decoder-bound score.}
For each clean test cell, $q(\theta\mid V)$ is the trained eight-way task
decoder evaluated on held-out trials. We compute
\[
 I_{\rm LB}=\widehat H_{\rm test}(\theta)
 +\frac{1}{n_{\rm test}}\sum_{t=1}^{n_{\rm test}}
   \log q(\theta_t\mid V_t),
\]
in nats and divide by $\log 2$ for bits. The empirical label entropy, test
examples, neuron selection, morphology, $N_{\rm in}$, $P_E$, and computational
seed are identical within every additive--shunting pair. Thus
$\Delta I_{\rm LB}=I_{\rm LB}^{\sh}-I_{\rm LB}^{\add}$ is exactly the paired
additive-minus-shunting held-out categorical-log-loss difference in different
units; it is not a lower bound
on $I(\theta;V_{\sh})-I(\theta;V_{\add})$. Pairing is performed before averaging
computational seeds within recording; the three recording-level summaries are
the biological units shown in Fig.~\ref{fig:v1-main}.

\paragraph{Held-out noise and natural-state tests.}
For the noise test, five deterministic perturbation draws are applied
to each fitted clean test model at relative scales
$0,0.1,0.25,0.5,1,2$. Private perturbations are independent across neurons and
scaled by training-partition neuronal SD. The shared control follows the first
principal component of within-class training residuals and is scaled by the SD
of its training scores. After either perturbation, responses are clamped at
zero to preserve the nonnegative direct-E/I input domain used by the fitted
models. No test labels determine either geometry. For the natural-state test,
the bottom and top running-speed quartiles are disjoint and the middle half of
trials is excluded. A seeded, non-stratified split reserves 20\% of the
low-running pool as an untouched same-state endpoint; the remaining 80\% is
split 85/15 for training/validation and training-only neuron ranking. Exact
train/validation/same-state-holdout/full-high-state counts for GT1/GT2/GT3 are
778/138/230/1071, 759/135/224/1118, and 1652/292/487/1111. Evaluation then
subsamples the low and high endpoints to identical per-class counts, yielding
230, 224, and 487 trials per endpoint in GT1/GT2/GT3. The same fitted model,
training-state neuron indices, seed, and target trials are reused unchanged in
each additive--shunting pair.
This sweep uses $N_{\rm in}=100,500,2000$, $P_E=8,16$, shallow and deep exact-resource
trees, and the validation-selected optimization recipe.
Writing $L_{m,s}$ for held-out categorical log loss of mechanism
$m\in\{\add,\sh\}$ at the untouched same-state endpoint $s=0$ or target-state
endpoint $s=1$, the prespecified primary interaction is
\begin{equation}
 \Delta_{\rm state}
 = (L_{\add,1}-L_{\sh,1})-(L_{\add,0}-L_{\sh,0})
 = (L_{\sh,0}-L_{\sh,1})-(L_{\add,0}-L_{\add,1}).
 \label{eq:v1-state-did}
\end{equation}
Positive values mean that the shunting log-loss advantage grows across
the low-to-high running-state transfer. Accuracy difference-in-differences is a
secondary endpoint with the same orientation.

Algorithm~\ref{alg:v1-readout} summarizes the fixed-template outer-fold
calculation.

\begin{algorithm}[t]
\caption{Core of one outer fold in the fixed-template mouse V1 evidence readout:
training-low selection and calibration, followed by held-out high-state
evaluation. Outer folds and resamples repeat this core with deterministic paired
seeds.}
\label{alg:v1-readout}
\begin{algorithmic}[1]
\REQUIRE Session responses $X\in\mathbb{R}^{T\times P}$ (deconvolved, nonnegative); outer-training low-state trials $\mathcal{T}_{\mathrm{lo}}^{\rm tr}$; outer-held-out high-state trials $\mathcal{T}_{\mathrm{hi}}^{\rm te}$; direction labels $\theta$; anchor class $c$; pool size $N$; fixed evidence budget $(N_+,N_-)=(8,2)$; resampling seed.
\ENSURE Held-out oriented AUC, train-fitted balanced threshold error, and
balanced log loss for fixed additive, fixed shunting, and fitted free additive.
\STATE Fix adjacent contrast $a\gets c$, $b\gets(c+1)\bmod n_{\mathrm{classes}}$; relabel trials from classes $\{a,b\}$ as $\theta\in\{0,1\}$
\STATE Draw subpool $S\subset\{1,\dots,P\}$, $|S|=N$, uniformly at random with the configured seed
\STATE $\Delta\mu_p \gets \E[X_{:,p}\mid\theta\!=\!1,\,\mathcal{T}_{\mathrm{lo}}^{\rm tr}]-\E[X_{:,p}\mid\theta\!=\!0,\,\mathcal{T}_{\mathrm{lo}}^{\rm tr}]$ \COMMENT{training-low difference, $p\in S$}
\STATE $S_+ \gets \{p\in S:\Delta\mu_p>0\}$,\quad $S_- \gets \{p\in S:\Delta\mu_p<0\}$ \COMMENT{training-defined evidence signs, not cell types}
\STATE $E\text{-}\mathrm{path} \gets$ top-$N_+$ of $S_+$ by $\Delta\mu_p$;\quad $I\text{-}\mathrm{path} \gets$ top-$N_-$ of $S_-$ by $-\Delta\mu_p$
\FOR{each training-low trial $t\in\mathcal{T}_{\mathrm{lo}}^{\rm tr}$}
  \STATE form pathway means $E_t,I_t$, then $V_t^{\add}=E_t-I_t$ and $V_t^{\sh}=E_t/(1+E_t+I_t+10^{-6})$
\ENDFOR
\STATE On training-low scores only, choose each score orientation, fit its
balanced-error threshold and balanced logistic calibration, and fit the
sign-constrained free comparator $E-\rho I-b$; record the exact threshold-mapped
additive rule from Proposition~\ref{prop:scalar-threshold}.
\FOR[{held-out high-state evaluation; equal pathway weights, no refitting}]{each trial $t\in\mathcal{T}_{\mathrm{hi}}^{\rm te}$}
  \STATE $E_t \gets \mathrm{mean}_{p\in E\text{-}\mathrm{path}} X_{t,p}$,\quad $I_t \gets \mathrm{mean}_{p\in I\text{-}\mathrm{path}} X_{t,p}$
  \STATE $V^{\add}_t \gets E_t-I_t$,\quad $V^{\sh}_t \gets E_t/\bigl(1+\max(E_t,0)+\max(I_t,0)+10^{-6}\bigr)$
\ENDFOR
\STATE Apply the fixed training orientation, threshold, and calibration to
$\mathcal{T}_{\mathrm{hi}}^{\rm te}$; compute AUC, balanced threshold error,
and balanced log loss.
\STATE \RETURN held-out metrics and the threshold-equivalence check.
\end{algorithmic}
\end{algorithm}

\paragraph{Dataset.}
The V1 transfer experiment uses the publicly released Stringer et al.\ 2019
two-photon oriented-stimulus dataset
~\citep{Stringer2019,StringerOrientedStimuliFigshare2019}. The released database
index lists one drifting-gratings recording from each of three mice (GT1--GT3),
and we analyzed all three. The analysis matches each
\path{gratings_drifting_*.npy} response file to \path{database.npy} and uses the
corresponding running-speed trace from \path{all_running.npy}. GT1, GT2, and GT3
contain $20{,}616$, $14{,}034$, and $11{,}311$ neurons measured over $4{,}282$,
$4{,}469$, and $4{,}442$ trials, respectively; together the recordings cover
$45{,}961$ neurons and $13{,}193$ trials. The small number of mice and single
stimulus protocol limit generality.

\paragraph{Fixed-template preprocessing.}
We use the deconvolved activity of each cell averaged over the response window
of each individual trial (one value per cell per trial; no averaging across
trials) as input to the supplementary fixed-template readout analyses.
These raw responses are nonnegative in the released array, so the V1 readout uses
rate-like inputs and does not z-score or mean-center them before evidence-sign selection.
The session stores each trial's drifting-grating direction as a continuous value in
$[0,2\pi)$; the fixed-template analysis bins these angles into
$n_{\mathrm{classes}}=12$ uniform direction classes, whereas the trained
analysis uses eight. No additional smoothing, dimensionality reduction, or quality
filtering is applied: all neurons in each analyzed session are retained for the upstream
candidate pool. The nested adjacent analysis uses base seed 20260615 and derives deterministic
session/pair/pool/resample/fold substreams with \texttt{SeedSequence}; all
mechanisms share the same sampled candidates.

\paragraph{Fixed-template locomotion-state split.}
The locomotion state variable is the per-trial mean running speed. We split
trials within each session into ``low-state'' and ``high-state'' subsets using
a tertile split. GT1 contains 1413/1413 low/high trials and GT2 contains
1475/1475. GT3 has many exact zeros at the lower-tertile threshold; retaining
all ties yields 2431 low-state and 1466 high-state trials. Trials between the
two thresholds are dropped to preserve a clear state contrast.

\paragraph{Adjacent $N_+=8,N_-=2$ fixed-budget candidate-access protocol.}
The analysis uses adjacent class pairs $c$ versus $c+1$, candidate pools
$N\in\{25,50,100,200,500\}$, a fixed $N_+=8,N_-=2$ selected evidence budget, and 80
candidate-pool resamples per class pair and session assigned across five outer
folds. Every selection, score-orientation, threshold, calibration, and
free-additive fit uses only outer-training low-state trials. These fixed scores
are evaluated on outer-held-out high-state trials using oriented AUC, balanced
threshold error, and balanced log loss.

\paragraph{Fixed-template readouts.}
For each neuron-resampling draw, candidate neurons are ranked by the low-state
class-mean difference. Positive-difference candidates are routed to the
downstream $E$ path and negative-difference candidates to the $I$ path. These
are functional evidence assignments, not recorded-cell identities. The code
requires at least the full configured positive- and negative-evidence counts; if either sign pool
is too small, the entire selection block is skipped rather than using a partial
budget. Included sample counts and skipped blocks are recorded in the released
source data.
The pooled drives are simple averages over the selected indices,
$E=\mathrm{mean}(X_E)$ and $I=\mathrm{mean}(X_I)$, which is equivalent to fixed
equal weights within each pathway, not equal weights per selected neuron across
pathways. For example, in the $N_+=8,N_-=2$ budget each negative-evidence neuron
has four times the coefficient magnitude of each positive-evidence neuron. The additive diagnostic uses
$V=E-I$. The shunting diagnostic uses
$V=E/(1+\max(E,0)+\max(I,0)+10^{-6})$ in code; the clipping is a numerical guard
and is inactive for the raw nonnegative responses except for floating-point
roundoff. This is a fixed equal-weight pathway-average convention, not the
implementation's \texttt{*\_normalized\_additive} model: the additive diagnostic has no
additive denominator, response normalization, batch normalization, or fitted
post-hoc scaling. Candidate selection is based on low-state trials, and performance is
evaluated on high-state trials. No synaptic weights, thresholds, or readout
parameters are fitted on the high-state evaluation trials.
This fixed scalar is not the trainable DendriNet: it has no Top-$K$ optimization,
tree, child conductance, internal tanh, or trained population decoder. Moreover,
the leak value $1$ is expressed in the arbitrary units of the deconvolved
response, so we include training-only scale/leak sensitivity and
the free-$\rho$ additive comparator of Proposition~\ref{prop:scalar-threshold}.

\paragraph{Training-only effective-leak transfer.}
Only $r=\text{leak}/\text{input scale}$ is identifiable from the fixed-shunt
score. We evaluate $r\in\{0.01,0.1,1,10,100\}$ with a strict
leave-one-session-out rule. For each held-out session, response variant, and
candidate-pool size, the selector minimizes the unweighted mean of the other
two sessions' session-mean outer-training log loss. Ties within $10^{-12}$
prefer $r=1$ and then the smaller value. The selected $r$ is frozen before any
held-out-session endpoint is accessed; this separation is verified by
perturbing the held-out endpoint values. Evidence-pathway selection and score
calibration still use low-running training trials from the evaluated session,
while its high-running trials remain evaluation-only. The training loss used
for selection is a resubstitution estimate after training-only calibration rather than an additional
inner-trial split. This secondary sensitivity is based on only two training
sessions per fold. Increasing $r$ weakens the input-dependent
denominator relative to the leak, so selection of $r=10$ indicates canonical
response-scale mismatch and movement toward an excitation-dominated limit; it
is not evidence that stronger divisive action is beneficial.

\paragraph{Training-only, label-free running control.}
The running-control analysis uses the same nonnegative activity domain for its
raw and corrected arms. For each neuron, a one-step class-fixed-effect
quasi-Poisson score estimates the local slope in
\[
 \mathbb E[X_n\mid c,r]
 =\mu_{cn}\exp\!\left[\beta_n(r-\bar r_{\rm train})\right]
\]
using only an outer-training partition spanning low and high running. The
label-free correction applied to a target trial is
\[
 X_{{\rm corr},n}=\max(X_n,0)
 \exp\!\left[-\beta_n(r-\bar r_{\rm train})\right].
\]
Inner-fold fits provide out-of-fold corrected low-state responses for readout
training; a fit on the full outer-training partition is applied to held-out
high-state trials. Target labels are never passed to the application function.
Candidate populations and positive/negative evidence indices are selected once
from the raw-nonnegative training response and then held identical across raw
and corrected arms. Score orientation, threshold, fitted additive ratio, and
probability calibration remain nested in the same outer fold.

The analysis crosses adjacent/opposite class pairs, candidate pools
$100,200,500,1000,2000$, evidence budgets $8/2$ and $40/10$, and 100 resamples
per pair. In all 15 session--fold applications, numerical exponent clipping is
zero and outputs are finite and nonnegative; the held-out correction-factor
1st/99th percentiles span 0.127--0.433 and 1.424--1.697 across folds. The
recorded responses were already nonnegative, so raw and raw-nonnegative
preprocessing give identical values. We excluded an inverse-\texttt{log1p}
transformation because it placed a session-weighted 32.41\% of held-out
corrected responses at zero.
The retained correction is a nuisance-sensitivity model and does not identify a
cortical gain mechanism.

\paragraph{Independent Allen V1 scope control.}
\label{methods:allen-v1-scope}
To assess transfer beyond the three Stringer recordings, we analyzed five
prespecified Allen Brain Observatory Visual Coding VISp recordings from five
mice~\citep{DeVriesEtAl2020}. For each drifting-grating trial, the input is the
nonnegative difference between mean stimulus-window and prestimulus $\Delta F/F$;
there is one cell value per trial and no averaging across trials. The recordings
contain 598--599 eligible trials and 207--261 cells. Training-only selection and
calibration follow the fixed-feature state-transfer protocol, with each mouse
as the statistical unit. The prespecified consistency rule was not met: the
raw state interaction was positive in three of five mice, the narrow interaction
exceeded the wide interaction in three of five, and running correction reduced
its absolute magnitude in two of five. We therefore retain this analysis as a
heterogeneous nonconfirmation, not an independent replication of the Stringer
pattern or evidence for a cortical shunting circuit.

\paragraph{Statistical unit.}
For the main Stringer analysis, the three recordings from three mice are the
independent biological units. Neuron-pool
subsamples, class pairs, and model seeds quantify within-session algorithmic
variation and are not treated as additional animals. The main figure reports
the mean $\pm$ SEM across the three sessions; session-level values and
session-clustered or hierarchical intervals are retained in the released source
tables. Pooled bootstrap SEM is used only as within-session Monte Carlo
precision.

\subsection{Scope, assumptions, and limitations}
\label{methods:scope-assumptions}

\subsubsection{Consolidated assumptions and limitations}
\label{methods:assumptions}

The mathematical results of this paper rest on a small set of assumptions that
we collect here for ease of reference. Table~\ref{tab:assumptions-summary}
maps each formal claim to its assumptions and to the section where they are
discussed.

\begin{table}[ht]
\centering
\caption{\textbf{Assumptions underlying each formal claim.}}
\label{tab:assumptions-summary}
\footnotesize
\setlength{\tabcolsep}{4pt}
\renewcommand{\arraystretch}{1.08}
\begin{tabularx}{\linewidth}{@{}>{\raggedright\arraybackslash}p{0.22\linewidth}>{\raggedright\arraybackslash}X>{\raggedright\arraybackslash}p{0.22\linewidth}@{}}
\toprule
\textbf{Result} & \textbf{Key assumptions} & \textbf{Discussed in} \\
\midrule
Theorem~\ref{thm:local-equivalence}
  (local containment; conditional equality)
  & Nonnegative presynaptic activities $\xe,\xI\ge 0$ and nonnegative synaptic
    weights $(\we,\wi)\ge 0$; $Z=(\xe,-\xI)$ is only a comparator-coded moment
    representation; finite class-conditional second moments and a nondegenerate
    pooled within-class covariance (Gaussianity is needed only for an LDA
    interpretation, not for the Rayleigh optimization); and a valid first-order
    linearization at a positive operating point. These suffice for containment.
    Equality additionally requires an additive-optimal effective ray with a
    positive self-consistent shunting realization, as defined in
    Methods~\ref{methods:self-consistent}. Extra constrained ties additionally
    require equality of projective feasible-ray sets and an
    $\mathcal F$-feasible self-consistent representative as stated in
    Proposition~\ref{prop:constraint-equivariant}.
  & \S\ref{subsec:local};
    Methods~\ref{methods:cone-lda}, \ref{methods:local-equivalence-proof}\\
Corollary~\ref{cor:activation-local-tie}
  (activation preserves the local relation)
  & Smooth monotone scalar map $\phi(V)$ at the soma with
    $\phi'(V_0)>0$ at the induced operating point; assumptions of
    Theorem~\ref{thm:local-equivalence}.
  & \S\ref{subsec:local}; this corollary applies to scalar
    \emph{output} activation; \emph{internal-tree} activation
    is outside the corollary and is treated by the activated-tree Jacobian. \\
Activation-aware tree extension
  Eq.~\eqref{eq:activated-tree-dprime}
  & Smooth branch activation with stated $(\kappa_\phi,b)$; valid first-order Jacobian
    linearization at the activated operating point; internal activations are
    analyzed as an extension to, not a replacement for, the passive criterion.
  & \S\ref{subsec:trees}; Methods~\ref{methods:tree-optimization},
    \ref{methods:dendritic-implementation}\\
Lemma~\ref{lem:decorrelation}
  (depth-wise decorrelation)
  & Shared component $Z_g$ has finite variance and its coefficient composes as
    $a_\ell=a_0\prod_{k=1}^{\ell}\kappa_k$ with $0<\kappa_k<1$;
    $\Cov(\varepsilon_{\ell,1},\varepsilon_{\ell,2})=0$,
    $\Cov(\varepsilon_{\ell,j},Z_g)=0$, and
    $\Var(\varepsilon_{\ell,j})\ge\sigma_{\mathrm{priv}}^2>0$.
  & \S\ref{subsec:trees}; Methods~\ref{methods:tree-linearization}\\
Proposition~\ref{prop:additive-flattening}
  (additive flattening)
  & Passive linear current-mode additive recursion with nonnegative local
    input and transfer coefficients; no node nonlinearity.
  & Methods~\ref{methods:tree-flattening}\\
Gain--load--alignment principle
  Eqs.~\eqref{eq:regime-caricature}--\eqref{eq:tree-criterion}
  & Positive E/I operating point; local linearization of the tree; measured or
    controlled baseline fidelity $B(\mathcal T,r)$; shared gain component that the
    branch-local divisor can attenuate; residual denominator variability;
    approximate decoupling of gain and load contributions; bounded
    second-order curvature of $V^{\sh}$
    (Methods~\ref{methods:delta-method}).
  & \S\ref{subsec:beyond-local};
    Methods~\ref{methods:regime-map}\\
\bottomrule
\end{tabularx}
\end{table}

\paragraph{Modeling limitations.}
Our framework is passive, quasi-static, and conductance-based. Active
mechanisms (NMDA voltage-dependence, dendritic spikes, sodium
backpropagation) are deferred to follow-up work; Methods~\ref{methods:cable-derivation}
discusses when this approximation is valid and what it likely changes. The
empirical V1 transfer uses one public dataset and one state variable
(locomotion); the framework does not yet incorporate spike timing or
short-term synaptic dynamics. The main V1 analysis trains DendriNet on eight-way
direction classification; the fixed-template appendix uses adjacent direction
pairs as a comparator and input-scale stress test. A complementary direction is to combine this
population-readout criterion with active dendritic learning frameworks that
model compartment-specific teaching signals~\citep{Guerguiev2017,Sacramento2018}.

\subsection{Reproducibility}
\label{methods:grp-repro}

\subsubsection{Statistical analysis and replication units}
\label{methods:statistics}

The replication unit follows the data-generating process. Independently
generated synthetic libraries and paired computational seeds quantify
simulation or optimization variation; they are not biological replicates.
For V1, each recording from a distinct mouse is the biological unit, and neuron resamples, noise
draws, and training seeds are first aggregated within recording. Main-figure
$t$ intervals are two-sided 95\% intervals over the stated paired seed or
recording units; bootstrap intervals are used only where captions explicitly
identify resampling, and SEM bands are descriptive summaries rather than
hypothesis tests. Pairing is performed before contrasts whenever the mechanisms
share data, seeds, or perturbations. Holm correction is applied only within the
prespecified comparison families named in the corresponding protocol; isolated
surface summaries and operating sensitivities are labeled descriptive. The
disjoint-seed tangent validation and exploratory sensitivity panels are
identified as such and are not pooled with confirmatory evidence.

For the exact-inventory hierarchy, parameters were selected on pilot seeds 0--1
before the seed-100--107 held-out test. Tangent matching, the coordinate bridge,
and operating-point sensitivity were subsequent analyses using disjoint seeds;
they are reported as validation or sensitivity analyses rather than
preregistered confirmations.

\subsubsection{Code organization and reproducibility entry points}
\label{methods:code-organization}

The reproducibility materials comprise a reusable model package, the paper
repository, and the upstream data and results needed to rebuild each analysis.
The reusable package contains
the dendritic E/I implementation, branch-local additive and shunting modules,
sparse Top-$K$ layers, pathway transfer, a minimal configuration, CPU examples,
and tests; manuscript sources and large result caches are stored separately.
The Git paper repository contains the manuscript, figure and analysis scripts,
reported configurations, tests, tracked source tables and compact aggregates,
and \path{reproducibility/manifest.yaml}. The
manifest maps every figure to its generating script, source/result
artifacts, and, where necessary, external raw-data or cluster-result roots.

The panel-level evidence registry additionally labels each panel as symbolic
theory, standalone numerical simulation, a forward-equation calculation, a
standard-path trained DendriNet fit, a bespoke DendriNet intervention, or a
fixed-feature empirical analysis. Thus adjacency in a composite figure does
not imply a shared fitting pipeline. Every final trained fit was launched from
one read-only, versioned execution bundle containing a single Git-tracked
DendriNet source tree and the paper configurations and scripts. Before training,
the launcher verified a recursive source-tree hash; each result records that hash, both Git
commits, and the exact configuration hash. Historical runs from earlier code
snapshots are retained as provenance but are not pooled with the final runs.

The figure-generation target validates each local source or aggregate before
rendering. CPU theory and synthetic analyses have a separate target;
trained-network and V1 fits additionally require the cited public data, the
DendriNet training environment, and the result roots listed in the manifest.
The manuscript source archive contains everything needed to compile and inspect
the paper, but not the large recordings or checkpoint collections needed for
full computational reproduction.

\paragraph{Code and data availability.}
\ifneuripsanonymous
Repository identifiers are suppressed in the anonymous build. The Git source
will include the manifest, configurations, tracked source tables, and figure
inputs. External data and result roots are identified separately. Raw V1
recordings remain available from the cited Figshare and Allen repositories.
\else
The paper source, reproduction scripts, manifest, configurations, and tracked
figure inputs are available at
\par\smallskip\noindent
\url{https://github.com/houman1359/dendritic-gain-load-criterion}
\par\smallskip\noindent
All final trained fits use DendriNet commit \texttt{7904160d2c2}, whose full
hash is archived in the paper's release manifest, from
\url{https://github.com/KempnerInstitute/DendriNet}.
\par\smallskip\noindent
Raw V1 recordings remain available from the cited Figshare repository and
through AllenSDK.
\fi

\subsubsection{Random seeds used}
\label{methods:seeds}

Table~\ref{tab:seeds-summary} lists the randomization and replication for the
main and supplementary experiments. Script, configuration,
source-data, and result paths are not duplicated in the typeset table; their
complete machine-checked mapping is in
\path{reproducibility/manifest.yaml}.

\begin{table}[htbp]
\centering
\caption{\textbf{Random seeds and replication used for each experiment.}}
\label{tab:seeds-summary}
\footnotesize
\setlength{\tabcolsep}{4pt}
\renewcommand{\arraystretch}{1.06}
\begin{tabularx}{\linewidth}{@{}>{\raggedright\arraybackslash}p{0.37\linewidth}>{\raggedright\arraybackslash}X@{}}
\toprule
\textbf{Experiment} & \textbf{Randomization or replication} \\
\midrule
Local-containment/conditional-equality sweep (Theorem~\ref{thm:local-equivalence};
Supplementary Fig.~\ref{fig:local-equivalence-supp}) &
160 random libraries; exact two-orientation NNLS and self-consistency check. \\
Conductance-mass boundary (Fig.~\ref{fig:framework-main}D) &
Seed 660; 360 two-class multivariate-lognormal four-input libraries; analytic
pooled moments over the pinned mass/share grid. \\
Reduced gain--load boundary and decoder policies
(Fig.~\ref{fig:regime-map}B--C) &
Equation-derived fixed-template and direction-preserving reoptimized curves;
no fitted samples. \\
Split-sample second-order local surrogate (Fig.~\ref{fig:regime-map}E) &
Seed 20260714; 160 lognormal and 160 gamma attempts; 1,500/2,500/8,000
observations per class for independent fit/calibration/test splits; 166
verified positive realizations; 154 attempts had no positive realization. \\
Implemented-equation population scaling (Fig.~\ref{fig:regime-map}D;
Supplementary Fig.~\ref{fig:identified-ceiling-supp}A--E) &
Seeds 400--407; independent 100,000-trial calibration and test splits for all
57,600 complete-grid rows; exact linear moments and converged nonlinear
quadrature. \\
Contact-capped inaccessible ceiling (Appendix
Supplementary Fig.~\ref{fig:identified-ceiling-supp}F) &
Exact differential-noise curve plus 16 Monte Carlo seeds and 12,000 balanced
trials per size and condition. \\
Exact-inventory hierarchical gain construction
(Fig.~\ref{fig:criterion-morphology}A--C) &
Pilot seeds 0--1 excluded; held-out paired confirmation seeds 100--107;
12,000 train and 12,000 test trials per gain level. \\
Exact-inventory tangent-additive validation
(Fig.~\ref{fig:criterion-morphology}B--C) &
Independent paired validation seeds 200--207; clean training-only
layer-global anchors; 12,000 train and 12,000 test trials per gain level. \\
Exact-inventory coordinate bridge (Fig.~\ref{fig:criterion-morphology}E) &
Disjoint outer seeds 300--307; 12,000 clean-train, calibration, and fresh
combined-nuisance test examples per cell. \\
\bottomrule
\end{tabularx}
\end{table}

\begin{table}[htbp]
\centering
\footnotesize
\setlength{\tabcolsep}{4pt}
\renewcommand{\arraystretch}{1.06}
\begin{tabularx}{\linewidth}{@{}>{\raggedright\arraybackslash}p{0.37\linewidth}>{\raggedright\arraybackslash}X@{}}
\multicolumn{2}{@{}l}{\textbf{Table~\ref{tab:seeds-summary} continued.}} \\[2pt]
\toprule
\textbf{Experiment} & \textbf{Randomization or replication} \\
\midrule
Exact-inventory operating sensitivity
(Supplementary Fig.~\ref{fig:exact-inventory-operating-sensitivity-supp}) &
Twenty-four-cell grid; paired seeds 100--107; 12,000 train and 12,000 test
trials per cell. \\
Passive nonlinear-capacity control
(Supplementary Fig.~\ref{fig:capacity-reduction-supp}) &
Held-out paired seeds 500--507; nested 125--12,000-example training prefixes,
validation-only selection, and a separate 12,000-example test draw. \\
Passive structural-reduction control
(Fig.~\ref{fig:criterion-morphology}F) &
Held-out paired seeds 600--607; identical observations, divisor sensors, and
two-parameter scalar decoders across five structural arms. \\
Tree-depth simulation (Fig.~\ref{fig:criterion-morphology}D;
Supplementary Fig.~\ref{fig:tree-depth}) &
10,000 trials per class per depth; base seed 11. \\
Distributional matched-comparator atlas (Supplementary Fig.~\ref{fig:nongaussian-supp}) &
Paired seeds 1729--1740; 96 distributional cells and 1,152 seed-level threshold
comparisons. \\
Contextual optimization-policy factorial (Appendix
Supplementary Fig.~\ref{fig:contextual-factorial-supp}) &
Seeds 10420--10427; 192 completed runs and 64 unique effective-learning-rate
mechanism pairs after duplicate collapse. \\
Resource-matched branch redistribution (Appendix
Supplementary Fig.~\ref{fig:resource-matched-morph-supp}) &
Seeds 11000--11007; 192 runs and 96 paired mechanism contrasts. \\
Nuisance-support factorial (Fig.~\ref{fig:locality}C--D;
Supplementary Fig.~\ref{fig:locality-ei-fixed-supp}) &
Seeds 12000--12007; 128 runs and 64 raw paired mechanism contrasts. \\
Fixed-total E/I contact-allocation surface
(Fig.~\ref{fig:framework-main}E--F) &
Seeds 12100--12107; 192 runs and 96 paired mechanism contrasts. \\
Theory/codebase branch-rule check (Fig.~\ref{fig:ei-circuit}D) &
Seeds 7301--7308; 120,000 Monte Carlo samples per seed and grid point. \\
Trained ordered-E/I gain--load sweep
(Fig.~\ref{fig:ei-circuit}E--F) &
Eight seeds in each of 72 architecture/data cells; the plotted $\alpha=1$ slice
has 384 runs and 192 raw pairs, with four morphologies averaged within each
seed block ($n=8$ per point). \\
\bottomrule
\end{tabularx}
\end{table}

\begin{table}[htbp]
\centering
\footnotesize
\setlength{\tabcolsep}{4pt}
\renewcommand{\arraystretch}{1.06}
\begin{tabularx}{\linewidth}{@{}>{\raggedright\arraybackslash}p{0.37\linewidth}>{\raggedright\arraybackslash}X@{}}
\multicolumn{2}{@{}l}{\textbf{Table~\ref{tab:seeds-summary} continued.}} \\[2pt]
\toprule
\textbf{Experiment} & \textbf{Randomization or replication} \\
\midrule
Complete branch-local mechanism--initializer factorial (text) &
Eight seeds in each of 72 cells and all four arms: 2,304 runs, 1,152
within-policy mechanism pairs, and 576 complete four-arm seed blocks. \\
Matched Top-$K$ activation policy (Appendix
Supplementary Fig.~\ref{fig:topk-comparator-ladder-supp}A--C) &
Eight paired seeds, two mechanisms, and maintained versus initialization-only
schedules; 32 completed runs. \\
Seven-arm Top-$K$ comparator (Appendix
Supplementary Fig.~\ref{fig:topk-comparator-ladder-supp}D--F) &
Seeds 12640--12647; 56 completed runs and 48 prespecified
comparator--shunting pairs. \\
DendriNet denominator-only intervention (Appendix
Supplementary Fig.~\ref{fig:denominator-only-load-supp}) &
Seeds 13000--13007 in lognormal and gamma families; 16 tasks, 80 unique fits,
and 768 paired contrast rows. \\
Frozen-feature CIFAR-10 normalization bridge
(Fig.~\ref{fig:cifar-normalization-main}) &
Seeds 4101--4105; frozen ResNet-18 features; five trained readouts plus an
oracle; four gain supports, four gain levels, and four divisor-noise levels. \\
Resource-matched trained V1 population readout
(Fig.~\ref{fig:v1-main}) &
1,080 frozen clean/private-noise fits plus 288 prespecified
running-state-interaction fits with an untouched same-state holdout; three
sessions, four fresh seeds, and training-only nested neuron pools. \\
Adjacent $N_+=8,N_-=2$ fixed-template candidate-access diagnostic
(Supplementary Fig.~\ref{fig:v1-fixed-template-supp}) &
Three Stringer sessions; seed 20260615; 80 candidate-pool resamples per pair and
session assigned across five outer folds. \\
Training-only V1 multiplicative running sensitivity
(Supplementary Fig.~\ref{fig:v1-running-recovery-supp}) &
Three sessions; seed 20260718; five outer and five inner folds; 100 resamples per
pair; identical raw-derived E/I indices across response variants. \\
Trained V1 fixed-total contact allocation &
96 runs and 48 mechanism pairs; four $(N_E,N_I)$ values with $N_E+N_I=50$
across three sessions and four model seeds. \\
\bottomrule
\end{tabularx}
\end{table}

\subsubsection{Compute resources, runtime, and reproducibility environment}
\label{methods:compute}
Analytic simulations, cone feasibility checks, V1 fixed-template analyses, figure
generation, and manuscript compilation were executed on Linux CPU nodes
\ifneuripsanonymous
(institutional x86\_64 research-computing cluster).
\else
(institutional shared compute cluster; x86\_64).
\fi
A representative CPU environment had Intel Xeon Gold 6426Y processors
(32 logical CPUs visible) and approximately 1.0\,TiB RAM. Each trained-network
task requested one GPU. Regenerating all 4,480 retained fits from the immutable
release used 101.42 allocated GPU-hours: 55.25 for the 2,304-run complete
mechanism--initializer factorial, 30.30 for the 1,368 horizon-corrected V1 fits,
and 15.87 for the remaining standard and bespoke DendriNet controls. The 1,368
deterministic V1 post-fit evaluations used another 10.21 GPU-hours, for a final
release total of 111.63 GPU-hours. These values are sums of Slurm array-task
wall time with one allocated GPU per task. Historical development, tuning,
failed jobs, and superseded migration outputs are retained in the manifest but
are neither required for this reproduction total nor pooled into reported
estimators.

In a representative reproduction run, validation checks took 7.8\,s, main figures 204.7\,s,
supplementary figures 86.4\,s, additional diagnostics 13.0\,s, and manuscript
compilation 11.8\,s, for a total of approximately 5.4 minutes. This timing
excludes model training.

\subsubsection{External assets and license compliance}
\label{methods:licenses}
Table~\ref{tab:asset-licenses} lists the external datasets used in the study,
together with their versions and licensing terms. Pinned Python package
versions for the analysis and figure environment are given in
\path{requirements-figures.txt}, and per-figure provenance (generating script,
source and result tables, and result roots) is recorded in
\path{reproducibility/manifest.yaml}.

\begin{table}[htbp]
\centering
\caption{\textbf{External datasets and licensing terms used in this study.}}
\label{tab:asset-licenses}
\footnotesize
\begin{tabularx}{\textwidth}{@{}>{\raggedright\arraybackslash}p{0.25\textwidth}>{\raggedright\arraybackslash}p{0.31\textwidth}>{\raggedright\arraybackslash}p{0.31\textwidth}@{}}
\toprule
Asset & Use in study & License / terms \\
\midrule
CIFAR-10 &
Frozen-feature machine-learning normalization bridge
(Fig.~\ref{fig:cifar-normalization-main}) &
Original University of Toronto distribution and technical report cited; the
distribution page does not state a standalone license
~\citep{Krizhevsky2009CIFAR}. \\
Torchvision ResNet-18 \texttt{IMAGENET1K\_V1} weights &
Frozen feature extractor; weights are downloaded at analysis time and are not
redistributed with the paper &
ResNet architecture and ImageNet source cited; use remains subject to the
upstream terms~\citep{HeEtAl2016,DengEtAl2009ImageNet}. \\
Stringer et al.\ 2019 oriented-stimulus V1 dataset &
Fixed-budget V1 readout plus trained implementation and robustness diagnostics
(Fig.~\ref{fig:v1-main}; supplementary V1 controls) &
Janelia Figshare dataset DOI 10.25378/janelia.8279387.v3; CC BY-NC 4.0; original paper and dataset cited~\citep{Stringer2019,StringerOrientedStimuliFigshare2019} \\
Allen Brain Observatory Visual Coding &
Five-mouse independent scope control for the downstream state-transfer
diagnostic &
Public VISp two-photon recordings accessed through AllenSDK; dataset paper
cited~\citep{DeVriesEtAl2020}; use remains subject to the upstream Allen
Institute terms. \\
\bottomrule
\end{tabularx}
\end{table}

\clearpage
\section{Supplementary Results}
\label{sec:app-supp-results}
\renewcommand{\figurename}{Supplementary Fig.}
\renewcommand{\thefigure}{S\arabic{figure}}
\setcounter{figure}{0}

The figures are organized around five scientific questions: the local
comparator boundary (Sec.~\ref{supp:theory-boundaries}), tree structure and
population access (Sec.~\ref{supp:tree-population}), resource-matched morphology
and locality (Sec.~\ref{supp:morphology-locality}), optimization in learned
networks (Sec.~\ref{supp:learned-fairness}), and transfer to V1 recordings
(Sec.~\ref{supp:v1-analyses}).

\subsection{Theory and comparator boundaries}
\label{supp:theory-boundaries}

The first group tests the local-containment theorem across input distributions
and separates gain suppression from variability introduced by the denominator.

\begin{figure}[htbp]
\centering
\includegraphics[width=0.99\textwidth]{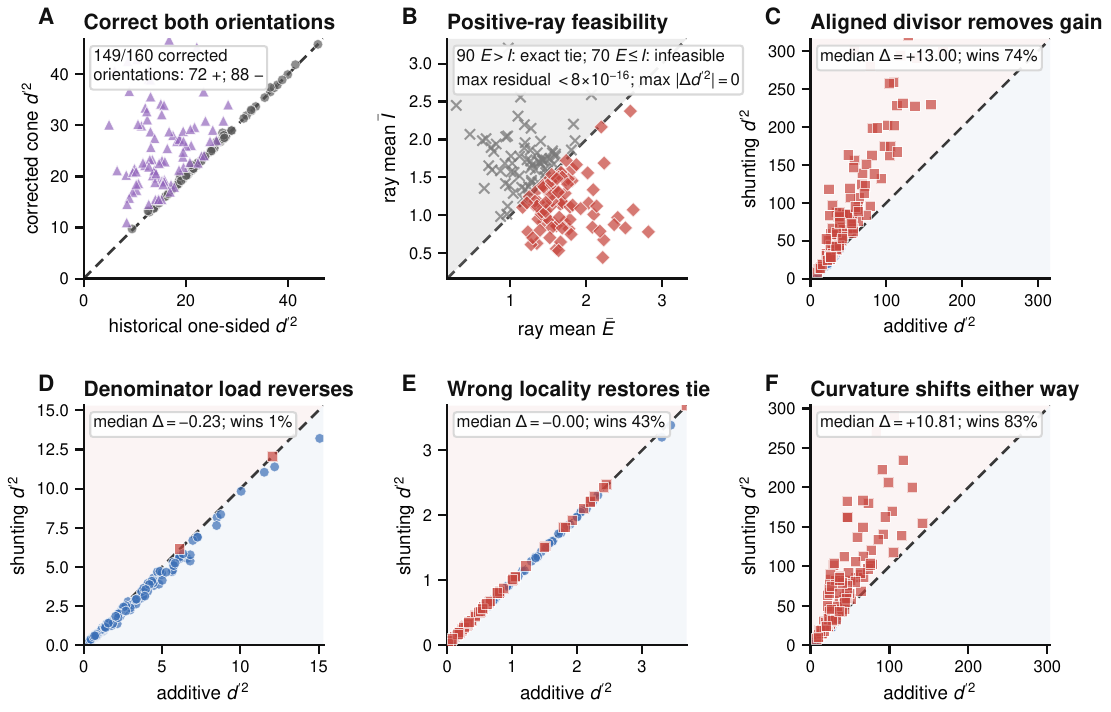}
\caption{\textbf{Local containment and finite-amplitude deviations.}
\textbf{(A)}~The two cone orientations are selected in 72 and 88 of 160
libraries; restricting the solve to one orientation changes 149 optima.
\textbf{(B)}~All 90 rays whose mean excitatory drive exceeds their mean
inhibitory drive have a positive self-consistent shunting realization, whereas
the other 70 do not. Realizable rays show numerical equality between local
shunting and additive discriminability (maximum absolute residual
$<8\!\times\!10^{-16}$). \textbf{(C--F)}~Finite-amplitude examples show that an
aligned divisor can remove shared gain, denominator load can reverse the
advantage, mismatched divisor locality can restore a tie, and curvature can
shift performance in either direction. Each condition contains 150
libraries.}
\label{fig:local-equivalence-supp}
\end{figure}

\begin{figure}[htbp]
\centering
\includegraphics[width=0.99\textwidth]{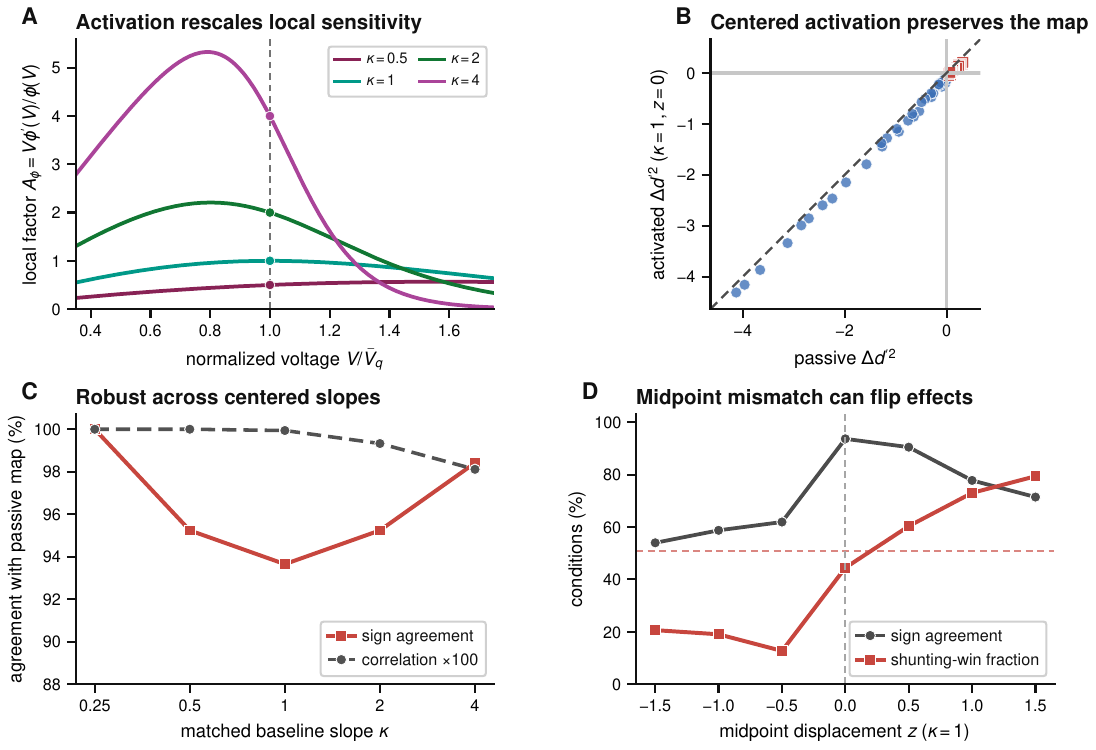}
\caption{\textbf{Terminal activation preserves the centered theory regime but
can expose operating-point mismatch.}
We apply
$\phi_q(V)=\{1+\tanh[\kappa(V/\bar V_q-1)-z]\}/2$ to the additive and
shunting voltages in the exact single-branch gain--load experiment underlying
Fig.~\ref{fig:ei-circuit}D. Here $\bar V_q$ is each rule's own
passive baseline voltage, so the comparison is not confounded by their
different voltage scales. Each of the 63 gain--load cells uses eight paired
seeds and 120,000 trials per seed.
\textbf{(A)}~The exact local multiplier is
$A_\phi=\kappa(V/\bar V_q)\{1-\tanh[\kappa(V/\bar V_q-1)-z]\}$; at the matched
midpoint $z=0$ and baseline $V=\bar V_q$, both rules receive the same factor
$A_\phi=\kappa$.
\textbf{(B)}~For $\kappa=1,z=0$, activated versus passive
shunting-minus-additive $d'^2$ has $r=0.999$ and 94\% sign agreement (circles:
passive additive-favored; squares: passive shunting-favored).
\textbf{(C)}~Across centered slopes $\kappa=0.25$--4, correlations are
0.981--1.000 and sign agreement is 94--100\%.
\textbf{(D)}~Displacing the midpoint changes the fraction of shunting-favored
cells and can change which mechanism has larger $d'^2$, demonstrating that activation robustness is
conditional on the operating regime. A strictly increasing terminal
activation nevertheless preserves sample ordering, ROC AUC, and optimal
scalar-threshold error exactly. These panels test terminal activation;
internal activation instead changes pathwise Jacobians as derived in
Methods.}
\label{fig:activation-sensitivity-supp}
\end{figure}

\begin{figure}[htbp]
\centering
\includegraphics[width=0.99\textwidth]{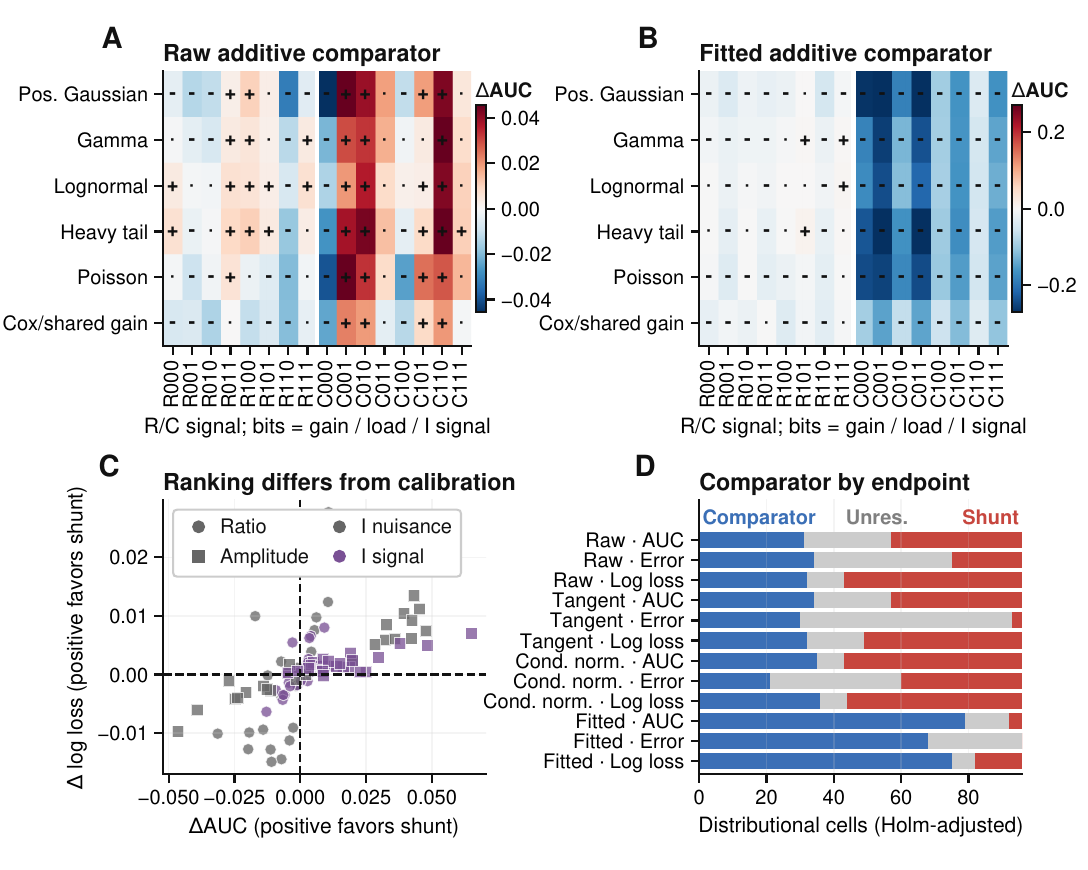}
\caption{\textbf{The finite-amplitude ordering depends on the input distribution
and comparator.} Six positive-data families (rectified Gaussian-like, gamma,
lognormal, heavy-tailed mixture, Poisson, and Cox/shared-gain) and two signal
structures (fixed-total ratio and fixed-ratio common amplitude) are crossed with
three binary nuisance factors (shared gain, independent I-stream background,
and an I-stream class signal), giving 96 conditions with 12 paired
seeds per condition. All fitting and calibration use training data, and all
panels report held-out performance. \textbf{(A--B)}~Shunting-minus-additive AUC
for raw $E-I$ and fitted sign-constrained $E-\rho I-b$ comparators; for held-out
AUC, shunting is below the fitted comparator in most conditions.
\textbf{(C)}~Ranking and probability
calibration can reverse independently. \textbf{(D)}~Holm-adjusted outcome
counts across four additive comparators and three endpoints. The additive
threshold mapped from the selected shunt agrees exactly with its binary
predictions. Distribution, nuisance structure, comparator, and endpoint jointly
determine the observed sign.}
\label{fig:nongaussian-supp}
\end{figure}

\begin{figure}[htbp]
\centering
\includegraphics[width=0.99\textwidth]{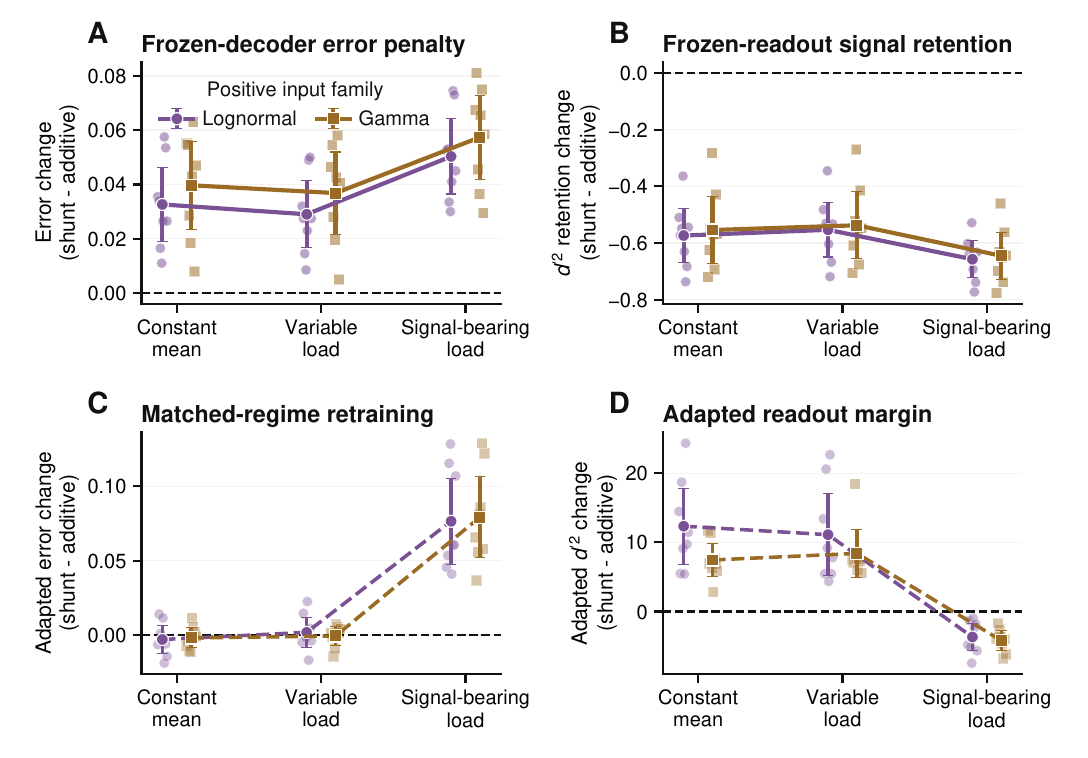}
\caption{\textbf{Adverse class-correlated denominator conductance has the largest
mean cost.} A passive $[8]$ DendriNet with post-voltage activation disabled receives an
external distal conductance that affects only the shunting denominator;
additive outputs and distal numerators are unchanged. Lognormal and gamma input
families each use eight paired seeds, and all nonzero regimes have mean
conductance 3. This controlled intervention uses the released DendriNet forward
model with a bespoke trainer and is not a standard-path fit. \textbf{(A--B)}~Without refitting, constant and independently
varying conductance have mean balanced-error increases of $3.62$ and $3.29$
percentage points, whereas the mean increase under adverse class-correlated conductance is
$5.38$ percentage points and is accompanied by the largest loss of signal.
\textbf{(C--D)}~Regime-specific retraining
largely recovers performance under constant or independent conductance but not
under adverse class-correlated conductance. The descriptive mean
shunting-minus-additive error difference is $+7.79$ percentage points, and the
margin-$\dprime^2$ difference is negative.
The two input families agree in direction; none of 60 sign-flip tests survives
Holm correction (minimum adjusted $p=0.469$).}
\label{fig:denominator-only-load-supp}
\end{figure}

\clearpage
\subsection{Tree structure and population access}
\label{supp:tree-population}

These analyses ask how local divisors, tree depth, and access to gain sensors
alter constrained population readout.

\begin{figure}[htbp]
\centering
\includegraphics[width=0.99\textwidth]{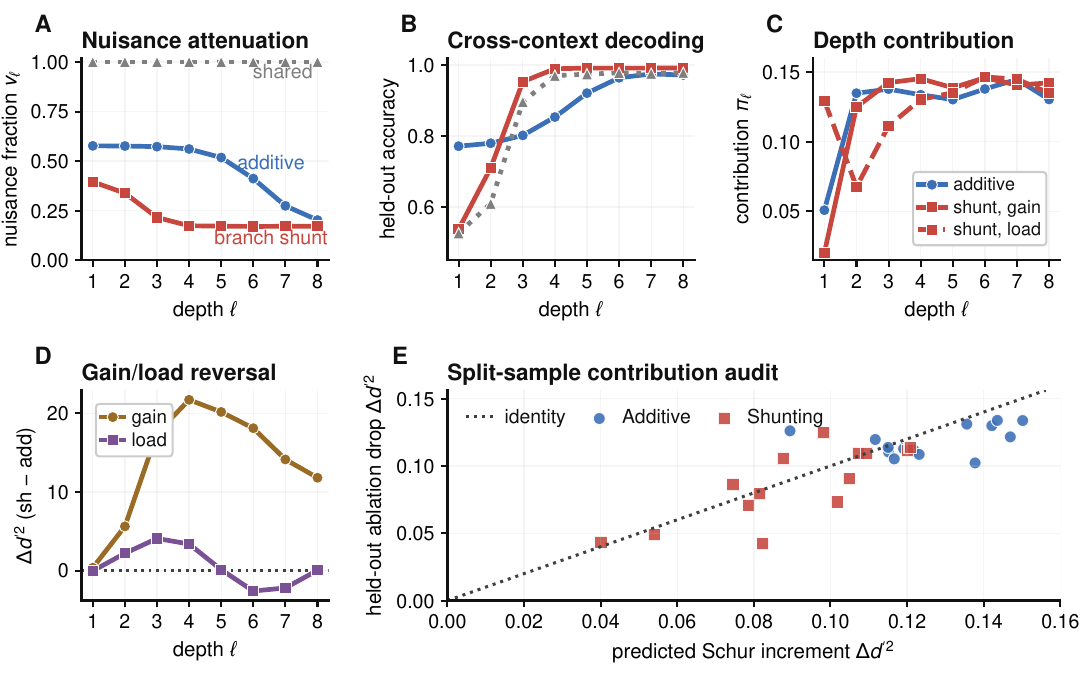}
\caption{\textbf{Local divisors can attenuate shared gain across depth in a
standalone controlled simulation.}
\textbf{(A--B)}~In a controlled gain-dominated regime, repeated
branch-local divisors attenuate the dominant nuisance mode and improve
cross-context decoding with depth; additive and shared-divisor controls do not
show the same hierarchical attenuation. \textbf{(C)}~Ordered
distal-to-proximal Schur contributions show how gain and load change the useful
increment of each stage. \textbf{(D)}~Depth is not intrinsically helpful:
gain-dominated inputs produce a positive shunting-minus-additive gap over
intermediate depths, whereas denominator-load-dominated inputs cross below
zero when load and fidelity costs outweigh further gain suppression.
\textbf{(E)}~Schur increments estimated on one sample predict ablation drops on
a disjoint sample. The additive and shunting correlations are $r=0.368$ and
$r=0.787$, respectively ($r=0.809$ when pooled). As in
Lemma~\ref{lem:decorrelation}, the interpretation requires sufficiently local
and complete divisors.}
\label{fig:tree-depth}
\end{figure}

\begin{figure}[htbp]
\centering
\includegraphics[width=0.98\textwidth]{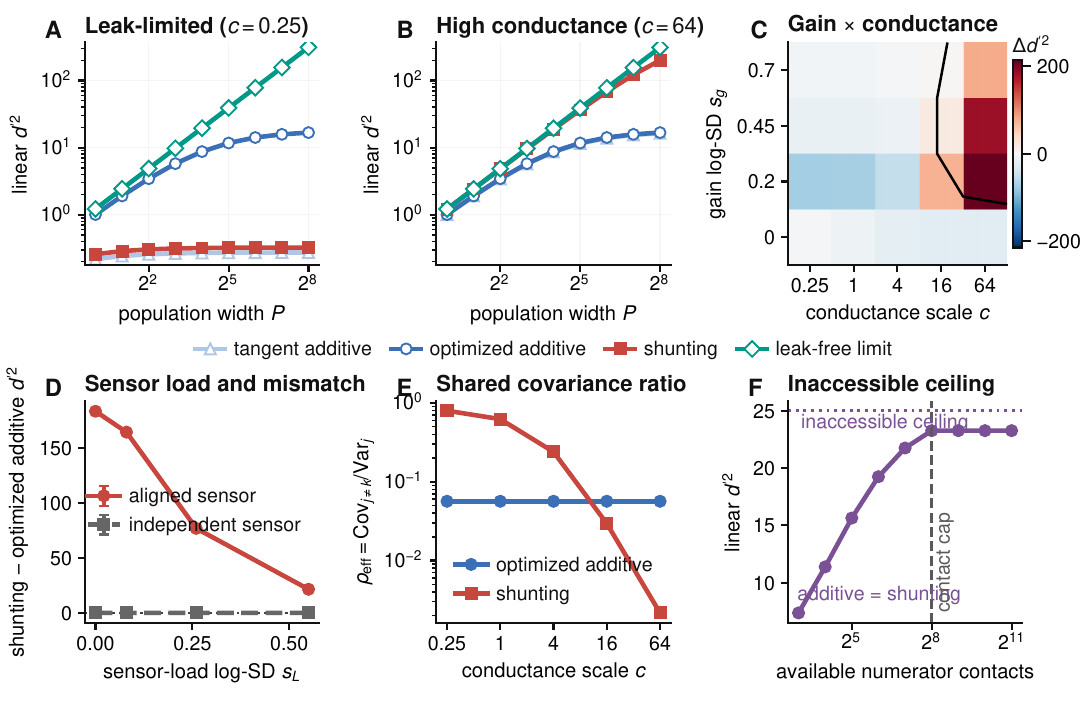}
\caption{\textbf{Population scaling improves when a high-conductance shunting
denominator tracks shared gain.} This code-faithful forward-equation analysis
does not train DendriNet. \textbf{(A--B)}~The displayed readouts are the
implemented shunt, tangent-additive and optimized positive-E/I additive
comparators, and the leak-free shunting limit. With aligned gain sensing at
gain log-SD $0.45$, the shunting-minus-optimized-positive-E/I gap at $P=256$
changes from $-16.40$ at conductance scale $c=0.25$ to $+183.20$ at $c=64$.
Across $P=1$--256, high-conductance $d'^2$ grows $164.97\times$ for shunting
and $16.62\times$ for the additive comparator. \textbf{(C)}~The full surface
shows that both nonzero shared gain and sufficient conductance are required.
\textbf{(D)}~Private sensor load reduces the aligned gap from $183.20$ to
$21.52$, and an independent gain sensor nearly eliminates it. \textbf{(E)}~As
conductance rises, the effective covariance ratio falls from $0.796$ to
$0.00216$ for shunting but remains $0.0565$ for additive, a 26.1-fold difference
at $c=64$. \textbf{(F)}~A separate identical-statistic control with inaccessible
differential noise saturates at $23.26$, near the exact ceiling $1/\nu=25$.
The endpoint is linear $d'^2$, and increasing $P$ adds an E/I observation pair
rather than preserving a fixed synaptic budget.}
\label{fig:identified-ceiling-supp}
\end{figure}

\clearpage
\subsection{Resource-matched morphology and locality}
\label{supp:morphology-locality}

These figures test whether the aligned morphology result survives changes in
operating point, resource redistribution, and inhibitory support.

\begin{figure}[htbp]
\centering
\includegraphics[width=0.99\textwidth]{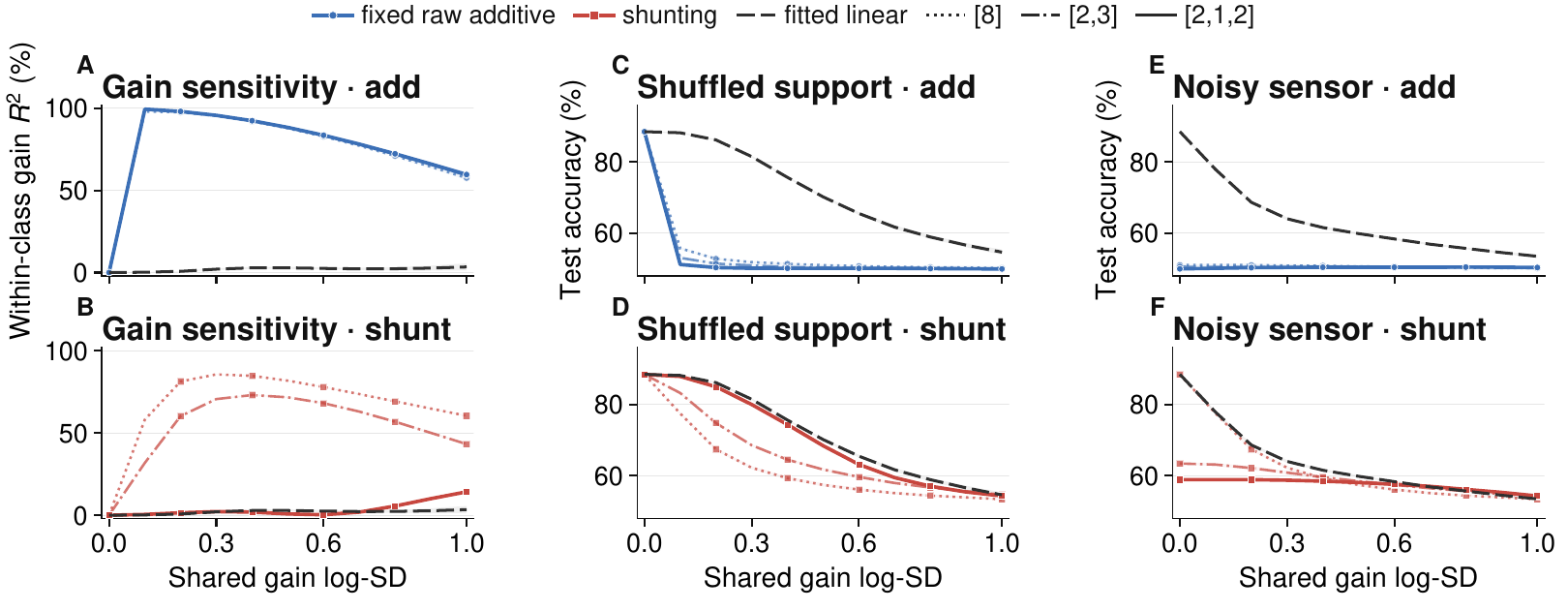}
\caption{\textbf{Aligned deep shunting reduces gain sensitivity, but the effect
depends on sensor support and fidelity.} \textbf{(A--B)}~Within-class gain
$R^2$ for additive and shunting models under aligned support. Fixed-raw
additive trees remain gain-sensitive, whereas deep shunting suppresses the
constructed nuisance mode. \textbf{(C--D)}~Shuffling sensor support removes the
aligned morphology advantage. \textbf{(E--F)}~Corrupting the gain sensor
reduces the shunting increment and changes its ordering relative to the fitted
linear reference.}
\label{fig:morphology-noise-diagnostics-supp}
\end{figure}

\begin{figure}[htbp]
\centering
\includegraphics[width=0.99\textwidth]{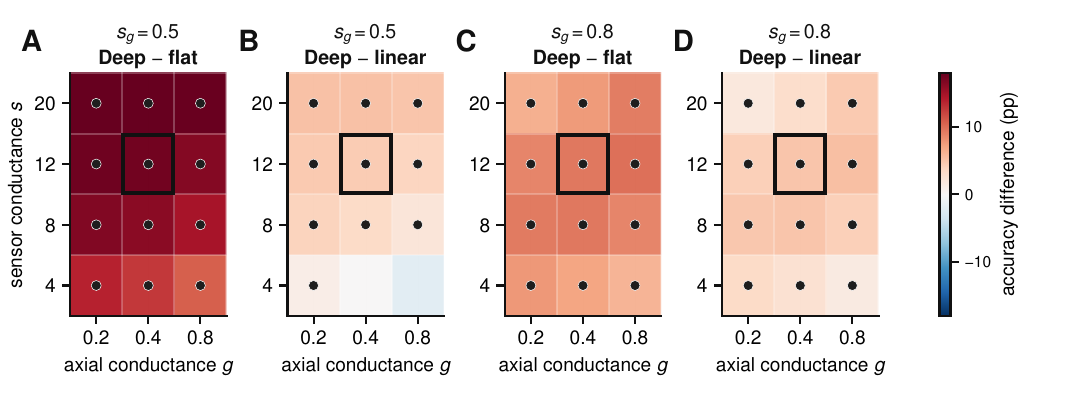}
\caption{\textbf{The aligned hierarchy persists across the tested passive
operating points but is not uniformly superior to additive readout.} \textbf{(A,C)}~Deep
minus flat shunting accuracy at gain log-SD $0.5$ and $0.8$.
\textbf{(B,D)}~Deep shunting relative to the topology-independent fitted linear
comparator. Rows vary sensor conductance and columns vary axial conductance; the
black outline marks the nominal point, and dots mark paired-seed 95\% $t$
intervals above zero. Across the 24 operating points, deep shunting exceeds
flat shunting in every cell and the fitted linear comparator in 23 cell means
(22 intervals exclude zero). This analysis tests operating-point sensitivity
of the designed aligned construction.}
\label{fig:exact-inventory-operating-sensitivity-supp}
\end{figure}

\begin{figure}[htbp]
\centering
\includegraphics[width=0.99\textwidth]{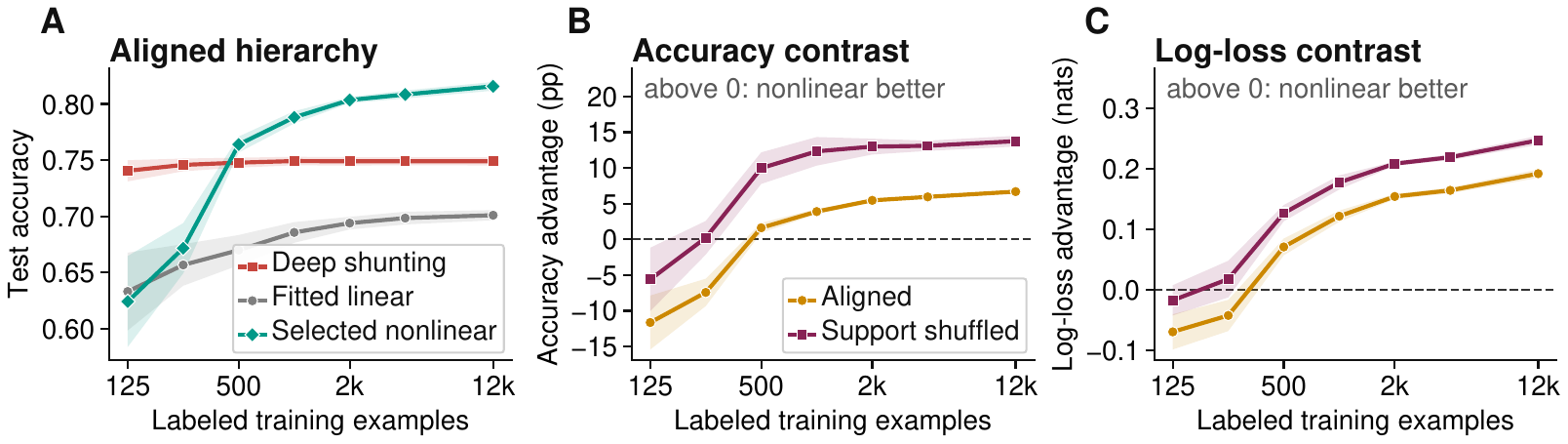}
\caption{\textbf{Passive capacity controls bound the designed hierarchy.}
\textbf{(A)}~In the aligned construction, the
structure-matched shunting readout has a small-sample accuracy advantage over
the tested validation-selected nonlinear predictors, which overtake it with
more labeled examples. \textbf{(B)}~The nonlinear-minus-shunting accuracy
contrast crosses zero under aligned support, whereas oracle support shuffling
restricts that advantage to the smallest tested sample. \textbf{(C)}~The
aligned log-loss contrast crosses in the same sample-size bracket, whereas
shuffling removes the low-sample log-loss advantage; positive values favor the
nonlinear predictor. Curves and intervals are means and paired 95\% $t$ intervals over
eight held-out computational seeds. All dendritic arms use passive branch
voltages with no post-voltage activation; the nonlinear predictors operate on
the raw observations and are capacity controls, not activated dendritic trees.}
\label{fig:capacity-reduction-supp}
\end{figure}

\begin{figure}[htbp]
\centering
\includegraphics[width=0.99\textwidth]{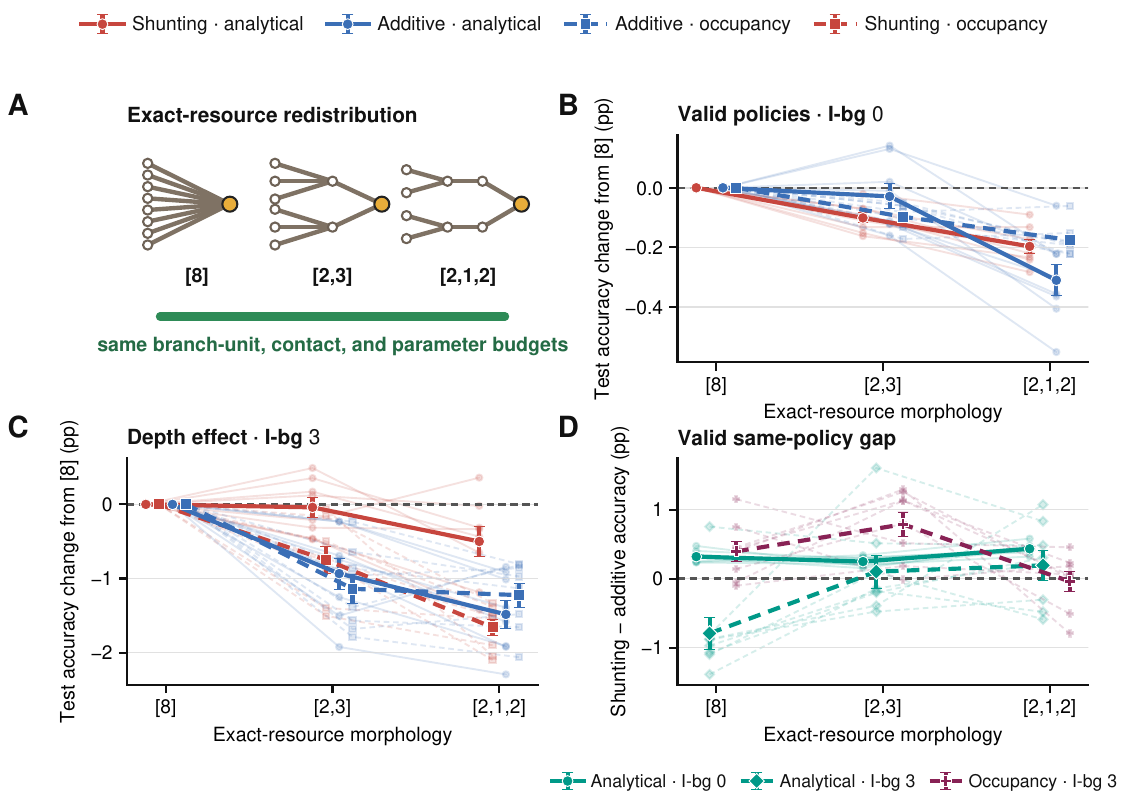}
\caption{\textbf{Redistributing a fixed branch budget does not produce a
generic depth benefit after standard-path DendriNet training.} All branches use
shifted-tanh post-voltage activation. \textbf{(A)}~The $[8]$, $[2,3]$, and
$[2,1,2]$ trees each contain eight branch units, 14,336 active contacts, 67,330
trainable parameters, and the same decoder; only path structure and terminal
allocation differ. \textbf{(B--C)}~Relative to $[8]$, all 14 valid
non-reference mean held-out test-accuracy contrasts are negative (eight paired
seeds). Among policy conditions realized for both mechanisms, all 12 held-out
test-accuracy contrasts are negative. Runs failing the prespecified gate-fit
validity check are excluded from cross-mechanism comparisons and audited in the
source data. An all-runs sensitivity retaining the 24 warning runs leaves all
16 non-reference cell means negative.
\textbf{(D)}~Within valid
same-policy cells, the shunting-minus-additive gap varies non-monotonically with
morphology and background.}
\label{fig:resource-matched-morph-supp}
\end{figure}

\begin{figure}[htbp]
\centering
\includegraphics[width=0.99\textwidth]{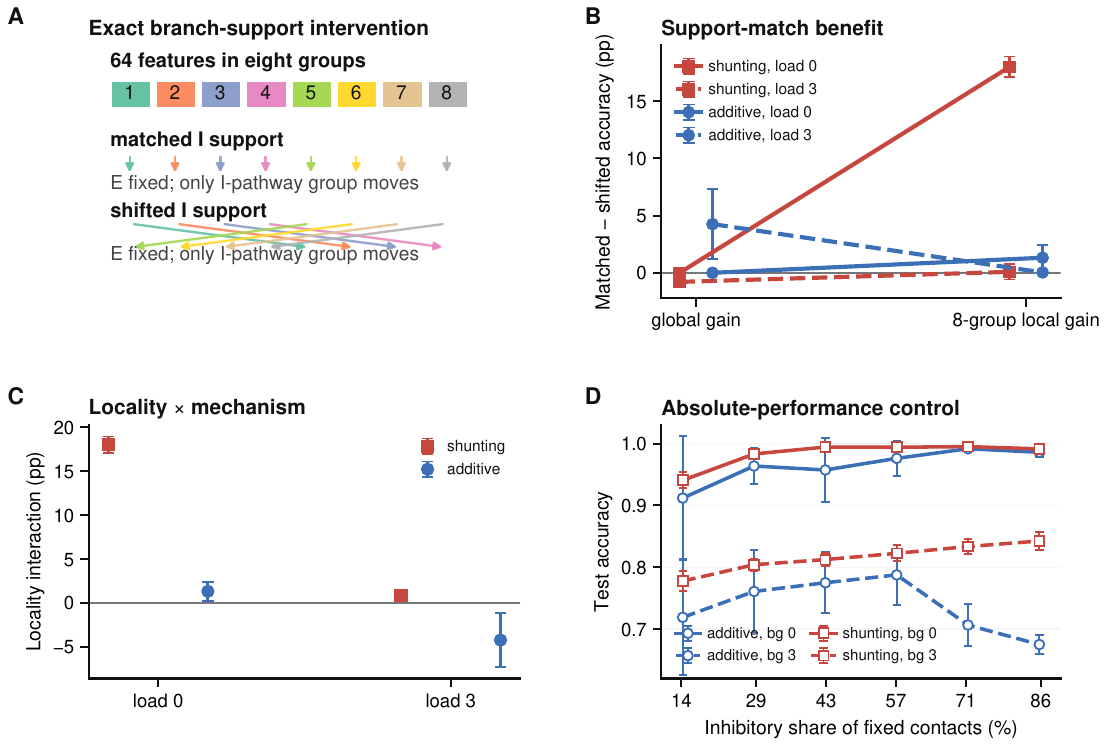}
\caption{\textbf{Standard-path trained, fixed-resource controls isolate inhibitory support and contact allocation.}
\textbf{(A)}~Excitatory support is fixed while eight inhibitory contacts are
shifted between matched and mismatched locations. \textbf{(B)}~Matching support
strongly improves shunting under eight local gain factors and no I-stream
background. \textbf{(C)}~At background 0, the mechanism difference in this
locality interaction is $+16.70$ percentage points; adding background changes it by
$-11.59$ percentage points. Additive has the higher mean in seven of eight
absolute-performance cells; the remaining shunting edge is $+0.09$ percentage
points with a 95\% interval spanning zero. Morphology, decoder width, parameters,
initialization, and active-contact counts are held fixed; post-voltage activation is
disabled. \textbf{(D)}~Absolute test accuracy for the separate fixed-contact
allocation control. The inhibitory share changes across six allocations while
$N_E+N_I=28$ contacts per branch, morphology, population width, and parameter
count remain fixed; solid and dashed lines denote I-stream background means 0
and 3. The value 28 is a computational budget, not a biological E/I-cell ratio.
Curves show means and 95\% $t$ intervals over eight matched seeds.}
\label{fig:locality-ei-fixed-supp}
\end{figure}

\clearpage
\subsection{Learned-network comparator matching and optimization}
\label{supp:learned-fairness}

These controls ask whether the trained comparison survives shared optimization
policies and stronger additive or normalization baselines.

\begin{figure}[htbp]
\centering
\includegraphics[width=0.99\textwidth]{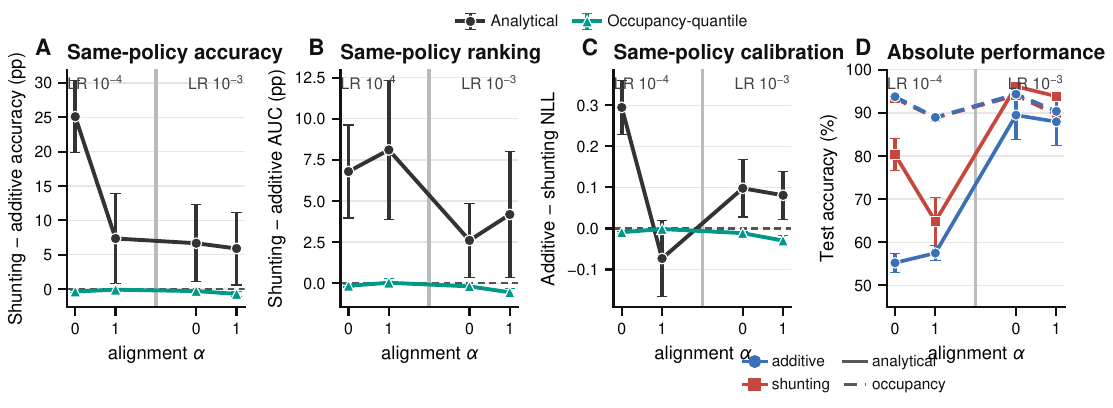}
\caption{\textbf{Supplementary setup-sensitivity control for the standard-path
trained DendriNet contextual comparison.} All models use shifted-tanh
activation after each branch voltage. \textbf{(A--C)}~Same-policy accuracy, AUC, and calibration
contrasts compare shunting with additive networks under analytical or
occupancy-quantile initialization; positive values favor shunting.
\textbf{(D)}~Absolute accuracy reveals undertraining at low learning rates.
Points show means $\pm$ paired-seed SEM ($n=8$). Mechanism comparisons use the
same named policy, avoiding cross-policy confounds; mechanism-specific equations
within each policy remain. Pooled validation selects shared occupancy at
$10^{-3}$, whose held-out shunting-minus-additive gap is $-0.50$ percentage
points.}
\label{fig:contextual-factorial-supp}
\end{figure}

\begin{figure}[htbp]
\centering
\includegraphics[width=0.99\textwidth]{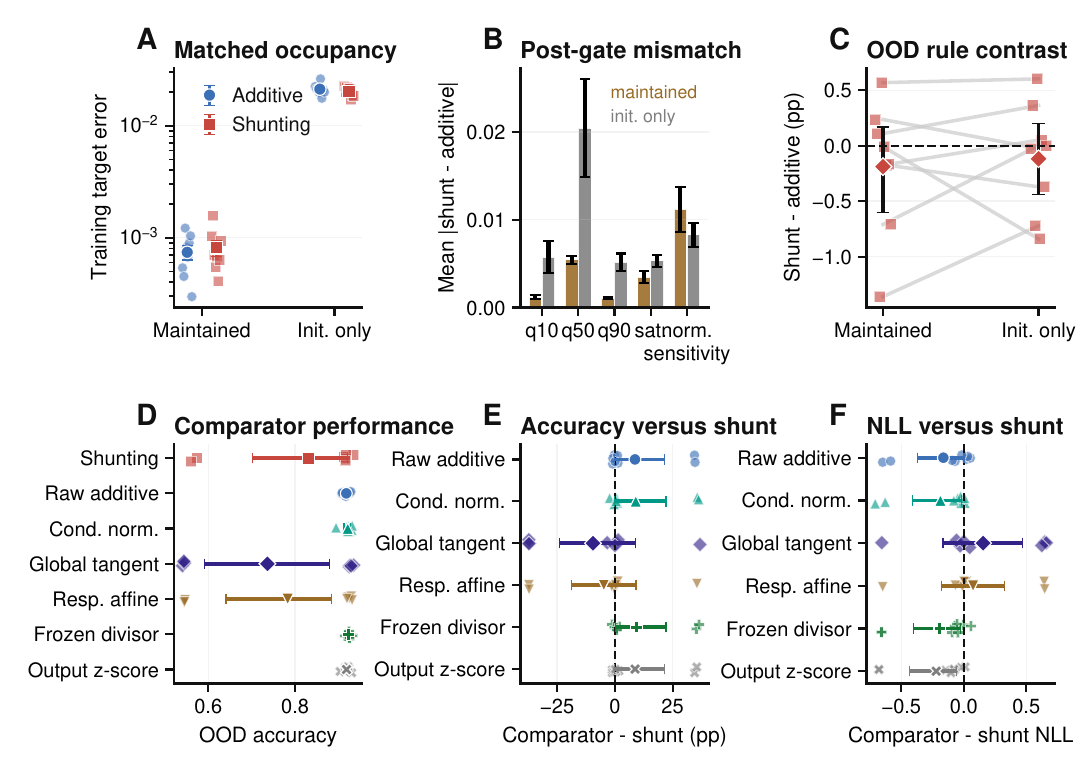}
\caption{\textbf{Matched Top-$K$ operating-point and comparator controls show no
robust rule advantage in standard-path trained DendriNet models.} Paired runs share initial parameters, Top-$K$ masks,
and $N_E/N_I=16/12$ contacts per branch. \textbf{(A--B)}~A label-free,
training-only adjustment maps pre-gate quantiles to common activation targets;
small post-gate differences in saturation and sensitivity remain.
\textbf{(C)}~Neither the adjusted nor initialization-only policy yields a
robust held-out shunting-minus-additive accuracy gap across eight paired seeds.
\textbf{(D--F)}~A seven-arm passive $[2,2]$ comparison holds the population,
decoder, data, optimizer, initial parameters, and masks fixed while varying the
integration rule or comparator. None of the six prespecified accuracy or log-loss
contrasts survives Holm correction ($p_{\rm Holm}\ge0.394$).}
\label{fig:topk-comparator-ladder-supp}
\end{figure}

\clearpage
\subsection{Additional V1 analyses}
\label{supp:v1-analyses}

These analyses test comparator and scale sensitivity, running-related
multiplicative variability, and topology-dependent noise robustness in V1.

\begin{figure}[htbp]
\centering
\includegraphics[width=0.99\textwidth]{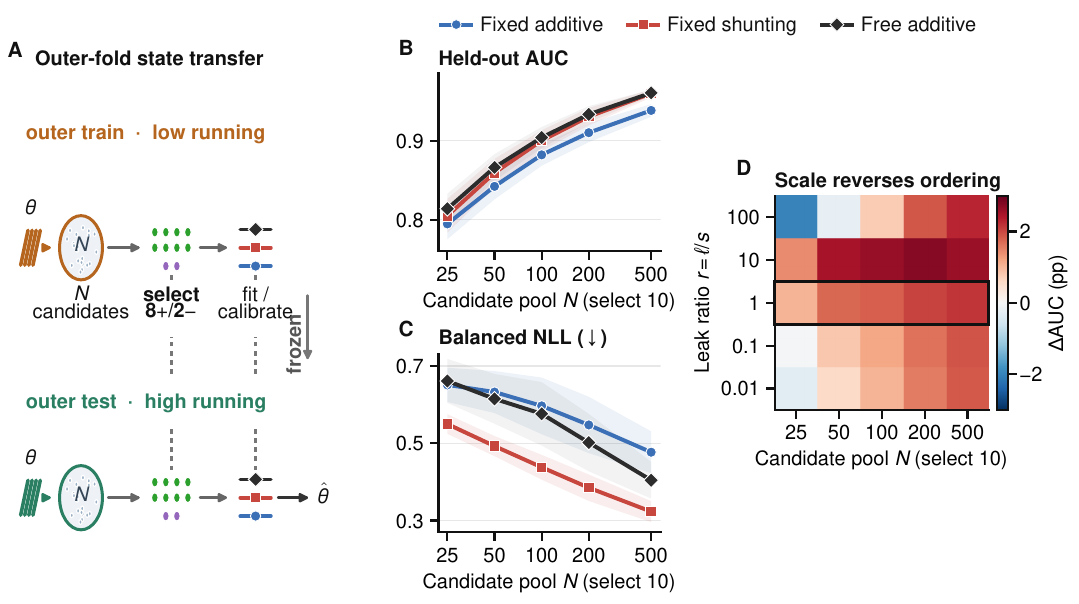}
\caption{\textbf{Fixed-template V1 transfer depends on comparator and response
scale.} \textbf{(A)}~Low-running outer-training trials select eight
positive-evidence and two negative-evidence responses and fit fixed $E-I$,
fixed $E/(1+E+I)$, and fitted $E-\rho I-b$ readouts;
held-out high-running trials test the fixed models. Increasing $N$ changes the
candidate pool rather than the ten-response readout budget.
\textbf{(B--C)}~Held-out AUC and balanced log loss produce different comparator
orderings (session mean $\pm$ SEM; $n=3$). \textbf{(D)}~The
shunting-minus-fixed-additive AUC contrast changes sign with the prespecified
leak/input-scale ratio and candidate access. This fixed-feature transfer test
does not identify biophysical shunting.}
\label{fig:v1-fixed-template-supp}
\end{figure}

\begin{figure}[htbp]
\centering
\includegraphics[width=0.99\textwidth]{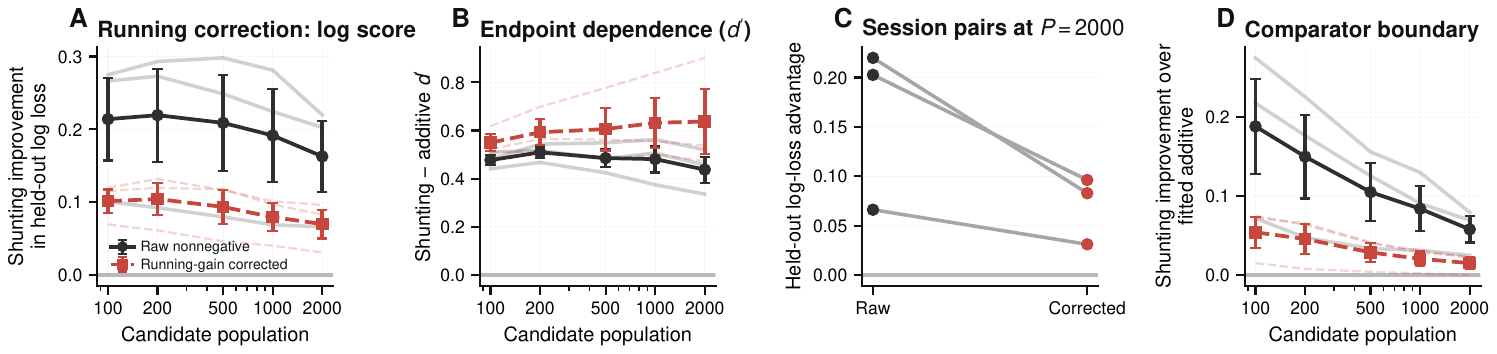}
\caption{\textbf{A training-estimated multiplicative correction removes much of
the running-linked V1 difference.} All response selection, nuisance
estimation, score orientation, thresholding, comparator fitting, and
calibration use outer-training data; high-running trials are held out.
Thin lines show recordings, and thick points show the unweighted recording mean
$\pm$ SEM ($n=3$). \textbf{(A)}~The shunting reduction in held-out log loss
relative to fixed additive falls
by 53--59\% after correction across five candidate-pool sizes, with the same
direction in all recordings. \textbf{(B)}~The shunting-minus-additive $d'$ moves in the
opposite direction, revealing an endpoint difference between standardized
separation and calibrated log loss. \textbf{(C)}~At candidate access 2000, the log-loss advantage
shrinks in all three recordings. \textbf{(D)}~The log-loss difference from a
training-fitted $E-\rho I-b$ comparator also approaches zero after correction.
This fixed-feature analysis tests sensitivity to running-linked multiplicative
variability rather than identifying the underlying circuit.}
\label{fig:v1-running-recovery-supp}
\end{figure}

\begin{figure}[htbp]
\centering
\includegraphics[width=0.98\textwidth]{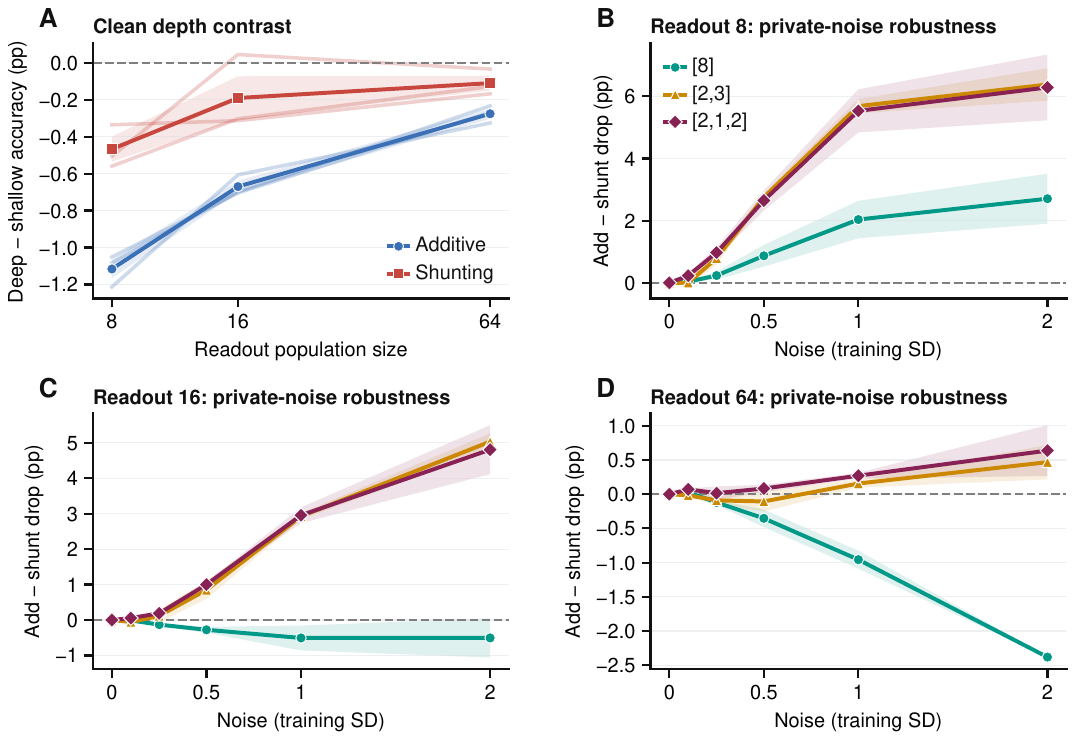}
\caption{\textbf{Topology changes relative robustness to private noise despite
a clean-accuracy cost.} These panels use standard-path
trained DendriNet fits with shifted-tanh activation after every branch voltage. The
$[8]$, $[2,3]$, and
$[2,1,2]$ trees each contain eight branch
units and match active contacts, candidate slots, and trainable parameters.
Bands show mean $\pm$ SEM across three recordings after averaging four seeds
within each recording. \textbf{(A)}~Deep-minus-shallow clean accuracy is
negative at every readout width for both mechanisms. \textbf{(B--D)}~Each curve
shows the additive-minus-shunting difference in accuracy degradation, where
degradation is clean minus noisy accuracy; positive values indicate less
degradation for shunting. At $P_E=64$ and two training
SDs, the shallow topology favors additive by $2.37$ percentage points, whereas the
intermediate and deep topologies favor shunting by $0.39$ and $0.54$ percentage points.
Topology can therefore alter robustness without improving clean
accuracy. The primary categorical-log-loss endpoint at the same width and
noise level favors
additive for every topology, so the accuracy crossover is endpoint-specific.}
\label{fig:v1-exact-resource-depth-noise-supp}
\end{figure}

\end{document}